\newtheorem{theorem}{Theorem}
\newtheorem{proposition}[theorem]{Proposition}
\newtheorem{corollary}[theorem]{Corollary}
\newtheorem{remark}[theorem]{Remark}
\newcommand{\A}{\mathsf{A}}
\newcommand{\B}{\mathsf{B}}
\newcommand{\X}{\mathsf{X}}
\newcommand{\Y}{\mathsf{Y}}
\newcommand{\E}{\mathsf{E}}
\newcommand{\F}{\mathsf{F}}
\newcommand{\M}{\mathsf{M}}
\newcommand{\N}{\mathsf{N}}
\newcommand{\CC}{\mathsf{C}}
\newcommand{\AB}{\mathsf{AB}}
\newcommand{\PP}{\mathsf{P}}
\newcommand{\AN}{\mathsf{AN}}
\newcommand{\NB}{\mathsf{NB}}
\newcommand{\ANB}{\mathsf{ANB}}
\newcommand{\Finf}{{\mathsf{F}_\infty}}
\newcommand{\Fnull}{{\mathsf{F}_0}}
\newcommand{\Frho}{{\mathsf{F}_\rho}}
\newcommand{\PAo}{{\mathsf{P}_{\A,1}}}
\newcommand{\PAt}{{\mathsf{P}_{\A,2}}}
\newcommand{\PBo}{{\mathsf{P}_{\B,1}}}
\newcommand{\PBt}{{\mathsf{P}_{\B,2}}}
\newcommand{\mA}{m^{\A}}
\newcommand{\mB}{m^{\B}}
\newcommand{\mFnull}{m^{\Fnull}}
\newcommand{\mMo}{m^{\M_1}}
\newcommand{\mMf}{m^{\M_4}}
\newcommand{\mast}{m^{\ast}}
\newcommand{\mtilde}{{\tilde m}}
\newcommand{\mmax}{m_{\max}}
\newcommand{\meff}{m_{\mathsf{eff}}}
\newcommand{\phA}{\phi^{\A}}
\newcommand{\phB}{\phi^{\B}}
\newcommand{\phAB}{\phi^{\AB}}
\newcommand{\phABt}{\tilde\phi^{\AB}}
\newcommand{\phMo}{\phi^{\M_1}}
\newcommand{\phMot}{\tilde\phi^{\M_1}}
\newcommand{\phMf}{\phi^{\M_4}}
\newcommand{\phMft}{\tilde\phi^{\M_4}}
\newcommand{\phFnull}{\phi^{\Fnull}}
\newcommand{\phAFnull}{\phi^{\A \Fnull}}
\newcommand{\phBFnull}{\phi^{\B \Fnull}}
\newcommand{\phinv}{\phi_{\rm inv}}
\newcommand{\rhMo}{\rho^{\M_1}}
\newcommand{\rhMot}{\tilde{\rho}^{\M_1}}
\newcommand{\rhMf}{\rho^{\M_4}}
\newcommand{\rhMft}{\tilde{\rho}^{\M_4}}
\newcommand{\Fst}{F_{\rm ST}}
\newcommand{\Fstij}{F_{{\rm ST},ij}}
\newcommand{\Fstii}{F_{{\rm ST},ii}}
\newcommand{\tht}{\tilde{\theta}}
\def\a{\alpha}
\newcommand\be{\beta}
\def\th{\theta}
\newcommand\et{\eta}
\newcommand\ga{\gamma}
\newcommand\ka{\kappa}
\newcommand\la{\lambda}
\newcommand\ph{\phi}
\newcommand\rh{\rho}
\newcommand\si{\sigma}
\newcommand\ta{\tau}
\begin{document}
\normalbaselineskip16pt
\baselineskip16pt
\global\hoffset=-15truemm
\global\voffset=-10truemm
\numberwithin{equation}{section}
\numberwithin{theorem}{section}
\renewcommand{\topfraction}{.8}

\parindent15pt

\title{The consequences of gene flow for local adaptation and differentiation: A two-locus two-deme model}

\date{}

\maketitle

{\large
Ada Akerman and Reinhard B\"urger

\vspace{2cm}
Department of Mathematics, University of Vienna, Austria
}
\vspace{5cm}

{
\normalbaselineskip15pt
\baselineskip15pt{
{\obeylines 
{\bf Address for correspondence:}
\bigskip 
Ada Akerman
Institut f\"ur Mathematik 
Universit\"at Wien 
Nordbergstrasse 15
A-1090 Wien 
Austria
E-mail: ada.akerman@univie.ac.at
Phone: +43 1 4277 50784
Fax: +43 1 4277 9506}
}
}

\vspace{1.5cm}

\newpage
%\linenumbers
%\setpagewiselinenumbers
\section*{Abstract}

We consider a population subdivided into two demes connected by migration in which selection acts in opposite direction. We explore the effects of recombination and migration on the maintenance of multilocus polymorphism, on local adaptation, and on differentiation by employing a deterministic model with genic selection on two linked diallelic loci (i.e., no dominance or epistasis). For the following cases, we characterize explicitly the possible equilibrium configurations: weak, strong, highly asymmetric, and super-symmetric migration, no or weak recombination, and independent or strongly recombining loci. For independent loci (linkage equilibrium) and for completely linked loci, we derive the possible bifurcation patterns as functions of the total migration rate, assuming all other parameters are fixed but arbitrary. For these and other cases, we determine analytically the maximum migration rate below which a stable fully polymorphic equilibrium exists. In this case, differentiation and local adaptation 
are maintained. Their degree is quantified by a new multilocus version of $\Fst$ and by the migration load, respectively. In addition, we investigate the invasion conditions of locally beneficial mutants and show that linkage to a locus that is already in migration-selection balance facilitates invasion. Hence, loci of much smaller effect can invade than predicted by one-locus theory if linkage is sufficiently tight. We study how this minimum amount of linkage admitting invasion depends on the migration pattern. This suggests the emergence of clusters of locally beneficial mutations, which may form `genomic islands of divergence'. Finally, the influence of linkage and two-way migration on the effective migration rate at a linked neutral locus is explored. Numerical work complements our analytical results. 

\vskip2.5cm
{\bf Key words:} Selection, Migration, Recombination, Population subdivision, Genetic architecture, Multilocus polymorphism, Fixation index

%MSC2010 database: 92D10, 92D15, 37N25

\newpage
\setcounter{section}{0}

\section{Introduction}
Migration in a geographically structured population may have opposing effects on the genetic composition of that population and, hence, on its evolutionary potential. On the one hand, gene flow caused by migration may be so strong that it not only limits but hinders local adaptation by swamping the whole population with a genotype that has high fitness in only one or a few demes. On the other hand, if migration is sufficiently weak, gene flow may replenish local populations with genetic variation and contribute to future adaptation. In this case, locally adapted genotypes may coexist in the population and maintain high levels of genetic variation as well as differentiation between subpopulations. For reviews of the corresponding, well developed one-locus theory, see Karlin (1982), Lenormand (2002), and Nagylaki and Lou (2008). 

If selection acts on more than one locus, additional questions arise immediately. For instance, what are the consequences of the genetic architecture, such as linkage between loci, relative magnitude of locus effects or epistasis, on the degree of local adaptation and of differentiation achieved for a given amount of gene flow?  What are the consequences for genetic variation at linked neutral sites? What genetic architectures can be expected to evolve under various forms of spatially heterogeneous selection?

For selection acting on multiple loci, the available theory is much less well developed than for a single locus. One of the main reasons is that the interaction of migration and selection, even if the latter is nonepistatic, leads to linkage disequilibrium (LD) between loci (Li and Nei 1974, Christiansen and Feldman 1975, Slatkin 1975, Barton 1983). LD causes substantial, often insurmountable, complications in the analysis of multilocus models. Therefore, many multilocus studies are primarily numerical and focus on quite specific situations or problems. For instance, Spichtig and Kawecki (2004) investigated numerically the influence of the number of loci and of epistasis on the degree of polymorphism if selection acts antagonistically in two demes. Yeaman and Whitlock (2011) showed that concentrated genetic architecture, i.e., clusters of linked, locally beneficial alleles, evolve if stabilizing selection acts on a trait such that the fitness optima in two demes differ. 

Linkage disequilibrium is also essential for the evolution of recombination. The evolution of recombination in heterogeneous environments has been studied by a number of authors (e.g., Charlesworth and Charlesworth 1979, Pylkov et al.\ 1998, Lenormand and Otto 2000), and the results depend strongly on the kind of variability of selection across environments, the magnitude of migration, and the sign and strength of epistasis. 

Recent years have seen some advances in developing general theory for multilocus migration-selection models. The focus of this work was on the properties of the evolutionary dynamics and the conditions for the maintenance of multilocus polymorphism in limiting or special cases, such as weak or strong migration (B\"urger, 2009a,b), or in the Levene model (Nagylaki 2009; B\"urger 2009c, 2010; Barton 2010; Chasnov 2012). This progress was facilitated by the fact that in each case, LD is weak or absent. 

Using a continent-island-model framework, B\"urger and Akerman (2011) and Bank et al.\ (2012) analyzed the effects of gene flow on local adaptation, differentiation, the emergence of Dobzhansky-Muller incompatibilities, and the maintenance of polymorphism at two linked diallelic loci. They obtained analytical characterizations of the possible equilibrium configurations and bifurcation patterns for wide ranges of parameter combinations. In these models, typically high LD is maintained. In particular, explicit formulas were derived for the maximum migration rate below which a fully polymorphic equilibrium can be maintained, as well as for the minimum migration rate above which the island is swamped by the continental haplotype.

Here, we explore the robustness of some of these results by admitting arbitrary (forward and backward) migration between two demes. This generalization leads to substantial mathematical complications, but also to new biological insight. Because our focus is on the consequences of gene flow for local adaptation and differentiation, we assume divergent selection among the demes, i.e., alleles $A_1$ and $B_1$ are favored in deme 1, and $A_2$ and $B_2$ are favored in deme 2. The loci may recombine at an arbitrary rate. By ignoring epistasis and dominance, we assume genic selection. Mutation and random drift are neglected. Because we assume evolution in continuous time, our model also describes selection on haploids.

The model is set up in Section 2. In Section 3, we derive the equilibrium and stability structure for several important special cases. These include weak, strong, highly asymmetric, and super-symmetric migration, no or weak recombination, independent or strongly recombining loci, and absence of genotype-environment interaction. In Section 4, we study the dependence of the equilibrium and stability patterns on the total migration rate while keeping the ratio of migration rates, the recombination rate, and the selection coefficients constant (but arbitrary). In particular, we derive the possible bifurcation patterns for the cases of independent loci (linkage equilibrium) and for completely linked loci. With the help of perturbation theory, we obtain the equilibrium and stability configurations for weak or strong migration, highly asymmetric migration, and weak or strong recombination. For these cases, we determine the maximum migration rate below which a stable, fully polymorphic equilibrium is maintained, and 
the minimum migration rate above which the population is monomorphic. Numerical work complements our analytical results.

The next four sections are devoted to applications of the theory developed in Sections 3 and 4. In Section 5 and 6, we use the migration load and a new, genuine multilocus, fixation index ($\Fst$), respectively, to quantify the dependence of local adaptation and of differentiation on various parameters, especially, on the migration and the recombination rate. In Section 7, we investigate the invasion conditions for a mutant of small effect ($A_1$) that is beneficial in one deme but disadvantageous in the other deme. We assume that the mutant is linked to a polymorphic locus which is in selection-migration balance. We show that linkage between the loci facilitates invasion. Therefore, in such a scenario, clusters of locally adapted alleles are expected to emerge (cf.\ Yeaman and Whitlock 2011, B\"urger and Akerman 2011). In Section 8, we study the strength of barriers to gene flow at neutral sites linked to the selected loci by deriving an explicit approximation for the effective migration rate at a linked 
neutral site. Our results are summarized and discussed in Section 9. Several purely technical proofs are relegated to the Appendix.

\section{The model}\label{sec:model}
We consider a sexually reproducing population of monoecious, diploid individuals that is subdivided into two demes connected by genotype-independent migration. 
Within each deme, there is random mating. We assume that two diallelic loci are under genic selection, 
i.e., there is no dominance or epistasis, and different alleles are favored in different demes. We assume soft selection,
i.e., population regulation occurs within each deme.
We ignore random genetic drift and mutation and employ a deterministic continuous-time model 
to describe evolution. A continuous-time model is obtained from the corresponding discrete-time model in the limit of weak  evolutionary forces (here, selection, recombination, and migration). 

We denote the rate at which individuals in deme 1 (deme 2) are replaced by immigrants from the other deme by $m_1 \geq 0$ ($m_2 \geq 0$). Then $m=m_1+m_2$ is the total migration rate. The recombination rate between the two loci is designated by $\rh \geq 0$.

Alleles at locus $\A$ are denoted by $A_1$ and $A_2$, at locus $\B$ by $B_1$ and $B_2$. We posit that
$A_1$ and $B_1$ are favored in deme 1, whereas $A_2$ and $B_2$ are favored in deme 2. In deme $k$ ($k=1,2$), we assign the Malthusian parameters $\tfrac12\a_k$ and $-\tfrac12\a_k$ to $A_1$ and $A_2$, and $\tfrac12\be_k$ and $-\tfrac12\be_k$ to $B_1$
and $B_2$. Because we assume absence of dominance and of epistasis, the resulting fitness matrix for the genotypes reads
\begin{equation}
  \label{eq:fitnessmatrix}
  \bordermatrix{%
    & B_1B_1 & B_1B_2 & B_2B_2 \cr
  A_1A_1 & \a_k+\be_k & \a_k & \a_k-\be_k \cr
  A_1A_2 & \be_k & 0 & -\be_k \cr
  A_2A_2 & -\a_k+\be_k & -\a_k & -\a_k-\be_k \cr
  }.
\end{equation}

By relabeling alleles, we can assume without loss of generality $\a_1>0>\a_2$ and $\be_1>0>\be_2$.
Hence, $A_1B_1$ and $A_2B_2$ may be called the locally adapted haplotypes in deme 1 and deme 2, respectively. By relabeling loci, we can assume $\be_1\ge\a_1$. We define
\begin{equation}\label{eq:theta}
   \th=\a_1\be_2-\a_2\be_1.
\end{equation}
By exchanging demes, i.e., by the transformation $\tilde\a_k=-\a_{k^\ast}$ and $\tilde\be_k=-\be_{k^\ast}$ (where $k^\ast$ denotes the deme $\ne k$), or by exchanging loci, i.e., by the transformation $\tilde{\a}_k=\be_k$ and $\tilde{\be}_k=\a_k$, we can further assume $\th\ge0$ without loss of generality, cf.\ Appendix \ref{sec:parameters}. 

The fitness matrix \eqref{eq:fitnessmatrix} is also obtained if the two loci contribute additively to a quantitative trait that is under linear directional selection in each deme (B\"urger 2009c). Then $\th=0$ if the genotypic values are deme independent, i.e., if there is no genotype-environment interaction on the trait level.

Because in the case $\th=0$ degenerate features can occur, it will be treated separately (Sections \ref{sec:theta=0} and \ref{sec:super_symm}). 
Therefore, unless stated otherwise, we always impose the following assumptions on our parameters:
\begin{subequations}\label{eq:parameter}
\begin{equation}\label{assume}
		\a_1>0>\a_2 \text{ and } \be_1>0>\be_2,
\end{equation}
and
\begin{equation}\label{be1>a1}
	\be_1>\a_1,
\end{equation}
and
\begin{equation}\label{th>0}
		\th>0.
\end{equation}
\end{subequations}

From \eqref{assume} and \eqref{th>0}, we infer
\begin{equation}\label{be2<a2}
	\be_2<\a_2 \Rightarrow \a_1<\be_1.
\end{equation}
Therefore, locus $\A$ is under weaker selection than locus $\B$ in both demes, i.e., $|\a_k|\leq |\be_k|$ for $k=1,2$,
if and only if $\be_2<\a_2$ holds. 

The population can be described by the gamete frequencies in each of the demes. 
We denote the frequencies of the four possible gametes $A_1B_1$, $A_1B_2$, $A_2B_1$, and $A_2B_2$ in deme $k$ by $x_{1,k}$, $x_{2,k}$, $x_{3,k}$, and $x_{4,k}$. Then the state space is $S_4 \times S_4$, where \newline
$S_4=\left\{(x_1,x_2,x_3,x_4):\: x_i \geq 0\: \text{and}\: \sum_{i=1}^4x_i=1\right\}$ is the simplex. 

The following differential equations for the evolution of gamete frequencies in deme $k$ can be derived straightforwardly:
\begin{equation}\label{dynamics_gametes}
	\dot x_{i,k} = \frac{d}{dt} x_{i,k}= x_{i,k}(w_{i,k}-\bar w_k) - \et_i \rh D_k + m_k(x_{i,k^\ast}-x_{i,k}).
\end{equation}
Here the marginal fitness $w_{i,k}$ of gamete $i$ and the mean fitness $\bar w_k$ in deme $k$ are calculated from 
\eqref{eq:fitnessmatrix}, $\et_1=\et_4=-\et_2=-\et_3=1$, and $D_{k}=x_{1,k}x_{4,k}-x_{2,k}x_{3,k}$ is the linkage-disequilibrium (LD) measure. We note that $D_k>0$ corresponds to an excess of the locally adapted haplotypes in deme $k$.
The equations \eqref{dynamics_gametes} also describe the dynamics of a haploid population if in deme $k$ we assign the fitnesses 
$\a_k$, $-\a_k$, $\be_k$, $-\be_k$ to the alleles $A_1$, $A_2$, $B_1$, $B_2$, respectively.

Instead of gamete frequencies it is often more convenient to work with allele frequencies and the LD measures $D_k$. 
We write $p_k=x_{1,k}+x_{2,k}$ and $q_k=x_{1,k}+x_{3,k}$ for the frequencies of $A_1$ and $B_1$ in deme $k$. Then the gamete frequencies $x_{i,k}$ are calculated from the $p_k$, $q_k$, and $D_k$ by
\begin{subequations}
\begin{alignat}{2}
		x_{1,k} &=p_k q_k+D_k, \quad 		&	x_{2,k} &=p_k(1-q_k)-D_k,\\
		x_{3,k} &=(1-p_k)q_k-D_k,\quad	&	x_{4,k}	&=(1-p_k)(1-q_k)+D_k\,.
\end{alignat}
\end{subequations}
The constraints $x_{i,k} \ge 0$ and $\sum_{i=1}^4 x_{i,k}=1$ for $i=1,2,3,4,$ and $k=1,2$ transform into 
$0\le p_k, q_k \le 1$ and $-\min \left\{p_k q_k,(1-p_k)(1-q_k)\right\}\le D_k \le \min\left\{p_k(1-q_k),(1-p_k)q_k\right\}$. It follows that $p_k$, $q_k$, and $D_k$ evolve according to
\begin{subequations}\label{eq:dynamics}
\begin{align}
    \dot{p}_k&=\a_k p_k(1-p_k)+\be_k D_k + m_k(p_{k^\ast}-p_k),\\
    \dot{q}_k&=\be_k q_k(1-q_k)+\a_k D_k + m_k(q_{k^\ast}-q_k),\\
    \dot{D}_k&=\left[\a_k(1-2p_k)+\be_k(1-2q_k)-\rh\right]D_k \nonumber\\
	     			 &\quad+ m_k \left[(D_{k^\ast}-D_k)+(p_{k^\ast}-p_k)(q_{k^\ast}-q_k)\right].
\end{align}
\end{subequations}
We emphasize that, because we are treating a continuous-time model, the parameters $\rh$, $m_k$, $\a_k$, and $\be_k$ are rates (of recombination, migration, growth), whence they can be arbitrarily large. Their magnitude is determined by the time scale. By rescaling time, for instance to units of $\rh$ or $m$, the number of independent parameters could be reduced by one without changing the equilibrium properties.

\section{Equilibria and their stability}
  \label{sec:equilibria}
  
We distinguish three types of equilibria: (i) monomorphic equilibria (ME), (ii) single-locus polymorphisms (SLPs), and (iii) full 
(two-locus) polymorphisms (FPs). The first two types are boundary equilibria, whereas FPs are internal equilibria (except when $\rh=0$). 
The stability properties of the ME and the coordinates and conditions for admissibility of the SLPs can be derived explicitly.
However, the stability conditions for the SLPs and the conditions for existence or stability of FPs could be derived only for a number of limiting cases. These include strong recombination, weak or no recombination, weak, strong, or highly asymmetric migration.

\subsection{Existence of boundary equilibria}\label{sec:boundary_equil}
The four ME, corresponding to fixation of one of the gametes, exist always. Their coordinates are as follows:
\begin{align*}
		&\M_1 \; (A_1B_1 \text{ fixed}): \quad \hat{p}_k=1,\; \hat{q}_k=1,\; \hat{D}_k=0 \;\text{ for $k=1,2$}, \\
		&\M_2 \; (A_1B_2 \text{ fixed}): \quad \hat{p}_k=1,\; \hat{q}_k=0,\; \hat{D}_k=0 \;\text{ for $k=1,2$}, \\
		&\M_3 \; (A_2B_1 \text{ fixed}): \quad \hat{p}_k=0,\; \hat{q}_k=1,\; \hat{D}_k=0 \;\text{ for $k=1,2$}, \\
		&\M_4 \; (A_2B_2 \text{ fixed}): \quad \hat{p}_k=0,\; \hat{q}_k=0,\; \hat{D}_k=0 \;\text{ for $k=1,2$}, 
\end{align*}
where a $\hat{\phantom{x}}$ signifies an equilibrium. There are up to four SLPs, one in each marginal one-locus system.
We denote the SLPs where $B_1$ or $B_2$ is fixed by $\PP_{\A,1}$ or $\PP_{\A,2}$, respectively, 
and the SLPs where $A_1$ or $A_2$ is fixed by $\PP_{\B,1}$ or $\PP_{\B,2}$. Their coordinates and the conditions for their admissibility can be calculated explicitly (Eyland 1971). We define
\begin{equation}\label{eq:si_ta_def}
	\si_k = \frac{m_k}{\a_k}  \quad\text{and}\quad  \ta_k = \frac{m_k}{\be_k}\,.
\end{equation}
By \eqref{assume}, we have
\begin{equation}
	\si_1>0>\si_2  \quad\text{and}\quad  \ta_1>0>\ta_2\,.
\end{equation}
In addition, it is easy to show that the assumptions \eqref{eq:parameter} imply:
\begin{subequations}\label{cond_sita}
\begin{align}
		&\si_1+\si_2\ge0 \; \Rightarrow \; \ta_1+\ta_2<\si_1+\si_2, \\
		&\si_1+\si_2<0  \; \Rightarrow \; \ta_1+\ta_2<0.	
\end{align}
\end{subequations}

If locus $\B$ is fixed (for $B_1$ or $B_2$), the equilibrium allele frequencies at locus $\A$ are
\begin{equation}\label{SLPA}
	\hat p_1^\A = \frac12\left(1-2\si_1 + \sqrt{1-4\si_1\si_2}\right)\,, \quad
	\hat p_2^\A = \frac12\left(1-2\si_2 - \sqrt{1-4\si_1\si_2}\right)\,.
\end{equation}
If locus $\A$ is fixed, the equilibrium allele frequencies at locus $\B$ are given by
\begin{equation}\label{SLPB}
	\hat q_1^\B = \frac12\left(1-2\ta_1 + \sqrt{1-4\ta_1\ta_2}\right)\,, \quad
	\hat q_2^\B = \frac12\left(1-2\ta_2 - \sqrt{1-4\ta_1\ta_2}\right)\,.
\end{equation}
Thus, the four SLPs have the following coordinates:
\begin{subequations}
\begin{align}
		&\PP_{\A,1}: \quad \hat{p}_1=\hat p_1^\A,\; \hat{p}_2=\hat p_2^\A,\; \hat{q}_1=\hat{q}_2=1,\; \hat{D}_1=\hat{D}_2=0 , \\
		&\PP_{\A,2}: \quad \hat{p}_1=\hat p_1^\A,\; \hat{p}_2=\hat p_2^\A,\; \hat{q}_1=\hat{q}_2=0,\; \hat{D}_1=\hat{D}_2=0 , \\
		&\PP_{\B,1}: \quad \hat{p}_1=\hat{p}_2=1,\; \hat{q}_1=\hat{q}_1^\B,\; \hat{q}_2=\hat{q}_2^\B,\; \hat{D}_1=\hat{D}_2=0 , \\
		&\PP_{\B,2}: \quad \hat{p}_1=\hat{p}_2=0,\; \hat{q}_1=\hat{q}_1^\B,\; \hat{q}_2=\hat{q}_2^\B,\; \hat{D}_1=\hat{D}_2=0 , 
\end{align}
\end{subequations}
The equilibria $\PP_{\A,1}$ and $\PP_{\A,2}$ are admissible if and only if
\begin{equation}\label{si}
	|\si_1+\si_2| < 1,
\end{equation}
and the equilibria $\PP_{\B,1}$ and $\PP_{\B,2}$ are admissible if and only if
\begin{equation}\label{ta}
	|\ta_1+\ta_2| < 1.
\end{equation}

The SLPs leave the state space through one of their `neighboring' ME if $|\si_1+\si_2|$ or $|\ta_1+\ta_2|$ increases above 1. In particular, we find
\begin{subequations}\label{eq:ASLPbf}
\begin{alignat}{2}
   &\si_1+\si_2 \downarrow -1 &\;\iff\; \PP_{\A,1} \to \M_1 \text{ and }   \PP_{\A,2} \to \M_2,\label{eq:ASLPbf_1}\\
   &\si_1+\si_2 \uparrow 1  &\;\iff\; \PP_{\A,1} \to \M_3 \text{ and }   \PP_{\A,2} \to \M_4, \label{eq:ASLPbf_2}\\
   &\ta_1+\ta_2 \downarrow -1 &\;\iff\; \PP_{\B,1} \to \M_1 \text{ and }   \PP_{\B,2} \to \M_3,\\
   &\ta_1+\ta_2 \uparrow 1  &\;\iff\; \PP_{\B,1} \to \M_2 \text{ and }   \PP_{\B,2} \to \M_4. \label{eq:BSLPbf_2}
\end{alignat}
\end{subequations}
Throughout, we use $\downarrow$ to indicated convergence from above and $\uparrow$ to indicate convergence from below.
Figure 1 illustrates the location of the possible equilibria.

\begin{figure}
 \centering
 \includegraphics[scale=1.2,keepaspectratio=true]{./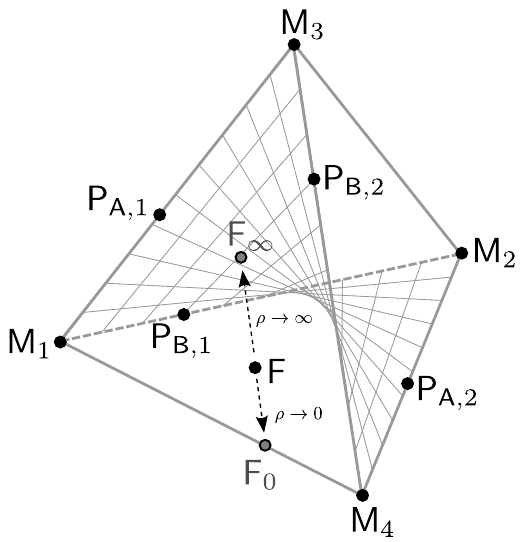}
 \caption{\small{Location of equilibria. In terms of gamete frequencies, the state space is $S_4\times S_4$, where each $S_4$ corresponds to one deme. This figure shows (schematically) the location in $S_4$ of all boundary equilibria and of the stable internal equilibrium $\F$. $\F$ converges to $\Finf$ if $\rho\to\infty$ and to $\Fnull$ if $\rho\to0$. The LE manifold is indicated by hatching.}}
 \label{fig:simplex}
\end{figure}

The SLPs are asymptotically stable within their marginal one-locus system if and only if they are admissible. 
Then they are also globally asymptotically stable within their marginal system (Eyland 1971). (We use globally stable in the sense that at least all trajectories from the interior of the designated set converge to the equilibrium.)
The reader may notice that \eqref{si} and \eqref{ta} are precisely the conditions for maintaining a protected polymorphism at locus $\A$ and $\B$, respectively.

\subsection{Stability of monomorphic equilibria}\label{sec:stab_mono}
At each monomorphic equilibrium, the characteristic polynomial factors into three quadratic polynomials. Two of them determine stability with respect to the marginal one-locus systems, whereas the third determines stability with respect to the interior
of the state space. The stability properties of the monomorphic equilibria are as follows. The proof is given in 
Appendix \ref{App_Proof_Prop_stabM1}.

\begin{proposition}\label{prop_stab_ME}
$\M_1$ is asymptotically stable if
\begin{equation}\label{M1_stab_marginal_orig}
	\si_1+\si_2<-1 \text{ and } \ta_1+\ta_2<-1
\end{equation}
and one of the following conditions hold
\begin{subequations}\label{M1_stab_rh}
\begin{equation}
	\rh \ge \min\{-\a_2, -\be_2 \}
\end{equation}
or
\begin{equation}\label{M1_r_klein,upperbound_orig}
	\rh <  \min\{-\a_2, -\be_2 \} \text{ and } m_2 > -\frac{(\a_1+\be_1+\rh+m_1)(\a_2+\be_2+\rh)}{\a_1+\be_1+\rh}.
\end{equation}
\end{subequations}

$\M_2$ is always unstable.

$\M_3$ is asymptotically stable if
\begin{equation}\label{M3_stab_marginal_orig}
	\si_1+\si_2>1 \text{ and } \ta_1+\ta_2<-1.
\end{equation}

$\M_4$ is asymptotically stable if
\begin{equation}\label{M4_stab_marginal_orig}
	\si_1+\si_2>1 \text{ and } \ta_1+\ta_2>1
\end{equation}
and one of the following conditions hold
\begin{subequations}
\begin{equation}
	\rh \ge \a_1
\end{equation}
or
\begin{equation}\label{M4_r_klein,upperbound_orig}
	\rh < \a_1 \text{ and } m_2 > \frac{(\a_1+\be_1-\rh-m_1)(\a_2+\be_2-\rh)}{\a_1+\be_1-\rh}.
\end{equation}
\end{subequations}

If one of the inequalities in \eqref{M1_stab_marginal_orig}, \eqref{M3_stab_marginal_orig}, or \eqref{M4_stab_marginal_orig},
or one of the inequalities for $m_2$ in \eqref{M1_r_klein,upperbound_orig} or \eqref{M4_r_klein,upperbound_orig} is reversed,
the respective equilibrium is unstable.
\end{proposition}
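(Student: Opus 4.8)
The plan is to linearize the system \eqref{eq:dynamics} at each monomorphic equilibrium $\M_j$ and read the stability conditions off the factored characteristic polynomial of the Jacobian. I would first compute that Jacobian in the coordinate order $(p_1,p_2,q_1,q_2,D_1,D_2)$. In \eqref{eq:dynamics} the equation for $\dot p_k$ does not involve any $q_\ell$ and that for $\dot q_k$ does not involve any $p_\ell$; moreover, since at $\M_j$ one has $\hat D_k=0$ with $\hat p_1=\hat p_2$ and $\hat q_1=\hat q_2$, every partial derivative of $\dot D_k$ with respect to some $p_\ell$ or $q_\ell$ vanishes (it is a multiple of $\hat D_k$, of $\hat p_{k^\ast}-\hat p_k$, or of $\hat q_{k^\ast}-\hat q_k$, all zero at $\M_j$). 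Hence the Jacobian is block upper triangular with three $2\times2$ diagonal blocks: the linearization $A_\A$ of the marginal one-locus $\A$-dynamics at its fixed allele frequency, the analogous block $A_\B$ for locus $\B$, and a block $A_D$ governing the transverse (interior) direction. Its characteristic polynomial factors as $\det(\la I-A_\A)\det(\la I-A_\B)\det(\la I-A_D)$, so $\M_j$ is asymptotically stable iff each of these $2\times2$ matrices has negative trace and positive determinant.

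\emph{Marginal blocks.} For $A_\A$ at $\M_1$ one gets the matrix with diagonal $-(\a_1+m_1),-(\a_2+m_2)$ and off-diagonal $m_1,m_2$, so $\det A_\A=\a_1\a_2(1+\si_1+\si_2)$; as $\a_1\a_2<0$, positivity of the determinant is equivalent to $\si_1+\si_2<-1$, and this inequality forces $m_2>-\a_2>0$, so the trace is automatically negative. (This matches the bifurcation \eqref{eq:ASLPbf}: it is exactly the range in which the SLP $\PP_{\A,1}$ has exited the state space through $\M_1$, cf.\ also \eqref{si}.) The same computation for $A_\B$ gives $\ta_1+\ta_2<-1$, and the analogous determinants at $\M_2,\M_3,\M_4$ yield the marginal conditions in \eqref{M3_stab_marginal_orig} and \eqref{M4_stab_marginal_orig}, as well as $\si_1+\si_2>-1$ and $\ta_1+\ta_2<1$ at $\M_2$. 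To prove that $\M_2$ is always unstable I would use \eqref{cond_sita}: stability of $A_\A$ requires $\si_1+\si_2<-1<0$, which by \eqref{cond_sita} forces $\ta_1+\ta_2<0$, whence $\det A_\B=\be_1\be_2(1-\ta_1-\ta_2)<0$ and $A_\B$ has a positive real eigenvalue; so the two marginal blocks can never be stable simultaneously (the degenerate cases $\det A_\A=0$ or $\det A_\B=0$ are handled the same way).

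\emph{The $D$-block, and the hard part.} At $\M_1$, $A_D$ has diagonal $-(\a_1+\be_1+\rh+m_1),-(\a_2+\be_2+\rh+m_2)$ and off-diagonal $m_1,m_2$; with $a=\a_1+\be_1+\rh>0$ and $b=\a_2+\be_2+\rh$ one finds $\det A_D=ab+am_2+bm_1$, and dividing $\det A_D>0$ by $a$ gives precisely the $m_2$-bound in \eqref{M1_r_klein,upperbound_orig}, while $\det A_D>0$ also forces $b+m_2>0$ and hence a negative trace. It then remains to show that $\rh\ge\min\{-\a_2,-\be_2\}$ is, together with the marginal conditions, already enough for $\det A_D>0$; this is what collapses the disjunction \eqref{M1_stab_rh} to ``$A_D$ stable''. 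The $D$-block at $\M_4$ is treated in the same way (its diagonal is $\a_k+\be_k-\rh-m_k$, with $\rh\ge\a_1$ as the ``easy'' case), and at $\M_3$ it has diagonal $\a_k-\be_k-\rh-m_k$: there the $(1,1)$-entry is negative because $\be_1>\a_1$ by \eqref{be1>a1}, the $(2,2)$-entry is negative as soon as $m_2>-\be_2$ (which follows from $\ta_1+\ta_2<-1$), so the trace is automatically negative and only $\det A_D>0$ needs checking — and no recombination condition appears. I expect these determinant-positivity verifications to be the main obstacle; the efficient route is to substitute the sharpened marginal inequalities (e.g.\ $\ta_1+\ta_2<-1\iff m_2>-\be_2(1+m_1/\be_1)$) into $\det A_D$ and recognize the surviving expression as a manifestly nonnegative combination of $\rh$, the selection coefficients, and — crucially at $\M_3$ — of $\th=\a_1\be_2-\a_2\be_1>0$, so that \eqref{eq:parameter} and \eqref{cond_sita} close every inequality. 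Finally, for the converse instability statements, I would note that whenever one of the three $2\times2$ blocks has an eigenvalue with positive real part, its eigenvector points in a $p$-, $q$-, or admissible $D$-direction at the corner $\M_j$, so the equilibrium is genuinely unstable.
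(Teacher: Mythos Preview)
Your approach is essentially the same as the paper's: both factor the linearization at each $\M_j$ into three $2\times2$ pieces (two marginal one-locus blocks and one transverse block), reduce stability to sign conditions on each quadratic factor, and then show that the marginal conditions together with $\rh\ge\min\{-\a_2,-\be_2\}$ (resp.\ $\rh\ge\a_1$) already force the transverse determinant to be positive. The paper writes the characteristic polynomial as $t_1(\la)t_2(\la)t_3(\la)$ and works with $t_i(0)$ and $t_i'(0)$, whereas you work directly with trace and determinant of the blocks---these are of course equivalent.

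Two remarks. First, there is a slip where you list the marginal stability conditions at $\M_2$: they are $\si_1+\si_2<-1$ and $\ta_1+\ta_2>1$ (not the reversed inequalities you wrote), and indeed in your very next sentence you use the correct one. Second, for the ``hard part'' at $\M_1$ the paper proceeds slightly differently from what you sketch: rather than substituting the sharpened marginal bounds into $\det A_D$, it notes that the right-hand side of \eqref{M1_r_klein,upperbound_orig} is strictly decreasing in $\rh$, so it suffices to check the inequality at $\rh=-\a_2$ and at $\rh=-\be_2$; at those values a short rearrangement reduces \eqref{M1_r_klein,upperbound_orig} to a consequence of $\ta_1+\ta_2<-1$ (resp.\ $\si_1+\si_2<-1$). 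This monotonicity trick is cleaner than trying to exhibit a ``manifestly nonnegative combination'' and avoids any appeal to $\th>0$ for $\M_1$ and $\M_4$. Your plan for $\M_3$ is fine; the paper simply states that this case is simpler and leaves it to the reader.
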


If we assumed $\th<0$, then $\M_3$ would always be unstable and $\M_2$ would be stable if $\si_1+\si_2<-1$ and $\ta_1+\ta_2>1$.

The above result shows that each of $\M_1$, $\M_3$, or $\M_4$ can be stable, but never simultaneously. 
For sufficiently loose linkage, the stability of a ME is determined solely by its stability within the two marginal one-locus systems in which it occurs. Stability of $\M_3$ is independent of the recombination rate. For given migration rates, the equilibria $\M_1$ and $\M_4$ may be stable for high recombination rates but unstable for low ones. For a low total migration rate ($m_1+m_2$), no ME is stable. For a sufficiently high total migration rate, there is a globally asymptotically stable ME (Section \ref{sec:strong_mig}).

\subsection{Stability of single-locus polymorphisms}\label{ref:stab_SLP}
As already mentioned, a single-locus polymorphism is globally attracting within its marginal one-locus system whenever it is admissible. Although the coordinates of the SLPs are given explicitly, the conditions for stability within the full, six-dimensional system on
$S_4\times S_4$ are uninformative because the four eigenvalues that determine stability transversal to the marginal one-locus system are solutions of a complicated quartic equation.

In the following we treat several limiting cases in which the conditions for stability of the SLPs and for existence and stability of FPs can be obtained explicitly.

\subsection{Weak migration}\label{ref:weak_mig}
The equilibrium and stability structure for weak migration can be deduced from the model with no migration by perturbation theory.
In the absence of migration ($m_1=m_2=0$), the two subpopulations evolve independently. Because selection is nonepistatic and there is no dominance, in each deme the fittest haplotype becomes eventually fixed. In fact, mean fitness is nondecreasing (Ewens 1969).
Our assumptions about fitness, i.e., \eqref{eq:fitnessmatrix} and \eqref{assume}, imply that
in deme 1 the equilibrium with $\hat p_1=\hat q_1=1$ and $\hat D_1=0$ is globally attracting, and in deme 2 the equilibrium with $\hat p_2=\hat q_2=0$ and $\hat D_2=0$ is globally attracting. Therefore, in the combined system, i.e., on $S_4\times S_4$, but still with $m_1=m_2=0$,
the (unique) globally attracting equilibrium is given by
\begin{equation}\label{F_nomig}
	\hat p_1=\hat q_1=1, \;\; \hat p_2=\hat q_2=0, \;\; \hat D_1=\hat D_2=0\,.
\end{equation}
All other equilibria are on the boundary and unstable. 

Because, generically, all equilibria in the system without migration are hyperbolic and it is a gradient system (Shahshahani 1979;
B\"urger 2000, p.\ 42),
Theorem 5.4 in B\"urger (2009a) applies and shows that the perturbation $\F$ of the equilibrium \eqref{F_nomig} is globally asymptotically stable for sufficiently small migration rates $m_1$ and $m_2$. Boundary equilibria remain unstable for sufficiently small migration rates. It is straightforward to calculate the coordinates of the perturbed equilibrium to leading order in $m_1$ and $m_2$. They are given by
\begin{subequations}\label{weak_mig}
\begin{alignat}{3}
		\hat p_1 &= 1-\frac{m_1}{\a_1} \frac{\a_1+\rh}{\a_1+\be_1+\rh} \,, &\quad
		\hat q_1 &= 1-\frac{m_1}{\be_1}\frac{\be_1+\rh}{\a_1+\be_1+\rh} \,, &\quad
		\hat D_1 &= \frac{m_1}{\a_1+\be_1+\rh} \,, \\
		\hat p_2 &= \frac{m_2}{-\a_2}\frac{\rh-\a_2}{\rh-\a_2-\be_2} \,, &\;
		\hat q_2 &= \frac{m_2}{-\be_2}\frac{\rh-\be_2}{\rh-\a_2-\be_2} \,, &\;
		\hat D_2 &= \frac{m_2}{\rh-\a_2-\be_2} \,.
\end{alignat}	
\end{subequations}
Therefore, we conclude

\begin{proposition}\label{prop_weak_mig}
For sufficiently weak migration, there is a unique, globally attracting, fully polymorphic equilibrium $\F$. To leading
order in $m_1$ and $m_2$, its coordinates are given by \eqref{weak_mig}.
\end{proposition}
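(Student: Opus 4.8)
The plan is to realise $\F$ as a smooth (regular) perturbation of the no-migration equilibrium \eqref{F_nomig}, to deduce its existence, uniqueness and global stability from the perturbation theory for gradient systems, and then to extract the leading-order coordinates by a direct first-order expansion of the equilibrium conditions. The first step is to settle the unperturbed system $m_1=m_2=0$. In this case \eqref{eq:dynamics} decouples into two independent two-locus, one-deme systems with genic selection. Each of these is a Shahshahani gradient system (Shahshahani 1979; B\"urger 2000, p.~42) along which mean fitness is nondecreasing (Ewens 1969), so in each deme the haplotype of maximal fitness becomes globally fixed: by \eqref{assume} this is $A_1B_1$ in deme~1 and $A_2B_2$ in deme~2. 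Linearising the deme-1 dynamics at fixation of $A_1B_1$ gives a block-triangular Jacobian with eigenvalues $-\a_1$, $-\be_1$, $-(\a_1+\be_1+\rh)$, all negative; the analogous computation in deme~2 at fixation of $A_2B_2$ gives $\a_2$, $\be_2$, $\a_2+\be_2-\rh$, again all negative. Hence \eqref{F_nomig} is hyperbolic and, as the product of these two deme equilibria, globally asymptotically stable on $S_4\times S_4$, while every other equilibrium lies on the boundary and is unstable (generically hyperbolically so).

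Next I would invoke the perturbation result. Since the $m=0$ flow is a gradient system whose equilibria are (generically) hyperbolic, Theorem~5.4 of B\"urger (2009a) applies directly: for all sufficiently small $m_1,m_2\ge0$ there is a unique equilibrium $\F$ in a neighbourhood of \eqref{F_nomig}, it depends smoothly on $(m_1,m_2)$, it is globally asymptotically stable, and each boundary equilibrium persists as an unstable equilibrium. That $\F$ is genuinely internal (fully polymorphic) for small migration follows once the leading-order formulas are in hand, since they show that all four gamete frequencies in each deme are pushed strictly into $(0,1)$.

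Finally I would compute the coordinates to leading order. Writing $\hat p_1=1-\ep_1$, $\hat q_1=1-\de_1$, $\hat p_2=\ep_2$, $\hat q_2=\de_2$ with $\ep_i,\de_i,\hat D_k = O(m)$, and substituting into the right-hand sides of \eqref{eq:dynamics} set to zero, the quadratic selection terms linearise, the products $(p_{k^\ast}-p_k)(q_{k^\ast}-q_k)$ contribute only their value at the unperturbed vertex (namely $1$), and the migration mixing terms reduce to $\mp m_k$ times a difference of vertex coordinates. For deme~1 this yields the linear system $\a_1\ep_1+\be_1\hat D_1=m_1$, $\be_1\de_1+\a_1\hat D_1=m_1$, $(\a_1+\be_1+\rh)\hat D_1=m_1$, whose solution is the deme-1 part of \eqref{weak_mig}; the parallel computation in deme~2 (where now $1-2p_2\to1$ and $1-2q_2\to1$) gives the deme-2 part.

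The main obstacle is not any single computation but verifying the hypotheses that license the perturbation argument: that the no-migration flow is genuinely a Shahshahani gradient system with only hyperbolic equilibria, so that both the persistence and global attractivity of $\F$ and the persistence of instability of every boundary equilibrium may be quoted from B\"urger (2009a). Once that is secured, identifying the winning haplotype in each deme and carrying out the first-order expansion are routine.
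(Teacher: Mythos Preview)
Your proposal is correct and follows essentially the same route as the paper: identify the no-migration limit as a Shahshahani gradient system with the globally attracting hyperbolic equilibrium \eqref{F_nomig}, invoke Theorem~5.4 of B\"urger (2009a) to carry over existence, uniqueness and global stability under small $m_1,m_2$, and then expand to first order. Your added details (explicit eigenvalues at the unperturbed vertices, the explicit linear $3\times3$ systems in each deme) simply spell out steps the paper leaves as ``straightforward''.
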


Proposition \ref{prop_weak_mig} remains valid if the assumptions \eqref{be1>a1} and \eqref{th>0} are dropped.
Apart from the obvious fact that migration reduces differences between subpopulations,
the above approximations show that the lower the recombination rate, the smaller is this reduction. Thus, for given (small) migration rates, differentiation between subpopulations is always enhanced by reduced recombination. Linkage disequilibria within subpopulations are always positive.

\subsection{Linkage equilibrium}\label{ref:LE}
If recombination is so strong relative to selection and migration that linkage equilibrium (LE) can be assumed, i.e., if $\frac{1}{\rh}\max_{k=1,2}\{|\a_k|,|\be_k|,m_k\}\to0$, the dynamics \eqref{eq:dynamics} simplifies to
\begin{subequations}\label{dynLE}
\begin{align}
	\dot p_1 &= \a_1 p_1(1-p_1) + m_1 (p_2-p_1) \,, \label{dynLE1}\\
	\dot p_2 &=\a_2 p_2(1-p_2)  + m_2 (p_1-p_2) \,, \label{dynLE2}\\
	\dot q_1 &=\be_1 q_1(1-q_1) + m_1 (q_2-q_1) \,,\label{dynLE3}\\
	\dot q_2 &=\be_2 q_2(1-q_2) + m_2 (q_1-q_2)\,,\label{dynLE4}
\end{align}	
\end{subequations}
which is defined on $[0,1]^4$. 

In \eqref{dynLE}, the differential equations for the two loci are decoupled, i.e., 
\eqref{dynLE1} and \eqref{dynLE2} as well as  \eqref{dynLE3} and \eqref{dynLE4} form closed systems. Thus, the dynamics
of the full system is a Cartesian product of the two one-locus dynamics. Therefore,
in addition to the ME and to the SLPs determined above, the following internal equilibrium, denoted by $\Finf$,
may exist
\begin{equation}\label{Finf}
	\hat p_1^{\infty} = \hat p_1^\A, \quad
	\hat p_2^{\infty} = \hat p_2^\A, \quad
	\hat q_1^{\infty} = \hat q_1^\B, \quad
	\hat q_2^{\infty} = \hat q_2^\B,
\end{equation}
where the $\hat p_k^\A$ and $\hat q_k^\B$ are given by \eqref{SLPA} and \eqref{SLPB}, respectively. No other internal
equilibrium can exist.
This equilibrium is admissible if and only if \eqref{si} and \eqref{ta} are satisfied, i.e., if and only if all four SLPs are admissible.

Because in the one-locus model the FP is globally asymptotically stable (hence, it attracts all trajectories from the interior of the state space) whenever it is admissible (Eyland 1971; Hadeler and Glas 1983, Theorem 2; Nagylaki and Lou 2008, Section 4.3.2), and because the full dynamics is the Cartesian product of the one-locus dynamics, the fully polymorphic equilibrium
$\Finf$ is globally asymptotically stable whenever it is admissible.
Similarly, we conclude that a boundary equilibrium is globally asymptotically stable whenever it is asymptotically stable in the full system. These results in combination with those in Sections \ref{sec:boundary_equil} and \ref{sec:stab_mono} yield the following proposition.

\begin{proposition}\label{prop_LE}
Assume \eqref{dynLE}. Then a globally asymptotically stable equilibrium exists always. This equilibrium
is internal, hence equals $\Finf$ \eqref{Finf}, if and only if \eqref{si} and \eqref{ta} hold. It is a SLP if one of
\eqref{si} or \eqref{ta} is violated, and a ME if both \eqref{si}and \eqref{ta} are violated.

If, by variation of parameters, the internal equilibrium leaves (or enters) the state space, generically, it does so through one of the SLPs. The precise conditions are:
\begin{subequations}\label{eq:Einfinitybf}
\begin{align}
    \Finf \to \PAo  & \;\iff\; \tau_1+\tau_2 \downarrow -1 \text{ and } |\si_1+\si_2|<1, \label{eq:Einfinitybf_a}\\
    \Finf \to \PAt  & \;\;\text{does not occur}, \label{eq:Einfinitybf_b}\\
    \Finf \to \PBo  & \;\iff\; \si_1+\si_2 \downarrow -1 \text{ and } |\tau_1+\tau_2|<1, \label{eq:Einfinitybf_c}\\
    \Finf \to \PBt  & \;\iff\; \si_1+\si_2 \uparrow 1 \text{ and } |\tau_1+\tau_2|<1. \label{eq:Einfinitybf_d}
\end{align}
\end{subequations}
When, upon leaving the state space, $\Finf$ collides with a boundary equilibrium (SLP or ME), the respective boundary equilibrium becomes globally asymptotically stable.
\end{proposition}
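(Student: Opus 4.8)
The whole argument rests on the product structure of \eqref{dynLE}: since the $p$- and $q$-equations are decoupled, the flow on $[0,1]^4$ is the Cartesian product of two copies of the classical one-locus two-deme migration--selection flow on $[0,1]^2$, one driven by $(\a_1,\a_2,m_1,m_2)$ (the ``$\A$-factor'') and one by $(\be_1,\be_2,m_1,m_2)$ (the ``$\B$-factor''). The plan is to first recall, for a single such factor, the global behaviour established in the one-locus literature (Eyland 1971; Hadeler and Glas 1983, Theorem 2; Nagylaki and Lou 2008, Section 4.3.2): for any parameters there is a unique equilibrium attracting the interior of $[0,1]^2$; it is the interior (polymorphic) equilibrium precisely when the protected-polymorphism condition holds --- \eqref{si} for the $\A$-factor, \eqref{ta} for the $\B$-factor --- and otherwise it is the monomorphic state dictated by \eqref{eq:ASLPbf}, i.e.\ $A_1$ fixed if $\si_1+\si_2<-1$ and $A_2$ fixed if $\si_1+\si_2>1$, and analogously for the $\B$-factor with $B_1,B_2$ and $\ta_1+\ta_2$.

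Next I would lift this to the product. If $E^\A$ attracts the interior of the first factor and $E^\B$ the interior of the second, then $(E^\A,E^\B)$ attracts the interior of $[0,1]^4$ because the two coordinate blocks evolve independently, and Lyapunov stability of $(E^\A,E^\B)$ follows from that of the two factors (take, e.g., the maximum of the two factor distances). This gives the first assertion. Running through the four combinations for $(E^\A,E^\B)$ and matching coordinates with the boundary equilibria listed in Section \ref{sec:boundary_equil} identifies the product equilibrium: polymorphic/polymorphic is $\Finf$ of \eqref{Finf}; polymorphic/monomorphic is $\PAo$ or $\PAt$; monomorphic/polymorphic is $\PBo$ or $\PBt$; monomorphic/monomorphic is one of the $\M_i$. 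Since the cases are separated exactly by \eqref{si} and \eqref{ta}, this yields the second and third assertions.

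For the bifurcation statement, note that $\Finf$ is admissible iff \eqref{si} and \eqref{ta} both hold, so generically a boundary of this region is reached by making exactly one of $\si_1+\si_2\pm1$, $\ta_1+\ta_2\pm1$ vanish while the other three stay nonzero. Then exactly one factor's interior equilibrium collides with a monomorphic state of that factor, the other factor staying polymorphic, so $\Finf$ collides with the corresponding SLP --- never with a ME, which would require two of these quantities to vanish simultaneously. Reading off the monomorphic state from \eqref{eq:ASLPbf}: $\si_1+\si_2\downarrow-1$ fixes $A_1$, so $\Finf\to\PBo$; $\si_1+\si_2\uparrow1$ fixes $A_2$, so $\Finf\to\PBt$; $\ta_1+\ta_2\downarrow-1$ fixes $B_1$, so $\Finf\to\PAo$. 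The fourth candidate, $\ta_1+\ta_2\uparrow1$ (which would give $\Finf\to\PAt$), cannot occur: by \eqref{cond_sita} the equality $\ta_1+\ta_2=1$ forces $\si_1+\si_2>1$, violating \eqref{si}, so $\Finf$ is not admissible there --- this is \eqref{eq:Einfinitybf_b}. Finally, that the surviving boundary equilibrium ``becomes globally asymptotically stable'' is just the second and third assertions applied at and just past the threshold, where the relevant factor's monomorphic state is GAS in its factor.

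There is no serious analytic difficulty once the one-locus global-stability results are invoked; the argument is essentially bookkeeping. The two places demanding care are (i) correctly matching each of the four boundary-crossings to its monomorphic state --- hence its SLP --- via the coordinate formulas \eqref{SLPA}--\eqref{SLPB} (equivalently \eqref{eq:ASLPbf}), the merger being with the ``allele-$1$-fixed'' state as $\si_1+\si_2$ (resp.\ $\ta_1+\ta_2$) decreases to $-1$ and with the ``allele-$2$-fixed'' state as it increases past $1$; and (ii) invoking the parameter inequality \eqref{cond_sita} to exclude the transition $\Finf\to\PAt$. One should also keep in mind that ``globally asymptotically stable'' for the product is meant in the paper's relaxed sense --- attraction of the interior together with Lyapunov stability --- which is exactly what the product construction delivers.
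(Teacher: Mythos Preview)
Your proposal is correct and follows essentially the same approach as the paper: exploit the Cartesian product structure of \eqref{dynLE}, invoke the one-locus global stability results (Eyland 1971; Hadeler and Glas 1983; Nagylaki and Lou 2008) for each factor, lift to the product, and use \eqref{cond_sita} to rule out $\Finf\to\PAt$. The paper's argument is terser---it states most of this in the paragraph preceding the proposition and relegates the $\PAt$ exclusion to a one-line remark afterward---but the logical content is the same.
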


We note that $\Finf \to \PAt $ does not occur because it requires $\tau_1+\tau_2 \uparrow 1$ and $|\si_1+\si_2|<1$,
which is impossible by \eqref{cond_sita}.
We leave the simple determination of the conditions for bifurcations of $\Finf$ with one of the ME to the interested reader.

Proposition \ref{prop_LE} can be extended straightforwardly to an arbitrary number of loci because the dynamics at each locus is independent of that at the other loci. This decoupling of loci occurs because there is no epistasis.

\subsection{Strong recombination: quasi-linkage equilibrium}\label{sec:QLE}
If recombination is strong, a regular perturbation analysis of the internal equilibrium $\Finf$ of \eqref{dynLE} can be performed.
The allele frequencies and linkage disequilibria can be calculated to order $1/\rh$. Formally, we set
\begin{equation}
	m_k = \mu_k/\rh \quad (k=1,2),
\end{equation}
keep $\si_k$ and $\ta_k$ constant, and let $\rh\to\infty$. Then, we obtain
\begin{subequations}\label{F_largerho}
\begin{align}
	\hat p_1 &= \hat p_1^{\infty} + \frac{\si_1}{\rh}	
	\frac{\si_2(\be_1-\be_2)+\be_1\sqrt{1-4\si_1\si_2}}{\sqrt{1-4\si_1\si_2}} \left(\hat q_1^{\infty}-\hat q_2^{\infty} \right) 
			+ O(\rh^{-2})\,,\\
	\hat q_1 &= \hat q_1^{\infty} +\frac{\ta_1}{\rh}
			\frac{\ta_2(\a_1-\a_2)+\a_1\sqrt{1-4\ta_1\ta_2}}{\sqrt{1-4\ta_1\ta_2}} \left(\hat p_1^{\infty}-\hat p_2^{\infty} \right)
			+ O(\rh^{-2})\,,\\
	\hat D_1 &= \frac{m_1}{\rh}\left(\hat p_1^{\infty}-\hat p_2^{\infty}\right) 
						\left(\hat q_1^{\infty}-\hat q_2^{\infty} \right) + O(\rh^{-2})\,,
\end{align}
\end{subequations}
and analogous formulas hold for the second deme. Because LD is of order $1/\rh$, this approximation may be called the quasi-linkage equilibrium approximation of the fully polymorphic equilibrium (Kimura 1965, Turelli and Barton 1990, Nagylaki et al.\ 1999). We note that LD is positive in both demes and increases with increasing differentiation between the demes, increasing migration, or decreasing recombination.

Proposition 5.1 in B\"urger (2009a) shows that in every small neighborhood of an equilibrium of the model with LE \eqref{dynLE}, there is one equilibrium of the perturbed system, and it has the same stability properties as the unperturbed equilibrium. Because of the simple structure of \eqref{dynLE}, a stronger result can be obtained.
In an isolated one-locus system on $[0,1]^2$ (e.g., \eqref{dynLE1} and \eqref{dynLE2}), every trajectory from the interior converges to the unique asymptotically stable equilibrium (Section \ref{ref:LE}), and the chain-recurrent points (Conley 1978) are the equilibria.
Therefore, the same holds for the LE dynamics \eqref{dynLE}, and the regular global perturbation result of Nagylaki et al.\ (1999) (the proof of their Theorem 2.3) applies for large $\rh$. Hence the dynamical behavior with strong recombination is qualitatively the same as that under LE. We conclude that for sufficiently strong recombination every asymptotically stable equilibrium is globally asymptotically stable.

\subsection{No recombination}\label{sec:no_rec}
Let recombination be absent, i.e., $\rh=0$. 
Then, effectively, we have a one-locus model in which the four alleles correspond to the four gametes
$A_1B_1$, $A_1B_2$, $A_2B_1$, $A_2B_2$.
In deme $k$, they have the selection coefficients $\tfrac12(\a_k+\be_k)$, $\tfrac12(\a_k-\be_k)$,
$\tfrac12(-\a_k+\be_k)$, $-\tfrac12(\a_k+\be_k)$, respectively.
According to Theorem 2.4 of Nagylaki and Lou (2001), generically, no more than two gametes can be present at an equilibrium. We will prove a stronger result and characterize all possible equilibria and their local stability. 

Because $\rh=0$, there may be a polymorphic equilibrium at which only the gametes $A_1B_1$ and $A_2B_2$ are present. We call
it $\Fnull$ and set 
\begin{equation}\label{eq:kappa}
	\ka_k = \frac{m_k}{\a_k+\be_k} \quad (k=1,2)\,.
\end{equation}
Then one-locus theory (Section \ref{sec:boundary_equil}) informs us that $\Fnull$ is admissible if and only if
\begin{equation}\label{eq:Fo_admissible}
	|\ka_1+\ka_2| < 1.
\end{equation}
Its coordinates are given by
\begin{subequations}\label{coord_Fnull}
\begin{align}
		\hat p_1^0 &= \hat q_1^0 = \frac12\left(1-2\ka_1 + \sqrt{1-4\ka_1\ka_2}\right)\,, \\
		\hat p_2^0 &= \hat q_2^0 = \frac12\left(1-2\ka_2 - \sqrt{1-4\ka_1\ka_2}\right)\,, \\
		\hat D_k^0 &= \hat p_k(1-\hat p_k) \quad (k=1,2)\,, 
\end{align}
\end{subequations}
where $\hat p_k^0=\hat q_k^0=\hat x_{1,k}^0$, $\hat x_{2,k}^0=\hat x_{3,k}^0=0$, and $\hat x_{4,k}^0=1-\hat x_{1,k}^0$ ($k=1,2$). Within the subsystem in which only the gametes $A_1B_1$ and $A_2B_2$ are present, $\Fnull$ is asymptotically
stable whenever it is admissible. One-locus theory implies that
\begin{subequations} \label{eq:E0leavesstatespace}
\begin{align}
    \kappa_1+\kappa_2\downarrow -1 \;\iff\; \Fnull \to \M_1,\\
    \kappa_1+\kappa_2\uparrow 1 \;\iff\; \Fnull \to \M_4.
\end{align}
\end{subequations}

A simple application of Corollary 3.9 of Nagylaki and Lou (2007) shows that the
gamete $A_1B_2$ will always be lost (to apply their result, recall assumptions \eqref{eq:parameter} and use $\ga_{22}=\ga_{23}=0$,
$\frac{\a_1}{\a_1+\be_1}<\ga_{21}<\frac{\a_2}{\a_2+\be_2}$, $\ga_{24}=1-\ga_{21}$). This strengthens the result in Section
\ref{sec:stab_mono} that $\M_2$ is always unstable. Thus, we are left with the
analysis of the tri-gametic system consisting of $A_1B_1$, $A_2B_1$, and $A_2B_2$. (If $\th<0$, then gamete $A_2B_1$ is lost.)

In Appendix \ref{int_equi_rho0} it is proved that $\Fnull$ is the only equilibrium at which both loci are polymorphic except when 
\begin{equation}\label{critmu12}
	m_1m_2 = \mtilde
\end{equation}
holds, where
\begin{equation}\label{mcrit}
	\mtilde = -\a_1\a_2\be_1\be_2(\a_1+\be_1)(\a_2+\be_2)/\th^2.
\end{equation}
If \eqref{critmu12} holds, then there is a line of internal equilibria connecting $\Fnull$ with $\PP_{\A,1}$ or $\PP_{\B,2}$ (or $\M_3$); see Appendix \ref{int_equi_rho0}.

We find that $\Fnull$ is asymptotically stable if 
\begin{equation}\label{stab_F0}
	m_1m_2 < \mtilde,
\end{equation}
and unstable if the inequality is reversed (Appendix \ref{sec:Fnull_stability}). For sufficiently small migration rates, Proposition \ref{prop_weak_mig} implies that $\Fnull=\F$ and $\Fnull$ is globally asymptotically stable. If the inequality in \eqref{stab_F0} is reversed, $\Fnull$ may or may not be admissible. 

Of course, if $\Fnull$ is asymptotically stable, then the equilibria $\M_1$ and $\M_4$ are unstable; cf.\ \eqref{eq:E0leavesstatespace}.
The following argument shows that $\M_3$ cannot be simultaneously stable with $\Fnull$. We rewrite \eqref{stab_F0} as
\begin{equation}\label{stab_F0kap}
	\ka_1\ka_2 > -\frac{\a_1\a_2\be_1\be_2}{\th^2} 
		= -\frac{\si_1\si_2\ta_1\ta_2}{(\si_1\ta_2-\si_2\ta_1)^2}\,.
\end{equation}
Because
\begin{equation}
	\ka_k^{-1} = \si_k^{-1} + \ta_k^{-1}\,,
\end{equation}
\eqref{stab_F0kap} becomes
\begin{equation}\label{stabF0_sigtau}
	(\si_1\ta_2-\si_2\ta_1)^2 + (\si_1+\ta_1)(\si_2+\ta_2) < 0\,.
\end{equation}
Since $\M_3$ is asymptotically stable if \eqref{M3_stab_marginal_orig} holds and because, as is easy to show,  
\eqref{M3_stab_marginal_orig} and \eqref{stabF0_sigtau} are incompatible, the assertion follows. It can also be shown from
\eqref{M3_stab_marginal_orig} and \eqref{stabF0_sigtau} that $\M_3$ cannot become stable when $\Fnull$ loses its 
stability except in the degenerate case when $\si_1+\si_2=1$ and $\ta_1+\ta_2=-1$.

In our tri-gametic system, $\PP_{\A,1}$ and $\PP_{\B,2}$ are the only possible SLPs. They may exist simultaneously with $\Fnull$ if \eqref{stab_F0} holds, i.e., if $\Fnull$ is stable, but not otherwise (Appendix \ref{sec:PB2_PA1_rhnull_stability}). If \eqref{stab_F0} holds, both are unstable (if admissible). $\PP_{\A,1}$ or $\PP_{\B,2}$ have an eigenvalue 0 if and only if \eqref{critmu12} holds or if they leave or enter the state space through a ME. In Appendix \ref{sec:PB2_PA1_rhnull_stability} it is shown that $\PP_{\A,1}$ is asymptotically stable if and only if
\begin{equation}\label{PA1_stab_r0}
  \ta_1+\ta_2<-1 \;\text{ and }\; |\si_1+\si_2|<1  \;\text{ and }\;  m_1m_2 > \mtilde ,
\end{equation}
and $\PP_{\B,2}$ is asymptotically stable if and only if
\begin{equation}\label{PB2_stab_r0}
  1<\si_1+\si_2 \;\text{ and }\; |\ta_1+\ta_2|<1  \;\text{ and }\;  m_1m_2 > \mtilde.
\end{equation}
Hence, if $m_1m_2$ increases above $\mtilde$, the SLP that is admissible becomes asymptotically stable. 
Upon collision of the stable SLP with one of the adjacent ME, the corresponding ME becomes stable and remains so for all higher migration rates. We summarize these findings as follows:

\begin{proposition}\label{prop_norec}
Except in the degenerate case when \eqref{critmu12} holds, only equilibria with at most two gametes present exist. 
If \eqref{eq:Fo_admissible} is satisfied, the
equilibrium $\Fnull$ given by \eqref{coord_Fnull} is admissible. If, in addition, \eqref{stab_F0} is fulfilled, then $\Fnull$ is asymptotically stable. For sufficiently small migration rates, it is globally asymptotically stable. If $\Fnull$ is unstable or not admissible, then one of the ME $(\M_1$, $\M_3$, $\M_4)$ or one of the SLPs $(\PP_{\A,1}$, $\PP_{\B,2})$
is asymptotically stable. If \eqref{critmu12} holds, then there is a line of equilibria with three gametes present.
\end{proposition}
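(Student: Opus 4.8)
The plan is to use the reduction noted at the start of Section~\ref{sec:no_rec}: since $\rh=0$, the system \eqref{eq:dynamics} is a one‑locus two‑deme model whose four ``alleles'' are the gametes $A_1B_1,A_1B_2,A_2B_1,A_2B_2$ with the deme‑$k$ selection coefficients listed there, and one may then invoke the existing theory of multiallelic migration--selection dynamics. First I would quote Theorem~2.4 of Nagylaki and Lou (2001) to obtain that, generically, at most two gametes are present at any equilibrium, and Corollary~3.9 of Nagylaki and Lou (2007), with the choice of parameters indicated in Section~\ref{sec:no_rec}, to obtain that the gamete $A_1B_2$ is always eliminated (which also re‑proves instability of $\M_2$). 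This confines the asymptotic dynamics to the tri‑gametic face spanned by $A_1B_1,A_2B_1,A_2B_2$, so it suffices to locate and classify the equilibria lying in that face.

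On the edge $A_1B_1$--$A_2B_2$ the system reduces to a standard one‑locus two‑allele two‑deme model with selection coefficients $\pm\tfrac12(\a_k+\be_k)$; by Section~\ref{sec:boundary_equil} (Eyland 1971) this yields the admissibility condition \eqref{eq:Fo_admissible}, the coordinates \eqref{coord_Fnull} of $\Fnull$, asymptotic stability of $\Fnull$ within the edge whenever admissible, and the exit bifurcations \eqref{eq:E0leavesstatespace}. To show $\Fnull$ is the unique equilibrium with both loci polymorphic off the degenerate locus, I would carry out the computation relegated to Appendix~\ref{int_equi_rho0}: set $\dot p_k=\dot q_k=\dot D_k=0$ inside the tri‑gametic face, eliminate variables, and show the resulting algebraic system is inconsistent unless $m_1m_2=\mtilde$, in which case it collapses to a one‑parameter family of tri‑gametic equilibria joining $\Fnull$ with $\PAo$ or $\PBt$ (or $\M_3$). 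This elimination is the one genuinely computational step.

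It then remains to assemble the stability statements. The sign of the eigenvalue of $\Fnull$ transversal to the edge is computed in Appendix~\ref{sec:Fnull_stability} and is negative precisely when \eqref{stab_F0} holds; the equivalent form \eqref{stabF0_sigtau} shows $\M_3$ cannot be stable at the same time. For weak migration, Proposition~\ref{prop_weak_mig} identifies $\F$ with $\Fnull$ and gives global asymptotic stability, so $\Fnull$ is admissible and stable for small $m_1,m_2$. The only SLPs meeting the tri‑gametic face are $\PAo$ and $\PBt$; their stability conditions \eqref{PA1_stab_r0}--\eqref{PB2_stab_r0} are established in Appendix~\ref{sec:PB2_PA1_rhnull_stability}, and the stability conditions for $\M_1,\M_3,\M_4$ come from Proposition~\ref{prop_stab_ME} specialized to $\rh=0$. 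Finally I would verify, using \eqref{cond_sita} together with the incompatibilities recorded around \eqref{stabF0_sigtau}, that whenever $\Fnull$ is inadmissible or unstable, exactly one of $\M_1,\M_3,\M_4,\PAo,\PBt$ satisfies its own stability inequalities.

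The main obstacle, apart from bookkeeping, is this last exhaustiveness claim: one must rule out a parameter region in which $\Fnull$ fails but no boundary equilibrium is stable. I expect to dispatch it by a soft argument — since for $\rh=0$ we are dealing with a one‑locus migration--selection model whose only equilibria (off the degenerate locus $m_1m_2=\mtilde$) are $\Fnull$, $\PAo$, $\PBt$ and the ME, the convergence theory for such systems leaves no room for a stable state other than those listed, so the explicit inequalities need only be checked for mutual consistency (the algebra already in \eqref{cond_sita} and \eqref{stabF0_sigtau}), not for jointly covering parameter space. The degenerate case $m_1m_2=\mtilde$ is then precisely the locus at which this picture breaks down into the announced line of three‑gamete equilibria, which completes the statement.
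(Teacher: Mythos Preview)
Your proposal is correct and follows essentially the same route as the paper: the proposition is a summary of the findings developed throughout Section~\ref{sec:no_rec}, and you have identified exactly the ingredients the paper uses (Nagylaki--Lou 2001 Theorem~2.4 and 2007 Corollary~3.9, the Appendix~\ref{int_equi_rho0} elimination, the eigenvalue computations in Appendices~\ref{sec:Fnull_stability} and~\ref{sec:PB2_PA1_rhnull_stability}, Proposition~\ref{prop_weak_mig}, and Proposition~\ref{prop_stab_ME} at $\rh=0$). One minor remark on the last step: the paper does not invoke an abstract convergence theorem for multiallelic migration--selection systems to guarantee a stable equilibrium always exists; instead it tracks the exchange-of-stability bifurcations directly---when $m_1m_2$ crosses $\mtilde$, the zero eigenvalue at $\Fnull$ is the same as the one at the admissible SLP (via the line of equilibria of Appendix~\ref{int_equi_rho0}), so \eqref{PA1_stab_r0} or \eqref{PB2_stab_r0} picks up stability, and then \eqref{SLPs_leave_state_space_m} hands it to a ME---so your ``soft argument'' is replaceable by this concrete bifurcation bookkeeping.
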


The proposition shows that, except for the nongeneric case when \eqref{critmu12} holds, there is always precisely one stable equilibrium point. Numerical results support the conjecture that the stable equilibrium is globally asymptotically stable. Bifurcation patterns as functions of $m$ are derived in Section \ref{sec:bifs_norec}. 

In addition to $\Fnull$, there exists a second FP on the edge connecting $\M_2$ and $\M_3$. Although its coordinates can be calculated easily, it is not of interest here as it is unstable for every choice of selection and migration parameters. This unstable equilibrium leaves the state space under small perturbations, i.e., if $\rh>0$.

\subsection{Highly asymmetric migration}\label{Section_asym}
All special cases treated above suggest that there always exists a globally asymptotically stable equilibrium. This, however, is generally not true as was demonstrated by the analysis of the two-locus continent-island (CI)
model in B\"urger and Akerman (2011). There, all possible bifurcation patterns were derived and it was shown that the fully polymorphic equilibrium can be simultaneously stable with a boundary equilibrium.
For highly asymmetric migration rates, the equilibrium and stability structure can be obtained by a perturbation analysis of this CI model.

Therefore, we first summarize the most relevant features of the analysis in B\"urger and Akerman (2011). Because in that analysis the haplotype $A_2B_2$ is fixed on the continent (here, deme 2) and there is no back migration ($m_2=0$), it is sufficient to treat the dynamics on the island (here, deme 1)
where immigration of $A_2B_2$ occurs at rate $m_1$. Thus, the state space is $S_4$. 

It was shown that up to two internal (fully
polymorphic) equilibria, denoted by $\E_+$ and $\E_-$, may exist. Only one ($\E_+$) can be stable. Two SLPs, $\E_\A$ and $\E_\B$, may exist. At $\E_\A$, locus $\A$ is polymorphic and allele $B_2$ is fixed; at  $\E_\B$, locus $\B$ is polymorphic and allele $A_2$ is fixed.
$\E_\A$ ($\E_\B$) is admissible if and only if $m_1<\a_1$ ($m_1<\be_1$). $\E_\A$ is always unstable. Finally, there always exists the
monomorphic equilibrium $\E_\CC$ at which the haplotype $A_2B_2$ is fixed on the island. The equilibrium coordinates of all equilibria were obtained explicitly. In addition, it was proved (see also Bank et al.\ 2012, Supporting Information, Theorem S.4) that precisely the following two types of bifurcation patterns can occur:

Type 1. There exists a critical migration rate $m^\bullet>0$ such that:
\begin{itemize} 
\item If $0<m_1<m^\bullet$, a unique internal equilibrium,  $\E_+$, exists. It is asymptotically stable and, presumably, globally asymptotically stable.
\item At $m_1=m^\bullet$, $\E_+$ leaves the state space through a boundary equilibrium ($\E_\B$ or $\E_\CC$) by an exchange-of-stability
bifurcation. 
\item If $m_1>m^\bullet$, a boundary equilibrium ($\E_\B$ or $\E_\CC$) is asymptotically stable and, presumably, globally stable.
\end{itemize}

Type 2. There exist critical migration rates $m^\circ$ and $m^\bullet$ satisfying $m^\bullet>m^\circ>0$ such that:
\begin{itemize} 
\item If $0<m_1<m^\circ$, there is a unique internal equilibrium $(\E_+)$. It is asymptotically stable and, presumably,
globally stable.
\item At $m_1=m^\circ$, an unstable equilibrium $(\E_-)$ enters the state space by an exchange-of-stability
bifurcation with a boundary equilibrium ($\E_\B$ or $\E_\CC$).
\item If $m^\circ<m_1<m^\bullet$, there are two internal equilibria, one asymptotically stable $(\E_+)$, 
the other unstable $(\E_-)$, and one of the boundary equilibria ($\E_\B$ or $\E_\CC$) is asymptotically stable.
\item At $m_1=m^\bullet$, the two internal equilibria merge and annihilate each other by a saddle-node bifurcation.
\item If $m_1>m^\bullet$, a boundary equilibrium ($\E_\B$ or $\E_\CC$) is asymptotically stable and, presumably, globally stable.
\end{itemize}

For sufficiently large migration rates ($m>m^{\bullet\bullet}\ge m^\bullet$), $\E_\CC$ is globally asymptotically stable in both cases. Bifurcation patterns of Type 2 occur only if the recombination rate is intermediate, i.e., if $\rh$ is about as large as $\a_1$.

By imbedding the CI model into the two-deme dynamics, \eqref{dynamics_gametes} or \eqref{eq:dynamics}, perturbation theory can be applied to obtain analogous results for highly asymmetric migration, i.e., for sufficiently small $m_2/m_1$ (Karlin and McGregor 1972). This is so because all equilibria in the CI model are hyperbolic except when collisions between equilibria occur (B\"urger and Akerman 2011). Since the coordinates of the internal
equilibria $\E_+$ and $\E_-$ were derived, the perturbed equilibrium frequencies can be obtained. Because they are too complicated to be informative, we do not present them. The perturbation of $\E_+$, denoted by $\F$, is asymptotically stable. As $\E_-$ is internal, it cannot be lost by a small perturbation. Also the boundary equilibria and their stability properties are preserved under small perturbations. In particular, $\E_\CC$ gives rise to $\M_4$, and the SLPs $\E_\A$ and $\E_\B$ give rise to $\PP_{\A,2}$ and $\PP_{\B,2}$, respectively,

If recombination is intermediate, (at least) under highly asymmetric two-way migration, one stable and one unstable FP can coexist. In this case the stable FP, $\F$, is simultaneously stable with either $\M_4$ or $\PP_{\B,2}$. 
Although there is precisely one (perturbed) equilibrium in a small neighborhood of every equilibrium of the CI model, we can not exclude that other internal equilibria or limit sets are generated by perturbation.

\subsection{The case $\th=0$}\label{sec:theta=0}
The analyses in the previous sections are based on the assumptions \eqref{eq:parameter}, in particular, on $\th>0$. 
However, many of the results obtained above remain valid if $\th=0$. Here, we point out the necessary adjustments. 

Without loss of generality, we can assume 
\begin{equation}\label{eq:parameterdeg}
	|\a_k|\leq |\be_k| \text{ for } k=1,2
\end{equation}
in addition to $\th=0$ and \eqref{assume}. Then we observe that
\begin{equation}
    \si_1+\si_2=\frac{\be_2}{\a_2}(\ta_1+\ta_2)\geq \ta_1+\ta_2 \;\text{ and }\; \ka_1+\ka_2=\frac{\be_2}{\a_2+\be_2}(\ta_1+\ta_2)<\ta_1+\ta_2.
\end{equation}
Therefore, either
\begin{subequations}\label{eq:th0_order}
\begin{align}
    &0<\ka_1+\ka_2<\ta_1+\ta_2\leq \si_1+\si_2 \label{eq:th0_order_1}\\
\intertext{or}
    &0>\ka_1+\ka_2>\ta_1+\ta_2\geq \si_1+\si_2 \label{eq:th0_order_2}\\
\intertext{or}
    & \ka_1+\ka_2 = \ta_1+\ta_2 = \si_1+\si_2 = 0 \label{eq:th0_order_3}
\end{align}
\end{subequations}
applies, where equality in \eqref{eq:th0_order_1} and \eqref{eq:th0_order_2} holds if $\a_k=\be_k$ ($k=1,2$). In addition,
\begin{equation}
	\ka_1+\ka_2=0 \;\iff\; \ta_1+\ta_2=0 \;\iff\; \si_1+\si_2=0. \label{eq:all_protected}
\end{equation}
 
With these preliminaries, we can treat the changes required in the above propositions if $\th=0$.

From \eqref{eq:th0_order} we infer that, in Proposition \ref{prop_stab_ME}, not only $\M_2$ but also $\M_3$ is always unstable. In addition, if $0>\ta_1+\ta_2\geq \si_1+\si_2$, then $\M_1$ is asymptotically stable for sufficiently strong migration, whereas
$\M_4$ is stable for sufficiently strong migration if $0<\ta_1+\ta_2\leq \si_1+\si_2$ holds.

As already noted, Proposition \ref{prop_weak_mig} remains valid independently of the value of $\th$.

In Proposition \ref{prop_LE}, the only SLPs through which the internal equilibrium $\Finf$ can leave the state space are
$\PBo$ and $\PBt$; see \eqref{eq:Einfinitybf_c} and \eqref{eq:Einfinitybf_d}. The reason is that, except when $\si_1+\si_2=0$ (and \eqref{eq:all_protected} applies), $\PAo$ and $\PAt$ are only admissible if $\PBo$ and $\PBt$ are. Thus, the locus under weaker selection always becomes monomorphic at lower rates of gene flow than the locus under stronger selection.

If $\rh=0$ (Proposition \ref{prop_norec}), $\Fnull$ is asymptotically stable whenever it is admissible because $\tilde m\to\infty$
as $\th\to0$; see \eqref{stab_F0}. In addition, \eqref{eq:th0_order} implies that $\Fnull$ persists stronger gene flow than the SLPs, which are always unstable; see \eqref{PA1_stab_r0} and \eqref{PB2_stab_r0}. 

In the highly symmetric case of \eqref{eq:th0_order_3}, SLPs cannot be lost. Thus, $\Finf$ is always admissible and globally stable, cf.\ Proposition \ref{prop_LE}. If $\rh=0$, \eqref{eq:all_protected} implies that $\Fnull$ exists always (and is stable). In the next section we show that in this highly symmetric case the FP is always admissible for arbitrary recombination rates.

\subsection{The super-symmetric case}\label{sec:super_symm}
In many, especially ecological, applications highly symmetric migration-selection models are studied. Frequently made assumptions are that the migration rates between the demes are identical ($m_1=m_2$), selection in deme 2 mirrors that in deme 1 ($\a_k=-\a_{k^\ast}$), and the loci are equivalent ($\a_k=\be_k$). Thus, $\th=0$ and \eqref{eq:th0_order_3} holds, which we assume now. 

Conditions \eqref{si} and \eqref{ta} imply that all four SLPs are admissible. Hence, all monomorphisms are unstable. In addition, it can be proved that all SLPs are unstable (Appendix \ref{sec:super_symmetric}). If migration is weak, a globally asymptotically stable, fully polymorphic equilibrium ($\F$) exists (Proposition \ref{prop_weak_mig}). 

Because every boundary equilibrium is hyperbolic for every parameter choice, the index theorem of Hofbauer (1990) can be applied. Since none of the boundary equilibria is saturated, it follows that an internal equilibrium with index 1 exists. 
For small migration rates, this is $\F$ because it is unique. Since the boundary equilibria are always hyperbolic, no internal equilibrium can leave the state space through the boundary. However, we cannot exclude that the internal equilibrium undergoes a pitchfork or a Hopf bifurcation. Numerical results support the conjecture that the internal equilibrium is unique and globally attracting, independently of the strength of migration. This is a very special feature of this super-symmetric case; cf.\ Proposition \ref{prop_strong_mig}.

\subsection{General case}\label{Section_general}
Because a satisfactory analysis for general parameter choices seems out of reach, we performed extensive numerical work to determine the possible equilibrium structures. In no case did we find more complicated equilibrium structures than indicated above, i.e., apparently there are never more than two internal equilibria. If there is one internal equilibrium, it appears to be globally asymptotically stable. If there are two internal equilibria, then one is unstable and the other is simultaneously stable with one boundary equilibrium (as in the CI model). Apparently, two internal equilibria occur only for sufficiently asymmetric migration rates and only if the recombination rate is of similar magnitude as the selective coefficients. 

A glance at the dynamical equations \eqref{eq:dynamics} reveals that an internal equilibrium can be in LE only if $p_1=p_2$ or $q_1=q_2$. From \eqref{Finf}, \eqref{SLPA} and \eqref{SLPB}, we find that this can occur only if $|\si_1+\si_2|=1$ or $|\ta_1+\ta_2|=1$, i.e., for a boundary equilibrium. Thus, internal equilibria always exhibit LD. 

For low migration rates as well as for high recombination rates, there is a unique, fully polymorphic equilibrium which is globally asymptotically stable and exhibits positive LD (Sections \ref{ref:weak_mig} or \ref{sec:QLE}). We denote the (presumably unique) asymptotically stable, fully polymorphic equilibrium by $\F$. If migration is weak, or recombination is weak, or recombination is strong, we have proved that $\F$ is unique. Useful approximations are available for weak migration or strong recombination; see \eqref{weak_mig} or \eqref{F_largerho}.
Finally, for sufficiently high migration rates one of the monomorphic equilibria is globally asymptotically stable.

%Section 4, CTTIM

\section{Bifurcation patterns and maintenance of polymorphism}
\label{sec:bifs}
Here we study how genetic variation and polymorphism depend on the strength and pattern of migration. In particular, we are interested in determining how the maximum migration rate that permits genetic polymorphism depends on the other parameters. For this end, we explore properties of our model, such as the possible bifurcation patterns, as functions of the total migration rate $m$. We do this by assuming that $\a_1$, $\a_2$, $\be_1$, $\be_2$ , $\rh$, and the migration ratio 
\begin{equation}\label{phi}
  \phi=\frac{m_1}{m} ,
\end{equation}
where $m>0$ and $0\le\ph\le1$, are constant. The values $\phi=0$ and $\phi=1$ correspond to one-way migration, as in the
CI model. If $\phi=\tfrac12$, migration between the demes is symmetric, an assumption made in many studies of migration-selection models. Fixing $\ph$ and treating $m$ as the only migration parameter corresponds to the migration scheme introduced by Deakin (1966).

\subsection{Important quantities}\label{sec:important_quantities}
We define several important quantities that will be needed to describe our results and summarize the relevant relations between them. Let 
\begin{subequations}\label{eq:phAphB}
\begin{align}
   \phA &=\frac{\a_1}{\a_1-\a_2}, \label{eq:phA}\\
   \phB &=\frac{\be_1}{\be_1-\be_2}, \\
   \phFnull &=\frac{\a_1+\be_1}{\a_1+\be_1-(\a_2+\be_2)}, \\
   \phABt &= \frac{\a_1\be_1(\a_2-\be_2)}{\a_1\be_1(\a_2-\be_2)-\a_2\be_2(\a_1-\be_1)}, \\
   \phAB &= \frac{\a_1\be_1(\a_2+\be_2)}{\a_1\be_1(\a_2+\be_2)-\a_2\be_2(\a_1+\be_1)}, \\
   \phMo &= \frac{\a_1(\be_2+\rh)(\a_1+\be_1+\rh)}{\a_1(\be_2+\rh)(\a_1+\be_1+\rh)-\a_2(\be_1+\rh)(\a_2+\be_2+\rh)}, \label{eq:phMo}\\
   \phMot &= \frac{\be_1(\a_2+\rh)(\a_1+\be_1+\rh)}{\be_1(\a_2+\rh)(\a_1+\be_1+\rh)-\be_2(\a_1+\rh)(\a_2+\be_2+\rh)}, \label{eq:phMot}\\
   \phMf &= \frac{\be_1(\a_2-\rh)(\a_1+\be_1-\rh)}{\be_1(\a_2-\rh)(\a_1+\be_1-\rh)-\be_2(\a_1-\rh)(\a_2+\be_2-\rh)}, \label{eq:phMf}\\
   \phMft &= \frac{\a_1(\be_2-\rh)(\a_1+\be_1-\rh)}{\a_1(\be_2-\rh)(\a_1+\be_1-\rh)-\a_2(\be_1-\rh)(\a_2+\be_2-\rh)}, \label{eq:phMft}\\
   \phAFnull &= \frac{\a_1(\a_1+\be_1)(2\a_2+\be_2)}{\a_1(\a_1+\be_1)(2\a_2+\be_2)-\a_2(\a_2+\be_2)(2\a_1+\be_1)}, \\
   \phBFnull &= \frac{\be_1(\a_1+\be_1)(\a_2+2\be_2)}{\be_1(\a_1+\be_1)(\a_2+2\be_2)-\be_2(\a_2+\be_2)(\a_1+2\be_1)},
\end{align}
\end{subequations}
and
\begin{subequations}\label{eq:mAmB}
\begin{align}
    \mA &=\frac{\a_1\a_2}{\a_1-\phi(\a_1-\a_2)}, \label{mA}\\
    \mB &=\frac{\be_1\be_2}{\be_1-\phi(\be_1-\be_2)}, \label{mB}\\
%  \mFinf &=\min\{\mA,\mB\},\\
%  \minf &=\max\{\mA,\mB\},\\
    \mFnull &=\frac{(\a_1+\be_1)(\a_2+\be_2)}{\a_1+\be_1-\phi(\a_1+\be_1-\a_2-\be_2)},\\
    \mMo &= \frac{-(\a_1+\be_1+\rh)(\a_2+\be_2+\rh)}{\a_1+\be_1+\rh-\ph(\a_1+\be_1-\a_2-\be_2)}, \label{mMo}\\
    \mMf &= \frac{(\a_1+\be_1-\rh)(\a_2+\be_2-\rh)}{\a_1+\be_1-\rh-\ph(\a_1+\be_1-\a_2-\be_2)}, \label{mMf}\\ 
    \mast &=\frac{1}{\th}\sqrt{\frac{-\a_1\a_2\be_1\be_2(\a_1+\be_1)(\a_2+\be_2)}{\ph(1-\ph)}}.
\end{align}
\end{subequations}
We set $\mA=\infty$, $\mB=\infty$, and $\mFnull=\infty$ if $\ph=\phA$, $\ph=\phB$, and $\ph=\phFnull$, respectively. Similarly, we set $\mast=\infty$  if $\th=0$, $\ph=0$, or $\ph=1$.

The quantities $\mA$, $\mB$, and $\mFnull$ yield the bounds for the intervals of total migration rates $m$ in which the SLPs at $\A$, $\B$, and the polymorphic equilibrium $\Fnull$, respectively, are admissible:
\begin{subequations}\label{sita_bounds}
\begin{equation}
	-1 < \si_1+\si_2 < 1 \;\iff\; -1 < \frac{m}{\mA} < 1 \label{si12_mA_bound},
\end{equation}
\begin{equation}
	-1 < \ta_1+\ta_2 < 1 \;\iff\; -1 < \frac{m}{\mB} < 1 \label{ta12_mB_bound},
\end{equation}
\begin{equation}
	-1 < \ka_1+\ka_2 <1  \;\iff\; -1 < \frac{m}{\mFnull} < 1 \label{ka12_mFnull_bound}.
\end{equation}
\end{subequations}
Here, the left and the right inequalities correspond, and we have
\begin{subequations}\label{mAB_PhAB}
\begin{equation}\label{eq:mA_positiv}
	\mA>0 \;\iff\; \ph>\phA,
\end{equation}
\begin{equation}\label{eq:mB_positiv}
	\mB>0 \;\iff\; \ph>\phB,
\end{equation}
\begin{equation}\label{eq:mFnull_positiv}
	\mFnull>0 \;\iff\; \ph>\phFnull.
\end{equation}
\end{subequations}
From \eqref{eq:parameter}, we obtain 
\begin{equation}\label{eq:m_admiss}
	\mA\ne0, \quad \mB\ne0, \quad \mFnull\ne0, \quad \mast>0,
\end{equation}
\begin{equation}
	\a_2\le\mA\le\a_1, \quad \be_2\le\mB\le\be_1, \quad \a_2+\be_2\le\mFnull\le\a_1+\be_1.
\end{equation}

The quantities $\mMo$ and $\mMf$ occur in the stability conditions of the monomorphic equilibria $\M_1$ and $\M_4$ (Proposition \ref{prop_stab_ME_mphi}), and $\mast$ determines the range of stability of $\Fnull$; see \eqref{mmax_r0}. They satisfy
\begin{equation}
	-(\a_1+\be_1+\rh)\le\mMo\le-(\a_2+\be_2+\rh), \quad \a_2+\be_2-\rh\le\mMf\le\a_1+\be_1-\rh.
\end{equation}

We note that $\mA$, $\mB$, $\mFnull$, and $\mMf$ assume their minima if $\ph=0$ and their maxima if $\ph=1$, whereas $\mMo$ assumes its minimum or maximum at $\ph=1$ or $\ph=0$, respectively. $\mast$ is a convex function of $\ph$, and symmetric around its minimum $\ph=1/2$.

The definitions of (several of) the quantities $\ph^\X$ are motivated by the following relations:
\begin{subequations}\label{eq:m_equivalences}
\begin{align}
		\mA = \mB &\;\iff\;  \ph=\phABt \text{ and } \mA<0, \label{eq:mA_mB_2}\\
		\mA = -\mB &\;\iff\; \ph=\phAB \text{ and } \mA>0, \label{eq:mA_mB}\\
		\mMo = -\mA &\;\iff\;  \ph=\phMo \text{ and } \mA<0, \label{eq:ph_equal_phMo}\\
		\mMo = -\mB &\;\iff\;  \ph=\phMot \text{ and } \mB<0, \\
		\mMf = \mA &\;\iff\;  \ph=\phMft \text{ and } \mA>0, \\
		\mMf = \mB &\;\iff\;  \ph=\phMf \text{ and } \mB>0, \label{eq:ph_equal_phMf}\\
		\mFnull = -\mB &\;\iff\; \ph=\phBFnull \text{ and } \mB<0, \\
		-\mFnull = \mA &\;\iff\; \ph=\phAFnull \text{ and } \mA>0,
\end{align}
\end{subequations} 
where we have
\begin{equation} \label{eq:mFnull_mMf_mMo}
		\mFnull=\mMf= -\mMo \;\iff\; \rho=0.
\end{equation}
The following relations apply to $\mast$:
\begin{subequations}\label{eq:mast_equivalences}
\begin{align}
		\mA = -\mB = \mast &\;\iff\; \ph=\phAB, \label{eq:mast_mA_mB}\\
		\mA = \mFnull = \mast &\;\iff\; \ph=\phMf, \label{eq:mF_mA_mast}\\
		-\mB = -\mFnull =\mast &\;\iff\; \ph=\phMo,\label{eq:mF_mB_mast}
\end{align}
\end{subequations}
where we derived \eqref{eq:mF_mA_mast} and \eqref{eq:mF_mB_mast} from \eqref{eq:ph_equal_phMo} and \eqref{eq:ph_equal_phMf} using \eqref{eq:mFnull_mMf_mMo}. 

In the following, we summarize the most important inequalities between the quantities $\ph^\X$: 
\begin{equation}\label{phA<phB}
	0<\phA < \phAFnull < \phAB < \phBFnull < \phB  <1 ,
\end{equation}
\begin{equation}\label{phABt<phA}
	\be_2 < \a_2    \; \iff\; 0 < \phABt < \phA,
\end{equation}
\begin{equation}\label{phFnull_ph_A_B_Fnull}
	\phAFnull < \phFnull < \phBFnull.
\end{equation}
They can be derived straightforwardly from their definitions and our general assumption \eqref{eq:parameter}.
Finally, if $\rh=0$ and $	\tht=\a_1\a_2-\be_1\be_2$, the following relations hold:
\begin{subequations}\label{eq:ph_rel}
\begin{align}
    0<\phMo<\phA<\phAFnull<\phFnull\leq \phAB<\phBFnull<\phB<\phMf<1 & \;\iff\; \tht\leq 0,\label{eq:ph_rel_tht_leq_0}\\
    0<\phMo<\phA<\phAFnull<\phAB < \phFnull <\phBFnull<\phB<\phMf<1 & \;\iff\; \tht> 0, \label{eq:ph_rel_tht>0}
\end{align}
and
\begin{equation}
    \phABt<\phMo \;\text{ if }\;\be_2<\a_2.
\end{equation}
\end{subequations} 

Additional relations that are needed only in the proofs may be found in Appendix \ref{sec:important_quantities_cont}.

\subsection{Admissibility of SLPs}\label{sec:admissibility_SLP_m}
We begin by expressing the conditions for admissibility of the SLPs in terms of the total migration rate $m$ and
the migration ratio $\ph$. Since, by \eqref{si}, \eqref{ta}, and \eqref{sita_bounds}, every SLP is admissible if $m$ is sufficiently small and leaves the state space at a uniquely defined critical migration rate, it is sufficient to determine this critical rate and the monomorphism through which it leaves the state space. Using \eqref{mA}, \eqref{mB}, \eqref{mAB_PhAB}, and \eqref{sita_bounds}, we infer from \eqref{eq:ASLPbf} that
\begin{subequations}\label{SLPs_leave_state_space_m}
\begin{alignat}{2}
   &\ph<\phA \text{ and } m\uparrow -\mA 	 &\;\iff\; \PP_{\A,1} \to \M_1 \text{ and }   \PP_{\A,2} \to \M_2,\\
   &\ph>\phA \text{ and } m\uparrow \mA    &\;\iff\; \PP_{\A,1} \to \M_3 \text{ and }   \PP_{\A,2} \to \M_4, \\
   &\ph<\phB \text{ and } m\uparrow -\mB	 &\;\iff\; \PP_{\B,1} \to \M_1 \text{ and }   \PP_{\B,2} \to \M_3,\\
   &\ph>\phB \text{ and } m\uparrow \mB    &\;\iff\; \PP_{\B,1} \to \M_2 \text{ and }   \PP_{\B,2} \to \M_4.
\end{alignat}
\end{subequations}Ž
In particular, no SLP is admissible if
\begin{equation}\label{noSLP_admiss}
	 m>\max\{|\mA|,|\mB|\}.
\end{equation}

We observe that locus $\A$ is polymorphic and locus $\B$ is monomorphic if and only if 
\begin{equation}\label{eq:mB<m<mA}
 |\mB|<m<|\mA|.
\end{equation}
If $\be_2<\a_2$, we infer from \eqref{0<-mB<-mA} and  \eqref{0<-mB<mA} that \eqref{eq:mB<m<mA} holds if and only if
\begin{equation} \label{eq:phABt_ph_phAB}
	\phABt < \ph < \phAB.
\end{equation}
Therefore, \eqref{be2<a2} implies that if locus $\A$ is under weaker selection than locus $\B$ in both demes ($|\be_k|>|\a_k|$), then there is a range of values $\ph$  and $m$ such that locus $\A$ is polymorphic whereas $\B$ is monomorphic. This is in contrast to the CI model or highly asymmetric migration rates or $\th=0$, where it is always the locus under weaker selection that first loses its polymorphism while $m$ increases. This is a pure one-locus result and a consequence of the classical condition for a protected polymorphism, e.g., \eqref{si}. With two-way migration, a locus with alleles of small and similar (absolute) effects in the demes ($\a_1\approx-\a_2$) may be maintained polymorphic for higher migration rates than a locus with alleles of large and very different (absolute) effects.

\subsection{Stability of monomorphic equilibria}\label{sec:stab_mono_m}
Here, we reformulate the stability conditions of the ME derived in Section \ref{sec:stab_mono} in terms of $m$ and $\ph$. 

\begin{proposition}\label{prop_stab_ME_mphi}
$\M_1$ is asymptotically stable if
\begin{equation}\label{M1stab_mphi}
    \ph<\phA \text{ and } m > \max\{-\mA,-\mB , \mMo\}.
\end{equation}

$\M_2$ is always unstable.

$\M_3$ is asymptotically stable if
\begin{equation}\label{M3_stab_m_phi}
	\phA<\ph<\phB \text{ and } m>\max\{\mA,-\mB\}.
\end{equation}

$\M_4$ is asymptotically stable if
\begin{equation}\label{M4stab_mphi}
    \ph>\phB \text{ and } m > \max\{\mA,\mB ,\mMf\}.
\end{equation}

If in these conditions one inequality is reversed, the corresponding equilibrium is unstable.
\end{proposition}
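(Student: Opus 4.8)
The plan is to translate Proposition~\ref{prop_stab_ME} into the $(m,\ph)$-parametrization using the definitions \eqref{eq:si_ta_def}, \eqref{phi}, and the quantities introduced in Section~\ref{sec:important_quantities}. The key observation is the pair of equivalences \eqref{sita_bounds}: since $\si_1+\si_2 = m/\mA$ and $\ta_1+\ta_2 = m/\mB$ (which one verifies directly from $\si_k = m_k/\a_k = \ph_k m/\a_k$ with $\ph_1=\ph$, $\ph_2=1-\ph$, where the denominators $\mA$, $\mB$ are exactly the harmonic-type combinations in \eqref{mA}, \eqref{mB}), the marginal stability conditions become sign conditions on $m/\mA$ and $m/\mB$. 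Using \eqref{eq:mA_positiv} and \eqref{eq:mB_positiv}, the condition $\si_1+\si_2<-1$ (with $m>0$) forces $\mA<0$, i.e.\ $\ph<\phA$, and then reads $m>-\mA$; the condition $\si_1+\si_2>1$ forces $\mA>0$, i.e.\ $\ph>\phA$, and reads $m>\mA$; and similarly for $\ta_1+\ta_2$ versus $\mB$ and $\phB$. Combining these for each of $\M_1$, $\M_3$, $\M_4$ gives the ``$\ph$'' half of each displayed condition together with the ``$\max\{\cdot\}$'' over $\mA$-type and $\mB$-type terms, after noting $\phA<\phB$ from \eqref{phA<phB}.

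Next I would handle the recombination-dependent thresholds that appear for $\M_1$ and $\M_4$. The point is to show that the disjunction ``$\rh\ge\min\{-\a_2,-\be_2\}$ \emph{or} ($\rh<\min\{-\a_2,-\be_2\}$ and the $m_2$-inequality)'' collapses, once we are already on the curve of marginal stability, to the single clean condition $m>\mMo$ (resp.\ $m>\mMf$). For $\M_1$: rewrite the bound in \eqref{M1_r_klein,upperbound_orig} by substituting $m_1=\ph m$, $m_2=(1-\ph)m$ and solving the resulting linear inequality in $m$; the threshold one obtains is precisely $\mMo$ as defined in \eqref{mMo}. When $\rh\ge\min\{-\a_2,-\be_2\}$ one checks that $\mMo\le\max\{-\mA,-\mB\}$ in the relevant range of $\ph$ (using the bounds $-(\a_1+\be_1+\rh)\le\mMo\le-(\a_2+\be_2+\rh)$ and $\a_2\le\mA\le\a_1$, $\be_2\le\mB\le\be_1$), so the term $\mMo$ in the $\max$ is inactive and adding it changes nothing; thus the unconditional clause and the conditional clause both reduce to ``$m$ exceeds the stated maximum.'' The analogous computation for $\M_4$ uses \eqref{M4_r_klein,upperbound_orig} and \eqref{mMf}. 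The statement ``$\M_2$ always unstable'' is quoted verbatim from Proposition~\ref{prop_stab_ME} and needs nothing. The final sentence (reversed inequality $\Rightarrow$ instability) follows from the corresponding sentence in Proposition~\ref{prop_stab_ME} together with the same equivalences, since each reversal of a displayed inequality in $(m,\ph)$-form corresponds to a reversal of one of the inequalities in the original form.

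I expect the main obstacle to be the bookkeeping around the recombination thresholds --- specifically, verifying that $\mMo$ (resp.\ $\mMf$) is the correct root of the linearized $m_2$-inequality and that it can be harmlessly folded into the maximum in both the ``large $\rh$'' and ``small $\rh$'' cases without creating a gap or an overlap. One must be a little careful that the inequality \eqref{M1_r_klein,upperbound_orig} is of the form $m_2 > (\text{something})$ with a denominator $\a_1+\be_1+\rh>0$, so clearing it preserves the direction; and that on the marginal-stability curve the sign of $\a_2+\be_2+\rh$ (which flips exactly at $\rh=-(\a_2+\be_2)$, not at $\min\{-\a_2,-\be_2\}$) is compatible with the claimed reduction --- this is why the two-case split in the original proposition is by $\min\{-\a_2,-\be_2\}$ rather than by $-(\a_2+\be_2)$, and reconciling the two thresholds is the delicate point. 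Everything else is a routine substitution $m_k \mapsto \ph_k m$ plus invoking the ordering relations \eqref{phA<phB} and the displayed bounds on $\mA,\mB,\mMo,\mMf$.
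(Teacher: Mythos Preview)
Your plan is essentially the paper's: translate Proposition~\ref{prop_stab_ME} via \eqref{sita_bounds}, \eqref{eq:mA_positiv}, \eqref{eq:mB_positiv}, and the definition of $\mMo$ (resp.\ $\mMf$). The paper does only the $\M_1$ case and leaves the rest by symmetry, exactly as you propose.

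There is one point where the paper is cleaner than your outline and where your suggested argument has a gap. You propose to keep the disjunction from \eqref{M1_stab_rh} and, in the branch $\rh\ge\min\{-\a_2,-\be_2\}$, show $\mMo\le\max\{-\mA,-\mB\}$ using only the crude two-sided bounds on $\mMo,\mA,\mB$. Those bounds are not sharp enough: from $\mMo\le-(\a_2+\be_2+\rh)\le-\be_2$ (say, if $\rh\ge-\a_2$) and $-\mB\le-\be_2$ you cannot conclude $\mMo\le-\mB$. The paper avoids this entirely by not working with the disjunction at all. The analysis in Appendix~\ref{App_Proof_Prop_stabM1} shows that the eigenvalue condition coming from $t_3$ is \emph{always} \eqref{eq:M1_lambda3_negative}; the clause $\rh\ge\min\{-\a_2,-\be_2\}$ in Proposition~\ref{prop_stab_ME} is merely the observation that \eqref{eq:M1_lambda3_negative} is then implied by the marginal conditions. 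So the true stability condition for $\M_1$ is ``\eqref{M1_stab_marginal_orig} and \eqref{eq:M1_lambda3_negative}'', and one simply translates \eqref{eq:M1_lambda3_negative} into $m>\mMo$ by substituting $m_1=\ph m$, $m_2=(1-\ph)m$ and clearing the factor
\[
(1-\ph)(\a_1+\be_1+\rh)+\ph(\a_2+\be_2+\rh)=\a_1+\be_1+\rh-\ph(\a_1+\be_1-\a_2-\be_2),
\]
which is the denominator of $\mMo$ in \eqref{mMo}. The only care needed is that this denominator is positive on the relevant region $\ph<\phA$: it vanishes at $\ph=\phFnull$ when $\rh=0$ and moves further right as $\rh$ increases, and $\phA<\phFnull$ by \eqref{phFnull_ph_A_B_Fnull} together with \eqref{phA<phB}. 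This disposes of the ``delicate point'' you anticipated without any case split. The analogous remark applies to $\M_4$ with \eqref{M4_r_klein,upperbound_orig} and $\mMf$.
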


\begin{proof}
We prove only that the statement about $\M_1$ is equivalent to that in Proposition \ref{prop_stab_ME}. The others follow analogously or are immediate.

From Proposition \ref{prop_stab_ME} and \eqref{mA}, \eqref{mB}, \eqref{mMo}, and \eqref{sita_bounds}, we infer immediately that 
$\M_1$ is asymptotically stable if and only if 
\begin{equation}\label{eq:M1_stability_1}
      1<\frac{m}{-\mA} \text{ and } 1<\frac{m}{-\mB}
\end{equation}
and
\begin{equation}\label{eq:M1_stability_2}
      m > \mMo.
\end{equation}
The possible inequalities between $-\mA$, $-\mB$, and $\mMo$ are given in \eqref{eq:mMo>mAmB} and \eqref{eq:mMo<mAmB}. 
By \eqref{eq:mA_positiv}, \eqref{eq:mB_positiv}, and \eqref{phA<phB}, it follows that \eqref{eq:M1_stability_1} is feasible if and only if $\ph<\phA$. Thus if $\ph\geq \phA$, $\M_1$ is unstable. Therefore, \eqref{eq:M1_stability_1} and \eqref{eq:M1_stability_2} are equivalent to \eqref{M1stab_mphi}.
\end{proof}

\begin{remark}\label{remark_mono}\rm
(i) We have $\max\{-\mA,-\mB ,\mMo\}=\mMo$ in \eqref{M1stab_mphi}
if and only if
\begin{subequations}\label{eq:M1_mMo_max}
\begin{align}
		&\ph<\min\{\phMo,\phMot\} \text{ and } \rh < -\a_2 \text{ and } \be_2<\a_2, \label{eq:M1_mMo_max1} \text{ or } \\
		&\ph<\phMo \text{ and } \rh < -\be_2 \text{ and } \be_2\ge\a_2. 
%    \ph<\min\{\phMo, \phMot\} \text{ and } \rh<\min\{-\a_2,-\be_2\}.
\end{align}
\end{subequations}

(ii) We have  $\max\{\mA,\mB,\mMf\}=\mMf$ in \eqref{M4stab_mphi} if and only if
\begin{equation}\label{eq:M4_mMf_max}
    \ph>\phMf \text{ and } \rh<\a_1.
\end{equation}

(iii) An internal equilibrium in LD can leave or enter the state space through $\M_1$ or $\M_4$ only if 
$m=\mMo$ or $m=\mMf$, respectively. If \eqref{eq:M1_mMo_max} or \eqref{eq:M4_mMf_max} holds, then $\M_1$ or $\M_4$, respectively,
become asymptotically stable by the bifurcation.
\end{remark}

\begin{proof}[Proof of Remark \ref{remark_mono}]
If $\be_2\ge\a_2$, statement (i) is an immediate consequence of \eqref{eq:-mB<-mA<mMo} and \eqref{eq:-mA<-mB<mMo} because $\ph<\phA$
implies $0<\min\{-\mA,-\mB\}$. If $\be_2<\a_2$, then \eqref{eq:-mB<-mA<mMo}, \eqref{eq:-mA<-mB<mMo}, and \eqref{eq:-mA=-mB<mMo} show that $\max\{-\mA,-\mB , \mMo\}=\mMo$ if (a) $\rh<\rhMo$ \eqref{eq:ME_critical_rho} and $\phABt<\ph<\phMo$ or (b) $\rh<\rhMo$ and $\ph\le\phABt$ or (c) $\rhMo<\rh<-\a_2$ and $\ph<\phMot$, where $\rhMo<-\a_2$ by \eqref{rhMo<-a2}. Invoking \eqref{eq:phM_interval_2}, we can combine conditions (a), (b), and (c) to obtain \eqref{eq:M1_mMo_max1}.

Statement (ii) follows directly from \eqref{eq:infeasible_m} and \eqref{eq:0<mA<mB<mMf}. 

Statement (iii) follows by observing that only internal equilibria in LD will depend on $\rh$, the factor $t_3$ \eqref{M1_charpol} in the characteristic polynomial at $\M_1$ is the only one that depends on $\rh$, and $t_3$ gives rise to an eigenvalue zero if and only if $m=\mMo$. An analogous argument holds for $\M_4$.
\end{proof}

The asymmetry between \eqref{eq:M1_mMo_max} and \eqref{eq:M4_mMf_max} results from the fact that $\a_1<\be_1$ is assumed, whereas $\be_2<\a_2$ or $\be_2\ge\a_2$ is possible.
The reader may recall the comments made below Proposition \ref{prop_stab_ME}. In addition, we note that if the fitness parameters and $\rh$ and $\ph$ are fixed, a stable ME remains stable if $m$ is increased. This is not necessarily so if $m$ and $\ph$ are varied simultaneously. For related phenomena in the one-locus case, see Karlin (1982) and Nagylaki (2012).
In Section \ref{sec:strong_mig}, we will prove global convergence to one of the asymptotically stable ME if $m$ is sufficiently large.

\subsection{Weak migration}
We recall from Proposition \eqref{prop_weak_mig} that for sufficiently weak migration, there is a fully polymorphic equilibrium, it is globally asymptotically stable, and exhibits positive LD in both demes.

\subsection{Strong migration}\label{sec:strong_mig}
\begin{proposition}\label{prop_strong_mig}
For sufficiently large $m$, one of the monomorphic equilibria $\M_1$, $\M_3$, or $\M_4$ is globally attracting.
This equilibrium is $\M_1$, $\M_3$, or $\M_4$ if $\ph<\phA$, $\phA<\ph<\phB$, or $\phB<\ph$, respectively.
\end{proposition}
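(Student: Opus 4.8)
The plan is to treat this as a strong-migration (singular perturbation) problem: for large total migration rate $m$, fast homogenization collapses the six-dimensional dynamics \eqref{eq:dynamics} onto a three-dimensional slow manifold on which the flow is that of a single panmictic population under genic selection with spatially averaged effect sizes, and the assertion then follows from the classical one-population theory. First I would write $m_1=\ph m$, $m_2=(1-\ph)m$, rescale time by $s=mt$, and observe that to leading order as $m\to\infty$ the dynamics is pure migration, whose flow contracts $S_4\times S_4$ onto the diagonal $\{x_{i,1}=x_{i,2}\}$ (equivalently $p_1=p_2$, $q_1=q_2$, $D_1=D_2$) exponentially fast, with the single transverse eigenvalue $-m<0$; hence this diagonal is a normally hyperbolic, attracting slow manifold. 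The slow variables are the migration invariants $\bar p=(1-\ph)p_1+\ph p_2$, $\bar q=(1-\ph)q_1+\ph q_2$, $\bar D=(1-\ph)D_1+\ph D_2$, and averaging \eqref{eq:dynamics} (the migration terms cancel) shows that on the slow manifold, with
\[
   \bar\a:=(1-\ph)\a_1+\ph\a_2, \qquad \bar\be:=(1-\ph)\be_1+\ph\be_2,
\]
one has $\dot{\bar p}=\bar\a\,\bar p(1-\bar p)+\bar\be\,\bar D$, $\dot{\bar q}=\bar\be\,\bar q(1-\bar q)+\bar\a\,\bar D$, and $\dot{\bar D}=[\bar\a(1-2\bar p)+\bar\be(1-2\bar q)-\rh]\bar D$ --- exactly the two-locus genic-selection dynamics of one panmictic population with effect sizes $\bar\a,\bar\be$ and recombination rate $\rh$.

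Next I would analyze this reduced system. Since $\a_1>0>\a_2$ and $\be_1>0>\be_2$, the maps $\ph\mapsto\bar\a$ and $\ph\mapsto\bar\be$ are strictly decreasing and vanish precisely at $\ph=\phA$ and $\ph=\phB$; thus for $\ph\ne\phA,\phB$ both are nonzero, and since $\phA<\phB$ (cf.\ \eqref{phA<phB}) the sign pattern is $(\bar\a>0,\bar\be>0)$ iff $\ph<\phA$, $(\bar\a<0,\bar\be>0)$ iff $\phA<\ph<\phB$, and $(\bar\a<0,\bar\be<0)$ iff $\ph>\phB$; note $(\bar\a>0,\bar\be<0)$ is impossible. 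In a panmictic population with genic selection (no dominance, no epistasis) mean fitness is a strict Lyapunov function --- the Shahshahani/gradient structure already invoked in Section~\ref{ref:weak_mig} --- so there is no non-equilibrium recurrence and the haplotype with the largest Malthusian parameter, namely $A_1B_1$, $A_2B_1$, or $A_2B_2$ according to the three sign patterns, is the unique, hyperbolic, globally asymptotically stable equilibrium. These three states are precisely $\M_1$, $\M_3$, $\M_4$, and the impossible pattern explains why $\M_2$ never occurs here.

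Finally I would transfer the conclusion to the full system. Because $\M_1$, $\M_3$, $\M_4$ are equilibria of \eqref{eq:dynamics} for \emph{all} parameter values, no continuation argument is needed; only global attraction must be established. For this I would invoke a global strong-migration perturbation theorem (in the spirit of Nagylaki (1980) and its multilocus extensions, analogous to the global perturbation result of Nagylaki et al.\ (1999) used for strong recombination in Section~\ref{sec:QLE}): the limiting reduced flow is gradient-like with chain-recurrent set equal to finitely many hyperbolic equilibria, and the slow manifold is normally hyperbolic and attracting, so for all sufficiently large $m$ the full flow on $S_4\times S_4$ inherits the same qualitative picture and its unique, globally attracting equilibrium is the one among $\M_1,\M_3,\M_4$ selected above.

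\textbf{Main obstacle.} The delicate step is the last one: upgrading the local/pointwise singular-perturbation statement to a \emph{global} one on a state space with boundary. One must verify that no recurrence is created near $\partial(S_4\times S_4)$ for large $m$ and that the attractor of the reduced flow persists as the global attractor of the perturbed flow; the gradient-like structure of the limit (finitely many equilibria, no cycles, the selected monomorphism hyperbolic with a full-measure basin) is exactly what makes this work, but pinning it down cleanly --- or citing a theorem that packages precisely this strong-migration reduction for multilocus systems --- is the crux. One should also remark that the non-generic recombination values at which some reduced-system equilibria fail to be hyperbolic are harmless, since the selected monomorphism is always a strict mean-fitness maximum and hence hyperbolic regardless of $\rh$.
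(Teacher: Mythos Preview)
Your proposal is correct and follows essentially the same route as the paper: reduce to the strong-migration limit, obtain the panmictic two-locus genic-selection dynamics with the spatially averaged coefficients $\bar\a=(1-\ph)\a_1+\ph\a_2$ and $\bar\be=(1-\ph)\be_1+\ph\be_2$, use the gradient/Shahshahani structure to conclude that the fittest monomorphism is globally attracting in the limit, read off which one it is from the signs of $\bar\a,\bar\be$, and then transfer global attraction to large finite $m$ via a global perturbation result. The paper handles your ``main obstacle'' by invoking the ready-made strong-migration perturbation theorems of B\"urger (2009a), specifically Proposition~4.10 and the analog of Theorem~4.3(c), together with Lemma~2.2 of Nagylaki et al.\ (1999) for the gradient structure of the reduced system---exactly the kind of packaged result you anticipated needing.
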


\begin{proof}
The proof is based on the perturbation results about the strong-migration limit in Section 4.2 of B\"urger (2009a).
The strong-migration limit is obtained if $\max_{k=1,2}\{|\a_k|,|\be_k|,\rho \}/m \allowbreak\to 0$. In this limit, the demes become homogeneous and the system of differential equations \eqref{eq:dynamics} converges to a system, where in each deme
\begin{subequations}\label{eq:dynamics_strong_mig}
\begin{align}
    \dot{p} &=\a p(1-p)+\be D ,\\
    \dot{q} &=\be q(1-q)+\a D ,\\
    \dot{D} &=\left[\a(1-2p)+\be(1-2q)-\rh\right]D
\end{align}
\end{subequations}
holds with $p_1=p_2=p$, $q_1=q_2=q$, $D_1=D_2=D$. Here,
\begin{equation}\label{averaged fitnesses}
        \a = (1-\ph)\a_1+\ph\a_2 \;\text{ and }\; \be = (1-\ph)\be_1+\ph\be_2
\end{equation}
are the spatially averaged selection coefficients and averaging is performed with respect to the Perron-Frobenius eigenvector $(1-\ph,\ph)$ of the migration matrix (see Section 4.2 in B\"urger (2009a) for a much more general treatment starting with a multilocus model in discrete time). Therefore, Proposition 4.10 in B\"urger (2009a) applies and, provided $m$ is sufficiently large, all trajectories of \eqref{eq:dynamics} converge to a manifold on which the allele frequencies and the linkage disequilibria in both demes are nearly identical. In addition, in the neighborhood of each hyperbolic equilibrium of \eqref{eq:dynamics_strong_mig} there is exactly one equilibrium of \eqref{eq:dynamics}, and it has the same stability.

In the present case, the conclusion of Proposition 4.10 in B\"urger (2009a) can be considerably strengthened.
Because the system \eqref{eq:dynamics_strong_mig} describes evolution in an ordinary two-locus model under genic selection, the ME representing the gamete of highest fitness is globally asymptotically stable. In fact, \eqref{eq:dynamics_strong_mig} is also a generalized gradient system for which Lemma 2.2 of Nagylaki et al.\ (1999) holds. Therefore, the analog of statement (c) in Theorem 4.3 of B\"urger (2009a) applies and yields global convergence to the unique stable equilibrium.

Finally, it is an easy exercise to show that, in the strong-migration limit, i.e., with fitnesses averaged according to \eqref{averaged fitnesses}, gamete $A_1B_1$, $A_2B_1$, or $A_2B_2$ has highest fitness if
$\ph<\phA$, $\phA<\ph<\phB$, or $\ph>\phB$, respectively. Since there is no dominance, the corresponding ME is the unique
stable equilibrium.
\end{proof}

\subsection{Linkage equilibrium}
We shall establish all possible equilibrium configurations and their dependence on the parameters under LE. In Figure 2, the equilibrium configurations are displayed as schematic bifurcation diagrams with the total migration rate $m$ as the bifurcation parameter. In Theorem \ref{LE_theorem}, we assign to each diagram its pertinent parameter combinations.

In order to have only one bifurcation diagram covering cases that can be obtained from each other by simple symmetry considerations but are structurally equivalent otherwise, we use the sub- and superscripts $\X$ and $\Y$ in the labels of Figure 2. For an efficient presentation of the results, we define
\begin{equation} \label{eq:label_1a}
     \PP_\X =\PBo,\; \PP_\Y =\PAo,\; \M_\textsf{i}=\M_1,\;  m^\X=-\mB,\; m^\Y=-\mA, \tag{L1}
\end{equation}
\begin{equation} \label{eq:label_2a}
     \PP_\X =\PAo,\; \PP_\Y =\PBo,\; \M_\textsf{i}=\M_1,\; m^\X=-\mA,\; m^\Y=-\mB, \tag{L2}
\end{equation}
\begin{equation} \label{eq:label_3a}
     \PP_\X =\PAo,\; \PP_\Y =\PBt,\; \M_\textsf{i}=\M_3,\; m^\X=\mA,\; m^\Y=-\mB, \tag{L3}
\end{equation}
\begin{equation} \label{eq:label_4a}
     \PP_\X =\PBt,\; \PP_\Y =\PAo,\; \M_\textsf{i}=\M_3,\; m^\X=-\mB,\; m^\Y=\mA, \tag{L4}
\end{equation}
\begin{equation} \label{eq:label_5a}
    \PP_\X =\PBt,\; \PP_\Y =\PAt,\; \M_\textsf{i}=\M_4,\; m^\X=\mB,\; m^\Y=\mA. \tag{L5}
\end{equation}

\begin{figure}
 \centering
 \includegraphics[scale=0.8,keepaspectratio=true]{./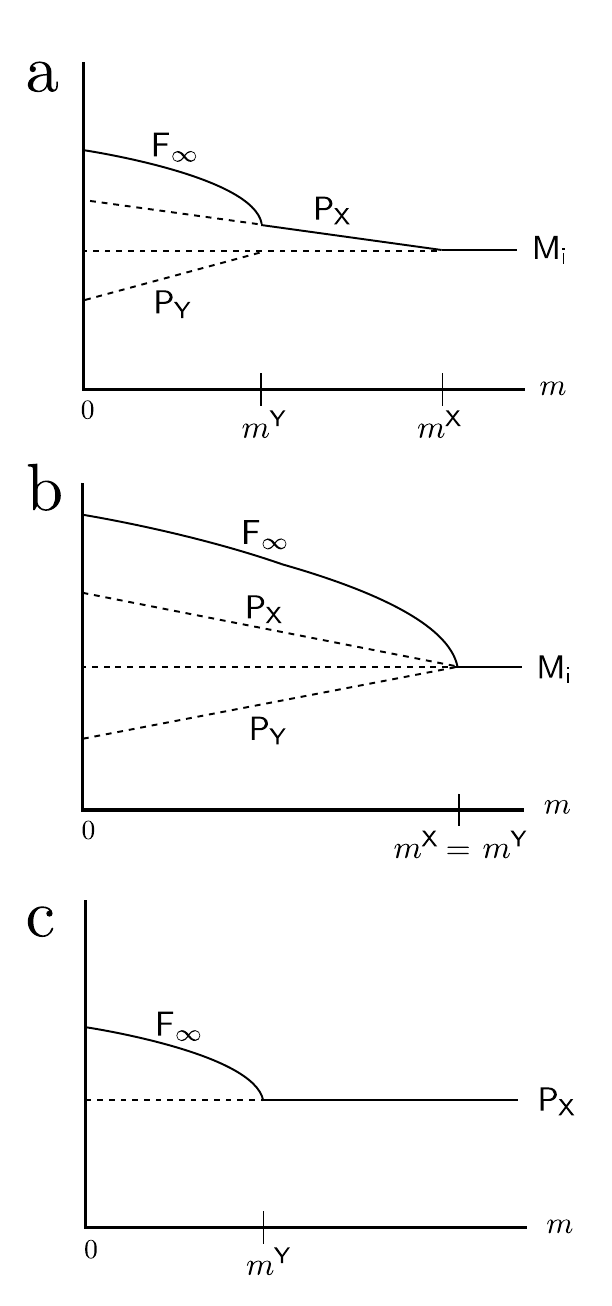}
 \caption{\small{
Bifurcation diagrams for LE. Diagrams (a)-(c) display the equilibrium configurations listed in Theorem \ref{LE_theorem}. Each line indicates one admissible equilibrium as a function of the total migration rate $m$. Only equilibria are shown that can be stable or are involved in a bifurcation with an equilibrium that can be stable. Lines are drawn such that intersections occur if and only if the corresponding equilibria collide. Solid lines represent asymptotically stable equilibria, dashed lines unstable equilibria. The meaning of the superscripts $\X$ and $\Y$ is given in \eqref{eq:label_1a} -- \eqref{eq:label_5a}}.}
 \label{fig:bf_LE}
\end{figure}

\begin{figure}
 \centering
 \includegraphics[scale=1,keepaspectratio=true,clip=true,trim=200pt 640pt 150pt 130pt]{./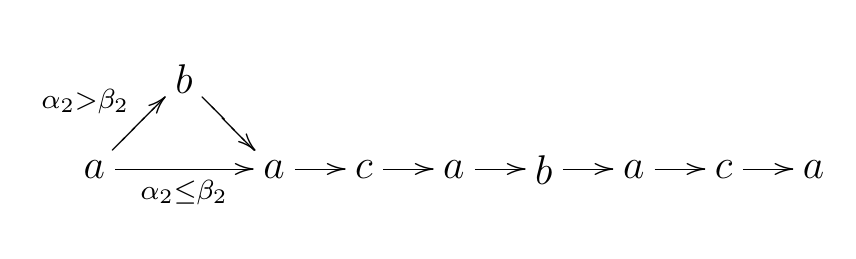}
 % 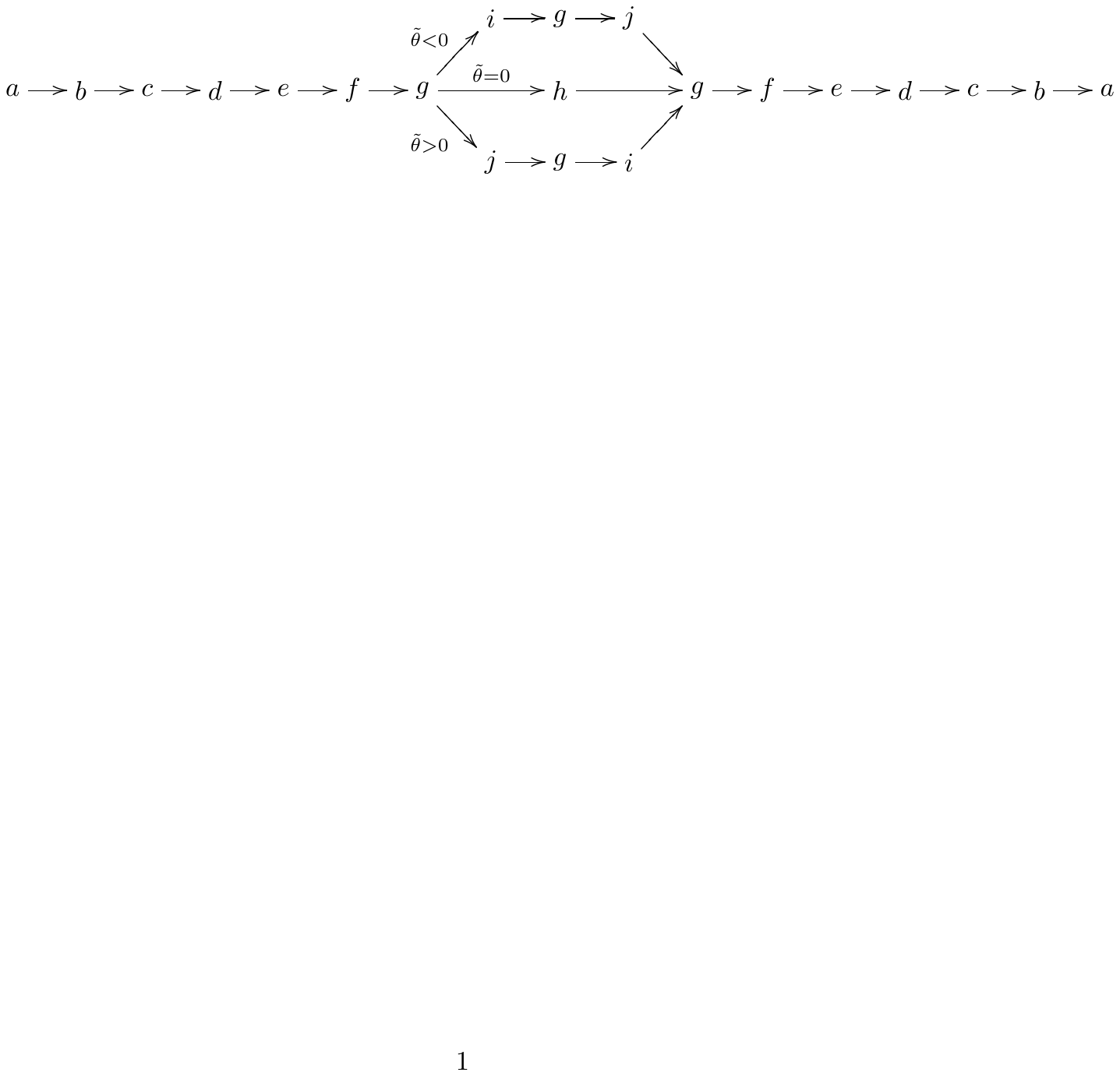: 792x612 pixel, 72dpi, 27.94x21.59 cm, bb=0 0 792 612
 \caption{\small{Order in which the bifurcation diagrams of Figure 2 occur as $\phi$ increases from 0 to 1.}}
 \label{fig:bf_order_LE}
\end{figure}

\begin{theorem}\label{LE_theorem}
Assume LE, i.e., \eqref{dynLE}. Figure 2 shows all possible bifurcation diagrams 
that involve bifurcations with equilibria that can be stable for some $m$ given the other parameters. 

\noindent
A. Diagram (a) in Figure 2 occurs generically. It occurs if and only if one of the following cases applies:
\begin{equation}\label{eq:bf_LE_1}
    \ph<\phABt \text{ and } \be_2<\a_2 \text{ and } \eqref{eq:label_1a}
\end{equation}
or
\begin{equation}\label{eq:bf_LE_2}
    \phABt<\ph<\phA \text{ and } \be_2<\a_2 \text{ and } \eqref{eq:label_2a}
\end{equation}
or
\begin{equation}\label{eq:bf_LE_3}
   \ph<\phA \text{ and } \a_2\leq \be_2 \text{ and } \eqref{eq:label_2a}
\end{equation}
or
\begin{equation}\label{eq:bf_LE_4}
   \phA<\ph<\phAB \text{ and } \eqref{eq:label_3a}
\end{equation}
or
\begin{equation}\label{eq:bf_LE_5}
   \phAB<\ph<\phB \text{ and } \eqref{eq:label_4a}
\end{equation}
or
\begin{equation}\label{eq:bf_LE_6}
   \phB<\ph \text{ and } \eqref{eq:label_5a}.
\end{equation}

\noindent
B. The following two diagrams occur only if the parameters satisfy particular relations.

\noindent
Diagram (b) in Figure 2 applies if one of the following two cases holds:
\begin{equation}\label{eq:bf_LE_7}
    \ph=\phABt \text{ and } \be_2<\a_2 \text{ and } \eqref{eq:label_1a}
\end{equation}
or
\begin{equation}\label{eq:bf_LE_8}
    \ph=\phAB \text{ and } \eqref{eq:label_4a}.
\end{equation}

\noindent
Diagram (c) in Figure 2 applies if one of the following two cases holds:
\begin{equation}\label{eq:bf_LE_9}
    \ph=\phA \text{ and } \PP_\X=\PAo \text{ and } m^\Y=-\mB
\end{equation}
or 
\begin{equation}\label{eq:bf_LE_10}
    \ph=\phB \text{ and } \PP_\X=\PBt \text{ and } m^\Y=\mA.
\end{equation}

\noindent
C. Figure 3 shows the order in which the bifurcation diagrams of Figure 2 arise if $\ph$ is increased from 0 to 1. 
\end{theorem}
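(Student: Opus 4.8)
The plan is to establish Theorem \ref{LE_theorem} by exploiting the Cartesian-product structure of the LE dynamics \eqref{dynLE}, which reduces everything to two decoupled one-locus systems whose behavior is completely understood via Section \ref{sec:boundary_equil} and Proposition \ref{prop_LE}. Concretely, I would first translate the admissibility and bifurcation thresholds for each SLP and for $\Finf$ into the $(m,\ph)$ parametrization using \eqref{sita_bounds}, \eqref{SLPs_leave_state_space_m}, and the relations \eqref{mA}--\eqref{mast}. The key observation is that, as $m$ increases from $0$ with $\ph$ fixed, locus $\A$ stays polymorphic exactly on $m<|\mA|$ and locus $\B$ exactly on $m<|\mB|$ (with the appropriate sign of $\mA,\mB$ selecting through which ME the SLP exits, per Proposition \ref{prop_stab_ME_mphi} and \eqref{SLPs_leave_state_space_m}); and by Proposition \ref{prop_LE} the internal equilibrium $\Finf$ is present precisely when both loci are polymorphic, i.e.\ on $m<\min\{|\mA|,|\mB|\}$, and it exits through whichever SLP survives longest. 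So the entire bifurcation diagram along a ray of fixed $\ph$ is determined by the ordered pair of thresholds $(|\mA|,|\mB|)$ together with which monomorphism each collision feeds into.

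Second, I would carve up the $\ph$-axis according to the signs of $\mA,\mB,\mFnull$, which by \eqref{mAB_PhAB} flip exactly at $\ph=\phA,\phB,\phFnull$, and — when relevant — according to the orderings $\mA=\pm\mB$, which by \eqref{eq:mA_mB_2}--\eqref{eq:mA_mB} occur exactly at $\ph=\phABt$ and $\ph=\phAB$. Using the master inequality chain \eqref{phA<phB} and the conditional \eqref{phABt<phA}, this partitions $(0,1)$ into the intervals appearing in cases \eqref{eq:bf_LE_1}--\eqref{eq:bf_LE_6}: for each such interval one checks which of $\M_1,\M_3,\M_4$ is the terminal stable ME (from Proposition \ref{prop_stab_ME_mphi}, noting that under LE the recombination-dependent refinements are vacuous since $\rho\to\infty$ forces $\rho\ge\min\{-\a_2,-\be_2\}$ etc.), and which SLP is $\PP_\X$ (the one with larger threshold, hence the one colliding with the ME) versus $\PP_\Y$ (the one with smaller threshold, colliding with $\Finf$). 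Matching the resulting sequence ``$\Finf$ until $m^\Y$, then $\PP_\Y$ disappears and $\PP_\X$ until $m^\X$, then $\M_\textsf{i}$'' against the labels \eqref{eq:label_1a}--\eqref{eq:label_5a} yields diagram (a) in each generic interval; this handles part A. For part B, the boundary values $\ph=\phABt$ and $\ph=\phAB$ give $|\mA|=|\mB|$, so $\Finf$ and both SLPs collide simultaneously (diagram (b)), while $\ph=\phA$ or $\ph=\phB$ sends one of $\mA,\mB$ to infinity, so one locus never loses its polymorphism and the corresponding SLP persists for all $m$ while $\Finf$ exits into it (diagram (c)); these are \eqref{eq:bf_LE_7}--\eqref{eq:bf_LE_10}. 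Part C is then just reading off, from \eqref{phA<phB} and \eqref{phABt<phA} (distinguishing the two cases $\be_2<\a_2$ and $\a_2\le\be_2$), the order in which these intervals and boundary points are swept as $\ph:0\to1$.

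The one technical point requiring care, and the likely main obstacle, is verifying that no \emph{other} equilibria or bifurcations intrude — in particular that the ME collisions of $\Finf$ alluded to in the sentence ``We leave the simple determination of the conditions for bifurcations of $\Finf$ with one of the ME to the interested reader'' never produce a diagram outside (a)--(c). Under LE this is clean: Proposition \ref{prop_LE} guarantees $\Finf$ is the \emph{only} internal equilibrium and that it leaves the state space generically through an SLP, with \eqref{eq:Einfinitybf_b} ruling out the $\PAt$ route via \eqref{cond_sita}; and any $\Finf$-ME collision would require $\Finf$ to coincide with a non-admissible SLP first, which the threshold bookkeeping excludes except at the measure-zero values already captured by diagrams (b),(c). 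Thus the argument is essentially a careful case enumeration over a partition of the $\ph$-interval, with the substantive content residing entirely in the inequalities \eqref{phA<phB}--\eqref{phFnull_ph_A_B_Fnull} and the sign/ordering equivalences \eqref{mAB_PhAB}, \eqref{eq:m_equivalences}; I would organize the write-up as a lemma fixing those orderings, followed by six short paragraphs (one per interval) and two short paragraphs for the boundary cases, then a one-line deduction of part C.
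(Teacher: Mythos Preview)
Your proposal is correct and follows essentially the same route as the paper: translate Proposition~\ref{prop_LE} and \eqref{eq:Einfinitybf} into the $(m,\ph)$-parametrization via \eqref{sita_bounds} and \eqref{SLPs_leave_state_space_m}, then partition the $\ph$-axis by the critical values $\phABt,\phA,\phAB,\phB$ using \eqref{phA<phB}--\eqref{phABt<phA} and the sign/ordering relations \eqref{mAB_PhAB}, \eqref{eq:m_equivalences}, and read off the label in each subinterval. Two small corrections: first, $\mFnull$ and $\phFnull$ play no role under LE and should be dropped from your partition; second, your description of $\PP_\Y$ as ``colliding with $\Finf$'' is inaccurate---$\Finf$ exits through $\PP_\X$ at $m=m^\Y$, while $\PP_\Y$ is the (unstable) SLP that simultaneously leaves through the ME at the same threshold, so make sure your write-up of diagram~(a) reflects this.
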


\begin{proof}
We prove parts A and B simultaneously, essentially by rewriting the conditions in Proposition \ref{prop_LE} on
admissibility and stability of the equilibria in terms of $m$, $\mA$, and $\mB$ \eqref{eq:mAmB}.

From \eqref{eq:Einfinitybf} and \eqref{sita_bounds}, we infer easily:
\begin{subequations}\label{eq:Finf_m_mA-MB}
\begin{align}
     \Finf \to \PAo & \;\iff\; m \uparrow -\mB \text{ and } 0<-\mB<|\mA| , \label{eq:Finf_m_a}\\
     \Finf \to \PAt  & \;\;\text{does not occur}, \label{eq:Finf_m_b}\\
     \Finf \to \PBo  & \;\iff\; m \uparrow -\mA \text{ and } 0<-\mA<-\mB , \label{eq:Finf_m_c}\\
     \Finf \to \PBt  & \;\iff\; m \uparrow \mA \text{ and } 0<\mA<|\mB| , \label{eq:Finf_m_d}\\
     \Finf \to \M_1  & \;\iff\; m \uparrow -\mA=-\mB \; \text{and } 0<-\mA=-\mB , \label{eq:Finf_m_e}\\
     \Finf \to \M_3  & \;\iff\; m \uparrow \mA=-\mB \;  \text{and } 0<\mA=-\mB , \label{eq:Finf_m_f}\\
     \Finf \to \M_2  & \;\;\text{ or } \Finf \to \M_4 \text{ do not occur.}
\end{align}
\end{subequations}
Invoking the relations \eqref{mAB_relations}, we can rewrite conditions \eqref{eq:Finf_m_a}, \eqref{eq:Finf_m_c}-\eqref{eq:Finf_m_f} in the form 
\begin{subequations}\label{eq:Finf_mPh}
\begin{alignat}{2}
     &\Finf \to \PAo  && \;\iff\; m \uparrow -\mB \text{ and either }  \nonumber \\
      &&&\qquad	\ph<\phAB \text{ if } \a_2\le\be_2 \;\text{ or }\; \phABt<\ph<\phAB \text{ if } \be_2<\a_2, \label{eq:Finf_mPh_a}\\
%     &\Finf \to \PAt  && \text{ does not occur}, \label{eq:Finf_mPh_b}\\
     &\Finf \to \PBo  && \;\iff\; m \uparrow -\mA \text{ and } \be_2<\a_2 \text{ and } \ph<\phABt, \label{eq:Finf_mPh_c}\\
     &\Finf \to \PBt  && \;\iff\; m \uparrow \mA \text{ and } \ph>\phAB, \label{eq:Finf_mPh_d} \\
     &\Finf \to \M_1  && \;\iff\; m \uparrow -\mA=-\mB \; \text{and } \be_2<\a_2 \text{ and } \ph=\phABt,\label{eq:Finf_mPh_e} \\
     &\Finf \to \M_3  && \;\iff\; m \uparrow \mA=-\mB \;  \text{and } \ph=\phAB. \label{eq:Finf_mPh_f} 
%\\  &\Finf \to \M_2  && \text{ or } \Finf \to \M_4 \text{ do not occur.} \label{eq:Finf_mPh_g}
 \end{alignat}
 \end{subequations}
We conclude immediately that \eqref{eq:Finf_mPh_c} applies in case \eqref{eq:bf_LE_1} (Part A), \eqref{eq:Finf_mPh_e} in case \eqref{eq:bf_LE_7} (Part B), and \eqref{eq:Finf_mPh_f} in case \eqref{eq:bf_LE_8} (Part B). From \eqref{phA<phB} and \eqref{phABt<phA} we conclude that \eqref{eq:Finf_mPh_a} applies in the following cases: \eqref{eq:bf_LE_2}-\eqref{eq:bf_LE_4} (Part A), or \eqref{eq:bf_LE_9} (Part B). Analogously we conclude that \eqref{eq:Finf_mPh_d} applies in the following cases: \eqref{eq:bf_LE_5}, \eqref{eq:bf_LE_6} (Part A), or \eqref{eq:bf_LE_10} (Part B).

From Proposition \ref{prop_LE} and \eqref{SLPs_leave_state_space_m} we obtain:
\begin{subequations}
\begin{align}
    \PAo  \text{ is globally asymptotically stable } &\;\iff\; -\mB<m<|\mA|,\\
    \PBo  \text{ is globally asymptotically stable } &\;\iff\; -\mA<m<-\mB,\\
    \PBt  \text{ is globally asymptotically stable } &\;\iff\; \mA<m<|\mB|.
\end{align}
\end{subequations}
As $m \to \max\{|\mA|,|\mB|\}$, the stable SLP leaves the state space according to \eqref{SLPs_leave_state_space_m}, which
gives precisely the cases corresponding to diagrams (a) and (c). If $\ph=\phA$ ($\mA=\infty$), $\PAo$ is always admissible, cf.\ \eqref{eq:bf_LE_9}. If $\ph=\phB$ ($\mB=\infty$), $\PBt$ is always admissible, cf.\ \eqref{eq:bf_LE_10}.
 
A ME is globally asymptotically stable  and only if 
\begin{equation}\label{mmaxMono_LE}
 	 m \ge \max\{|\mA|,|\mB|\}.
\end{equation}
By Proposition \ref{prop_stab_ME_mphi} and Remark \ref{remark_mono} this equilibrium is $\M_1$ if $\ph<\phA$ (cases \eqref{eq:bf_LE_1}-\eqref{eq:bf_LE_3}, \eqref{eq:bf_LE_7}), or $\M_3$ if $\phA<\ph<\phB$ (cases \eqref{eq:bf_LE_4}, \eqref{eq:bf_LE_5}, \eqref{eq:bf_LE_8}), or $\M_4$ if $\phB<\ph$ \eqref{eq:bf_LE_6}.
\end{proof}

The bifurcations of equilibria that cannot be stable can be derived easily from Sections \ref{sec:admissibility_SLP_m} and \ref{sec:stab_mono_m} and the above theorem by noting that these are boundary equilibria and corresponding pairs of SLPs are admissible for the same parameters; see \eqref{si} and \eqref{ta}. Inclusion of these bifurcations would require the introduction of subcases.

\begin{corollary}\label{corollary_mmaxinf}
Under the assumption of LE, the maximum migration rate, below which a stable two-locus polymorphism exists, is given by
\begin{equation}\label{minfty}
		\mmax^\infty = \min\{|\mA|,|\mB|\} .
\end{equation}
\end{corollary}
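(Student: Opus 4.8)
The plan is to read off Corollary~\ref{corollary_mmaxinf} directly from Theorem~\ref{LE_theorem} together with the bifurcation picture it establishes. The key structural fact, already contained in the theorem and its proof, is that under LE the full dynamics decouples into two one-locus systems, so the only internal equilibrium is $\Finf$, and it is admissible precisely when both \eqref{si} and \eqref{ta} hold, i.e.\ when $|\si_1+\si_2|<1$ and $|\ta_1+\ta_2|<1$. By \eqref{sita_bounds}, with $\ph$ and the selection coefficients fixed, these translate into $-1<m/\mA<1$ and $-1<m/\mB<1$, that is, $m<|\mA|$ and $m<|\mB|$ (recalling $\mA\ne0$, $\mB\ne0$ by \eqref{eq:m_admiss}, and that $m>0$). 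Hence $\Finf$ is admissible if and only if $m<\min\{|\mA|,|\mB|\}$.

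The second ingredient is that, by Proposition~\ref{prop_LE}, whenever $\Finf$ is admissible it is globally asymptotically stable, and when one of \eqref{si} or \eqref{ta} fails, the globally stable equilibrium is a boundary equilibrium (a SLP or ME), which is \emph{not} a two-locus polymorphism. Thus a stable two-locus polymorphism exists if and only if $\Finf$ is admissible, i.e.\ if and only if $m<\min\{|\mA|,|\mB|\}$. This gives $\mmax^\infty=\min\{|\mA|,|\mB|\}$ as claimed. One should also note the boundary behaviour: at $m=\min\{|\mA|,|\mB|\}$ the equilibrium $\Finf$ collides with a SLP (or, in the nongeneric cases $\ph=\phABt$ or $\ph=\phAB$, with $\M_1$ or $\M_3$) and leaves the state space, consistent with \eqref{eq:Finf_mPh} and the bifurcation diagrams of Figure~2; for $m$ strictly below this value the theorem guarantees $\Finf$ is present and stable, so the supremum is attained as a genuine maximum over the open interval.

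There is essentially no hard obstacle here: the corollary is a corollary in the literal sense, a bookkeeping consequence of Theorem~\ref{LE_theorem}. The one point requiring a line of care is the translation of the admissibility conditions \eqref{si}, \eqref{ta} into the bound $m<\min\{|\mA|,|\mB|\}$ uniformly in the sign of $\mA$ and $\mB$ (equivalently, whether $\ph$ lies above or below $\phA$ and $\phB$); this is handled by the equivalences in \eqref{sita_bounds}, where the left and right inequalities correspond, together with $m>0$, so that $|m/\mA|<1\iff m<|\mA|$ regardless of sign. The degenerate cases $\ph=\phA$ or $\ph=\phB$, where $\mA=\infty$ or $\mB=\infty$, are covered since then the corresponding SLP remains admissible for all $m$ and the bound is governed by the other locus, consistent with $\mmax^\infty=\min\{|\mA|,|\mB|\}$ read with $\infty$ allowed. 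No separate argument is needed beyond invoking Proposition~\ref{prop_LE} and Theorem~\ref{LE_theorem}.
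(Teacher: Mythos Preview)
Your proposal is correct and follows essentially the same approach as the paper, which states only that the corollary is a simple consequence of Proposition~\ref{prop_LE} and \eqref{eq:Finf_mPh}. Your argument spells out the translation via \eqref{sita_bounds} in slightly more detail than the paper does, but the logical route---admissibility of $\Finf$ iff both one-locus conditions hold, global stability whenever admissible, and identification of the threshold with $\min\{|\mA|,|\mB|\}$---is the same.
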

The corollary is a simple consequence of Proposition \ref{prop_LE} and \eqref{eq:Finf_mPh}.

\subsection{Strong recombination: quasi-linkage equilibrium}\label{QLE}
We recall from Section \ref{sec:QLE} that for sufficiently strong recombination, global convergence to the unique stable equilibrium occurs. From the coordinates \eqref{F_largerho} of the perturbed internal equilibrium, which is in quasi-linkage equilibrium, approximations could be derived for the critical migration rates at which the internal equilibrium collides with a boundary equilibrium and leaves the state space. It is not difficult to check with {\it Mathematica} that for large $\rh$, $\F$ collides with $\PBt$ if $m=\mmax^\rh(\PBt)+O(\rh^{-2})$, where
\begin{equation}\label{mmaxrh(PBt)}
	  \mmax^\rh(\PBt) = \mA 
	   - \frac{(\mA)^3}{\rh}\left[\frac{\be_1}{\a_1}\ph-\frac{\be_2}{\a_2}(1-\ph) \right] 
	  				\left[\frac{\ph}{\be_1}-\frac{1-\ph}{\be_2} - \sqrt{(\mA)^{-2}-\frac{4\ph(1-\ph)}{\be_1\be_2}} \right] .
\end{equation}
We note that $\mmax^\rh(\PBt)>0$ if and only if $\ph>\phAB$, as is expected from \eqref{eq:Finf_mPh_d}.
Closer examination of \eqref{mmaxrh(PBt)} reveals that both $\mmax^\rh(\PBt) > \mA$ and $\mmax^\rh(\PBt) < \mA$ may hold. 

Thus, the fully polymorphic equilibrium may be maintained for higher or lower migration rates than in the case of LE. This does not conform with the intuitive expectation that for reduced recombination, $\mmax^\rh(\PBt) > \mmax^\infty$ should hold because the locally adapted haplotypes ($A_kB_k$ in deme $k$) are less frequently broken apart. However, numerical evaluation of \eqref{mmaxrh(PBt)} shows that $\mmax^\rh(\PBt) < \mmax^\infty$ occurs only for about 3\% of the admissible parameter combinations and if it holds, $\mmax^\rh(\PBt)$ is only very slightly less than $\mmax^\infty$ (results not shown). If $\rh$ is about as large as the largest selection coefficient or smaller, $\mmax$ increases with decreasing $\rh$.
Expressions analogous to \eqref{mmaxrh(PBt)} can be obtained for collisions of $\F$ with the other equilibria.

\subsection{No recombination}\label{sec:bifs_norec}
Our aim is to establish all possible equilibrium configurations and their dependence on the parameters if recombination is absent. In Figure 4, the equilibrium configurations are displayed as schematic bifurcation diagrams with the total migration rate $m$ as the bifurcation parameter. In Theorem \ref{Mainresult_no_reco}, we assign to each diagram its pertinent parameter combinations.

\begin{figure}
 \centering
 \includegraphics[scale=0.8,keepaspectratio=true]{./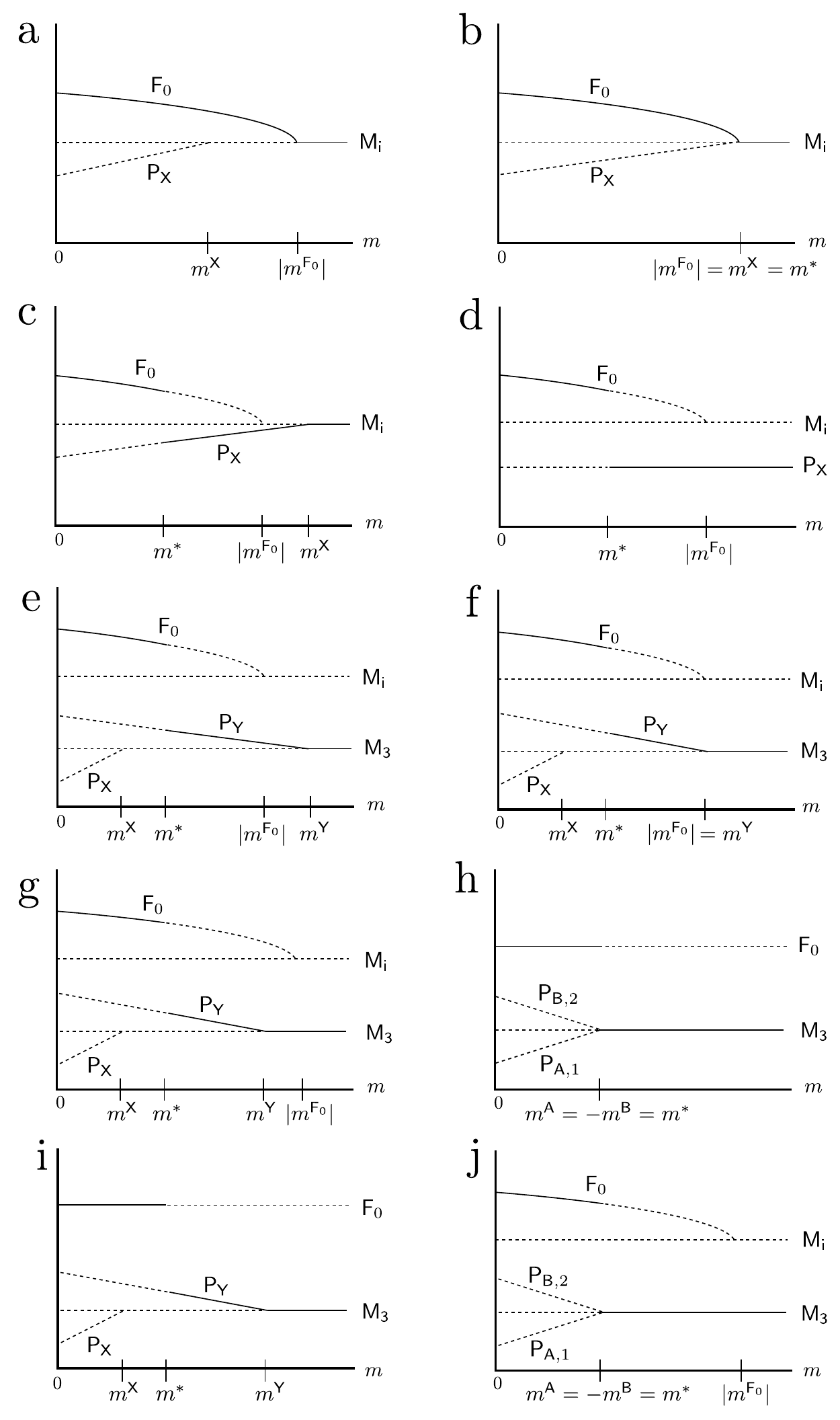}
 \caption{\small{
Bifurcation diagrams for $\rho=0$. Diagrams (a) -- (j) represent all equilibrium and stability configurations listed in Theorem \ref{Mainresult_no_reco}. Each diagram displays the possible equilibria as a function of the total migration rate $m$. Each line indicates one admissible equilibrium, drawn if and only if it is admissible. Only equilibria are shown that can be stable or are involved in a bifurcation with an equilibrium that can be stable. Lines are drawn such that intersections occur if and only if the corresponding equilibria collide. Solid lines represent asymptotically stable equilibria, dashed lines unstable equilibria.}}
 \label{fig:bf_no_reco}
\end{figure}

\begin{figure}
 \centering
 \includegraphics[scale=1,keepaspectratio=true,clip=true,trim=220pt 400pt 150pt 130pt]{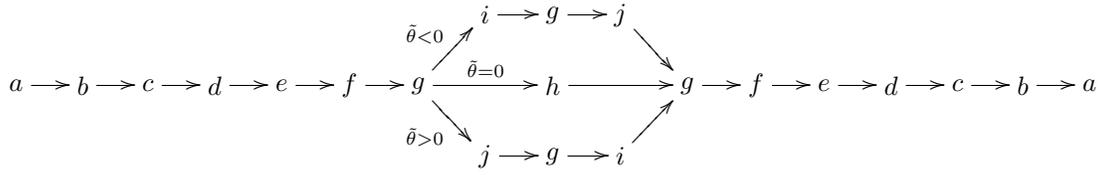}
 \caption{\small{Order in which the bifurcation diagrams of Figure 4 occur as $\phi$ increases from 0 to 1, where $\tilde{\theta}=\alpha_1\alpha_2-\beta_1\beta_2$.}}
 \label{fig:bf_order_no_reco}
\end{figure}

In order to have only one bifurcation diagram covering cases that can be obtained from each other by simple symmetry considerations but are structurally equivalent otherwise, we use the sub- and superscripts $\X$ and $\Y$ in the labels of Figure 4. For an efficient presentation of the results, we define
\begin{equation} \label{eq:label_1}
    \PP_\X =\PP_{\A,1} ,\; \PP_\Y = \PP_{\B,2},\; \M_\textsf{i} = \M_1,\; m^\X = |\mA|,\; m^\Y = |\mB|, \tag{R1}
\end{equation}
\begin{equation} \label{eq:label_2}
    \PP_\X =\PP_{\B,2} ,\; \PP_\Y = \PP_{\A,1},\; \M_\textsf{i} = \M_4,\; m^\X = |\mB|,\; m^\Y = |\mA|, \tag{R2}
\end{equation}
\begin{equation} \label{eq:label_3}
    \PP_\X =\PP_{\A,1} ,\; \PP_\Y = \PP_{\B,2},\; \M_{\textsf{i}}=\M_4, \; m^\X = |\mA|,\; m^\Y = |\mB|,  \tag{R3}
\end{equation}
\begin{equation} \label{eq:label_4}
     \PP_\X =\PP_{\B,2} ,\; \PP_\Y = \PP_{\A,1},\;\M_\textsf{i} = \M_1,\; m^\X = |\mB|,\; m^\Y = |\mA|, \tag{R4}
\end{equation}
\begin{equation} \label{eq:label_1'}
    \PP_\X =\PP_{\A,1} ,\; \M_\textsf{i} = \M_1,\; m^\X = |\mA|, \tag{R1'}
\end{equation}
\begin{equation} \label{eq:label_2'}
    \PP_\X =\PP_{\B,2} ,\; \M_\textsf{i} = \M_4,\; m^\X = |\mB|, \tag{R2'}
\end{equation}
\begin{equation} \label{eq:label_3'}
    \PP_\X =\PP_{\A,1} ,\; \PP_\Y = \PP_{\B,2},\; m^\X = |\mA|,\; m^\Y = |\mB|,  \tag{R3'}
\end{equation}
\begin{equation} \label{eq:label_4'}
     \PP_\X =\PP_{\B,2} ,\; \PP_\Y = \PP_{\A,1},\; m^\X = |\mB|,\; m^\Y = |\mA|. \tag{R4'}
\end{equation}

\begin{theorem} \label{Mainresult_no_reco}
Let $\rh=0$. Figure 4 shows all possible bifurcation diagrams 
that involve bifurcations with equilibria that can be stable for some $m$ given the other parameters.  

A. The following diagrams occur for an open set of parameters:

\begin{enumerate} 
\item
Diagram (a) in Figure 4 applies if one of the following two cases holds:
\begin{subequations}\label{eq:ph<phMo_ph>phMf}
\begin{equation}\label{eq:ph<phMo}
      0\leq \ph < \phMo  \text{ and }  \eqref{eq:label_1'}
\end{equation}
or
\begin{equation}\label{eq:phMf<ph}
      \phMf < \ph \leq 1  \text{ and }  \eqref{eq:label_2'}.
\end{equation}
\end{subequations}

\item
Diagram (c) in Figure 4 applies if one of the following two cases holds:
\begin{subequations}\label{eq:diagram_c}
\begin{equation}\label{eq:phMo<ph<phA}
    \phMo<\ph<\phA \text{ and } \eqref{eq:label_1'}
\end{equation}
or
\begin{equation}\label{eq:phB<ph<phMf}
    \phB<\ph<\phMf \text{ and } \eqref{eq:label_2'}.
\end{equation}
\end{subequations}

\item
Diagram (e) in Figure 4 applies if one of the following two cases holds:

\begin{subequations}\label{eq:diagram_e}
\begin{equation}\label{eq:phA<ph<phAFnull}
    \phA<\ph<\phAFnull \text{ and } \eqref{eq:label_4}
\end{equation}
or
\begin{equation}\label{eq:phBFnull<ph<phB}
    \phBFnull<\ph<\phB  \text{ and } \eqref{eq:label_3}.
\end{equation}
\end{subequations}

\item
Diagram (g) in Figure 4 applies if one of the following four cases holds:
\begin{subequations}\label{eq:diagram_g}
\begin{equation}\label{eq:phAFnull<ph<min_phFnull_phAB}
    \phAFnull<\ph<\min\{\phFnull,\phAB\} \text{ and } \eqref{eq:label_4}
\end{equation}
or
\begin{equation}\label{eq:phFnull<ph<phAB}
     \phFnull<\ph<\phAB \text{ and } \eqref{eq:label_2}
\end{equation}
or 
\begin{equation}\label{eq:phAB<ph<phFnull}
     \phAB<\ph<\phFnull \text{ and } \eqref{eq:label_1}
\end{equation}
or
\begin{equation}\label{eq:max_phFnull_phAB<ph<phBFnull}
     \max\{\phFnull,\phAB\}<\ph<\phBFnull \text{ and } \eqref{eq:label_3}.
\end{equation}
\end{subequations}

B. The following diagrams are degenerate, i.e., occur only if the parameters satisfy particular relations.

\item
Diagram (b) in Figure 4 applies if one of the following two cases holds:
\begin{subequations}\label{eq:ph=phMo_phMf}
\begin{equation}\label{eq:ph=phMo}
   \ph=\phMo \text{ and } \eqref{eq:label_1'}
\end{equation}
or
\begin{equation}\label{eq:ph=phMf}
   \ph=\phMf \text{ and } \eqref{eq:label_2'}.
\end{equation}
\end{subequations}

\item
Diagram (d) in Figure 4 applies if one of the following two cases holds:
\begin{subequations}\label{eq:ph=phA_phB}
\begin{equation}\label{eq:ph=phA}
   \ph=\phA \text{ and } \eqref{eq:label_1'}
\end{equation}
or
\begin{equation}\label{eq:ph=phB}
   \ph=\phB \text{ and } \eqref{eq:label_2'}.
\end{equation}
\end{subequations}

\item
Diagram (f) in Figure 4 applies if one of the following two cases holds:
\begin{subequations}\label{eq:ph=phAFnull_phBFnull}
\begin{equation}\label{eq:ph=phAFnull}
    \ph=\phAFnull \text{ and } \eqref{eq:label_2}
\end{equation}
or
\begin{equation}\label{eq:ph=phBFnull}
    \ph=\phBFnull \text{ and } \eqref{eq:label_1}.
\end{equation}
\end{subequations}

\item
Diagram (h) in Figure 4 applies if
\begin{equation}\label{eq:ph=phFnull=phAB}
    \ph=\phFnull=\phAB.
\end{equation}

\item
Diagram (i) in Figure 4 applies if one of the following two cases holds:
\begin{subequations}\label{eq:ph=phFnull}
\begin{equation}\label{eq:ph=phFnull>phAB}
    \ph=\phFnull > \phAB \text{ and } \eqref{eq:label_3'}.
\end{equation}
or
\begin{equation}\label{eq:ph=phFnull<phAB}
    \ph=\phFnull < \phAB \text{ and } \eqref{eq:label_4'}
\end{equation}
\end{subequations}

\item
Diagram (j) in Figure 4 applies if one of the following two cases holds:
\begin{subequations}\label{eq:ph=phAB}
\begin{equation}\label{eq:ph=phAB<phFnull}
    \ph=\phAB < \phFnull \text{ and } \M_\textit{i}=\M_1
\end{equation}
or
\begin{equation}\label{eq:ph=phAB>phFnull}
    \ph=\phAB > \phFnull \text{ and } \M_\textit{i}=\M_4.
\end{equation}
\end{subequations}
\end{enumerate}

C. Figure 5 shows the order in which the bifurcation diagrams of Figure 4 arise if $\ph$ is increased from 0 to 1. 
\end{theorem}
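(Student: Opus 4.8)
The plan is to translate the admissibility and stability conditions for the equilibria relevant to the case $\rh=0$, established in Section~\ref{sec:no_rec} (Proposition~\ref{prop_norec} together with the surrounding discussion), into conditions on the total migration rate $m$ with $\ph$ and the selection coefficients $\a_1,\a_2,\be_1,\be_2$ held fixed, and then to read off the bifurcation sequence as $m$ increases from $0$ to $\infty$ in each range of $\ph$. Cases related by the deme- and locus-exchange symmetries of Section~\ref{sec:model} are bundled through the shorthand \eqref{eq:label_1}--\eqref{eq:label_4'}, so it suffices to analyse each $\ph$-interval once.

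First I would record the reformulated conditions. Writing $m_1=\ph m$ and $m_2=(1-\ph)m$ and using \eqref{eq:si_ta_def}, \eqref{eq:kappa}, \eqref{eq:mAmB}: $\PP_{\A,1}$ is admissible $\iff -1<m/\mA<1$; $\PP_{\B,2}$ is admissible $\iff -1<m/\mB<1$; $\Fnull$ is admissible $\iff -1<m/\mFnull<1$; and---the crucial reformulation---the stability boundary $m_1m_2=\mtilde$ of $\Fnull$ in \eqref{stab_F0} becomes $\ph(1-\ph)m^2=\mtilde$, i.e.\ $m=\mast$, so $\Fnull$ is asymptotically stable $\iff m<\mast$. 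In the same way \eqref{PA1_stab_r0} and \eqref{PB2_stab_r0} say that $\PP_{\A,1}$ (resp.\ $\PP_{\B,2}$) is asymptotically stable $\iff$ it is admissible, the other SLP is inadmissible, and $m>\mast$; and Proposition~\ref{prop_stab_ME_mphi}, using $\mMo=-\mFnull$ and $\mMf=\mFnull$ at $\rh=0$ (from \eqref{eq:mFnull_mMf_mMo}), gives: $\M_1$ is asymptotically stable $\iff \ph<\phA$ and $m>\max\{-\mA,-\mB,-\mFnull\}$; $\M_3$ is asymptotically stable $\iff \phA<\ph<\phB$ and $m>\max\{\mA,-\mB\}$; $\M_4$ is asymptotically stable $\iff \ph>\phB$ and $m>\max\{\mA,\mB,\mFnull\}$.

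The heart of the argument is combinatorial. For each $\ph$-interval delimited by the thresholds ordered in \eqref{eq:ph_rel}---the ordering itself splitting according to the sign of $\tht=\a_1\a_2-\be_1\be_2$---and using \eqref{be2<a2}, \eqref{phABt<phA} and the sign rules \eqref{mAB_PhAB} to decide whether $\phABt$ intervenes, I would establish the order on the positive $m$-axis of the critical values $|\mA|$, $|\mB|$, $|\mFnull|$ and $\mast$. The coincidences $\mast=\mA=\mFnull\iff\ph=\phMf$, $\mast=-\mB=-\mFnull\iff\ph=\phMo$, and $\mast=\mA=-\mB\iff\ph=\phAB$ from \eqref{eq:mast_equivalences}, together with the monotonicity of $\mA,\mB,\mFnull$ in $\ph$ and the convexity of $\mast$ about $\ph=1/2$ (Section~\ref{sec:important_quantities}), pin down that order on every cell. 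Once it is known, the bifurcation sequence is forced: at $m=0$ the globally attracting equilibrium is $\Fnull=\F$ (Proposition~\ref{prop_weak_mig}); $\Fnull$ remains asymptotically stable while $m<\min\{\mast,|\mFnull|\}$; at $m=\min\{\mast,|\mFnull|\}$ it loses stability, either by leaving the state space through $\M_1$ or $\M_4$ (\eqref{eq:E0leavesstatespace}, when $|\mFnull|\le\mast$) or by an exchange of stability with whichever SLP is admissible (\eqref{PA1_stab_r0}, \eqref{PB2_stab_r0}, when $\mast<|\mFnull|$), or---when no SLP is admissible---with the tri-gametic line of \eqref{critmu12}, which is what produces the degenerate diagrams (h)--(j); a stable SLP, once it appears, persists until $m=|\mA|$ or $m=|\mB|$, where it collides with $\M_1$, $\M_3$ or $\M_4$ according to \eqref{SLPs_leave_state_space_m}; and for larger $m$ one monomorphic equilibrium is globally attracting (Proposition~\ref{prop_strong_mig}), namely $\M_1$, $\M_3$ or $\M_4$ according to whether $\ph<\phA$, $\phA<\ph<\phB$ or $\ph>\phB$. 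Matching the resulting picture, with the shorthand \eqref{eq:label_1}--\eqref{eq:label_4'}, to the diagrams (a)--(j) yields part A (the open parameter sets, where the relevant thresholds are strictly separated) and part B (the degenerate equalities, where two consecutive bifurcation points coalesce); part C then follows at once from the threshold ordering \eqref{eq:ph_rel}, which is exactly what Figure~5 records.

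The main obstacle is the bookkeeping in this combinatorial step: the subcases are numerous, distinguished by (i) the sign of $\be_2-\a_2$, which via \eqref{phABt<phA} governs whether $\phABt$ lies below $\phA$ and hence the relative order of $|\mA|$ and $|\mB|$; (ii) the sign of $\tht$, which reverses the relative order of $\phFnull$ and $\phAB$ in \eqref{eq:ph_rel}; and (iii) the position of $\ph$ among all of $\phMo,\phA,\phAFnull,\phAB,\phFnull,\phBFnull,\phB,\phMf$. In each cell one must verify that $|\mA|$, $|\mB|$, $|\mFnull|$ and $\mast$ occur in exactly the claimed order and that $\mast$ lies where required relative to the admissibility windows of the two SLPs and of $\Fnull$, so that the exchange-of-stability bifurcations and the boundary collisions occur in precisely the stated sequence and no extraneous equilibrium---the always-unstable full polymorphism on the $\M_2$--$\M_3$ edge, a point of the tri-gametic line away from $m=\mast$, or a merely admissible (but unstable) SLP---alters the diagram. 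The algebraic inequalities among the $\ph^\X$ and among $\mA,\mB,\mFnull,\mast$ that this requires are routine but lengthy; several are collected in Appendix~\ref{sec:important_quantities_cont}, and the rest follow from \eqref{eq:parameter} by direct computation.
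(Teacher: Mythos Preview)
Your proposal is correct and follows essentially the same route as the paper's proof: translate the admissibility and stability conditions from Section~\ref{sec:no_rec} into conditions on $m$ via the quantities $\mA,\mB,\mFnull,\mast$ (using in particular $m_1m_2=\mtilde\iff m=\mast$ and $\mMo=-\mFnull=-\mMf$ at $\rh=0$), then do the case analysis over the $\ph$-intervals determined by \eqref{eq:ph_rel}, invoking the inequalities collected in Appendix~\ref{sec:important_quantities_cont}. One small imprecision: your description of the degenerate diagrams (h)--(j) as arising ``when no SLP is admissible'' via the tri-gametic line is not quite right for (i), where an SLP does become stable after $\Fnull$ loses stability at $m=\mast$; the paper handles (h)--(j) by noting that $\ph=\phAB$ forces $\mA=-\mB=\mast$ (so the SLPs leave simultaneously with $\Fnull$ losing stability, and $\M_3$ inherits stability), while $\ph=\phFnull$ forces $|\mFnull|=\infty$ (so $\Fnull$ never leaves the state space).
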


\begin{proof}
We prove parts A and B simultaneously and derive the statements about admissibility and stability of the equilibria by rewriting the conditions in Section \ref{sec:no_rec} in terms of $m$, $\mA$, $\mB$, $\mFnull$, and $\mast$ \eqref{eq:mAmB}. These critical migration rates satisfy the relations given in \eqref{eq:m_equivalences}, \eqref{eq:mast_equivalences}, \eqref{eq:mast_mFnull_relation} and \eqref{eq:crit_m_rel_r0}.

We start by treating the bifurcations and stability of $\Fnull$.
Using \eqref{eq:mFnull_positiv} and \eqref{ka12_mFnull_bound}, we infer from \eqref{eq:E0leavesstatespace}:
\begin{subequations}\label{eq:Fnull_leaves_state_space_m}
\begin{align}
    \Fnull \to \M_1 &\;\iff\; m\uparrow -\mFnull \;\text{ and }\; \ph<\phFnull, \label{eq:Fnull_leaves_state_space_m_case1}\\
    \Fnull \to \M_4 &\;\iff\; m\uparrow \mFnull \;\text{ and }\; \ph>\phFnull.  \label{eq:Fnull_leaves_state_space_m_case2}
\end{align}
\end{subequations}
From \eqref{eq:ph_rel} we conclude that \eqref{eq:Fnull_leaves_state_space_m_case1} applies precisely in the following cases:  \eqref{eq:ph<phMo}, \eqref{eq:phMo<ph<phA}, \eqref{eq:phA<ph<phAFnull}, \eqref{eq:phAFnull<ph<min_phFnull_phAB}, \eqref{eq:phAB<ph<phFnull}  (Part A), or \eqref{eq:ph=phMo}, \eqref{eq:ph=phA}, \eqref{eq:ph=phAFnull}, \eqref{eq:ph=phAB<phFnull} (Part B). Similarly, \eqref{eq:Fnull_leaves_state_space_m_case2} applies in precisely the following cases: \eqref{eq:phMf<ph}, \eqref{eq:phB<ph<phMf}, \eqref{eq:phBFnull<ph<phB}, \eqref{eq:phFnull<ph<phAB}, \eqref{eq:max_phFnull_phAB<ph<phBFnull} (Part A) or \eqref{eq:ph=phMf}, \eqref{eq:ph=phB}, \eqref{eq:ph=phBFnull}, \eqref{eq:ph=phAB>phFnull} (Part B). $\Fnull$ is admissible for every $m>0$ if and only if $\ph=\phFnull$, which corresponds to the remaining three cases \eqref{eq:ph=phFnull=phAB} and \eqref{eq:ph=phFnull<phAB}, \eqref{eq:ph=phFnull>phAB}.

Condition \eqref{critmu12}, which determines when $\Fnull$ changes stability, is equivalent to $m=\mast$. Therefore, Proposition \ref{prop_norec} and the definitions of $\mFnull$ and $\mast$ imply that $\Fnull$ is asymptotically stable if and only if either
\begin{subequations}\label{mmax_r0}
\begin{align}
 		&0 < m < |\mFnull| \le \mast  \label{eq:Fnull_bf_with_ME}\\
 		\intertext{or}
		&0 < m < \mast < |\mFnull|   \label{eq:Fnull_bf_with_SLP}
\end{align}
\end{subequations}
holds, where 
\begin{subequations}\label{eq:mmax_r0_rel_ph}
\begin{align}
	  0<|\mFnull|\le\mast & \;\iff\; \ph\le\phMo \;\text{ or }\; \ph\ge\phMf , \label{eq:Fnull_no_bf} \\
	  0<\mast<|\mFnull| & \;\iff\; \phMo<\ph<\phMf . \label{eq:Fnull_bf}
\end{align}
\end{subequations}

If \eqref{eq:Fnull_bf_with_ME} applies, according to \eqref{eq:Fnull_leaves_state_space_m}, $\Fnull$ leaves the state space at $m=-\mFnull$ or $m=\mFnull$ and exchanges stability with the respective monomorphism. By \eqref{eq:Fnull_no_bf}, this occurs in the cases \eqref{eq:ph<phMo_ph>phMf} or \eqref{eq:ph=phMo_phMf} of the theorem. 

If \eqref{eq:Fnull_bf_with_SLP} applies, $\Fnull$ loses stability at $m=\mast$ and, generically, either $\PP_{\A,1}$ or $\PP_{\B,2}$ is asymptotically stable if $m>\mast$ (see below). $\Fnull$ remains admissible up to $m=|\mFnull|$, when it collides with $\M_1$ or
$\M_4$. By \eqref{eq:ph_rel} and \eqref{eq:Fnull_bf}, this occurs in the cases \eqref{eq:diagram_c} -- \eqref{eq:diagram_g}, \eqref{eq:ph=phA_phB}, \eqref{eq:ph=phAFnull_phBFnull}, \eqref{eq:ph=phFnull}, and \eqref{eq:ph=phAB}. 

Finally, if $\ph=\phAB$ (cases \eqref{eq:ph=phFnull=phAB} and \eqref{eq:ph=phAB} in the theorem), $\M_3$ becomes stable. This follows from the statement below \eqref{stabF0_sigtau} together with \eqref{eq:mast_mA_mB}.

Next, we treat the bifurcations of the SLPs. The SLPs are admissible in intervals of the form $0<m<|\mA|$ or $0<m<|\mB|$ and leave the state space upon collision with a ME (Section \ref{sec:admissibility_SLP_m}).
From \eqref{PA1_stab_r0} we conclude by simple calculations that $\PP_{\A,1}$ is asymptotically stable if and only if
\begin{equation}
    \mast<m<|\mA| \;\text{ and }\; \phMo<\ph<\phAB,
\end{equation}
as is the case in \eqref{eq:phMo<ph<phA}, \eqref{eq:phA<ph<phAFnull}, \eqref{eq:phAFnull<ph<min_phFnull_phAB} (if $\min\{\phFnull, \phAB\}=\phAB$), and \eqref{eq:phFnull<ph<phAB}, as well as in \eqref{eq:ph=phA}, \eqref{eq:ph=phAFnull}, and \eqref{eq:ph=phFnull<phAB}. 

From \eqref{PB2_stab_r0}, we conclude that $\PP_{\B,2}$ is asymptotically stable if and only if
\begin{equation}
    \mast<m<|\mB| \;\text{ and }\; \phAB<\ph<\phMf,
\end{equation}
as is the case in \eqref{eq:phB<ph<phMf}, \eqref{eq:phBFnull<ph<phB}, \eqref{eq:phAB<ph<phFnull}, and \eqref{eq:max_phFnull_phAB<ph<phBFnull} (if $\max\{\phFnull, \phAB\}=\phAB$), as well as in \eqref{eq:ph=phB}, \eqref{eq:ph=phBFnull}, and \eqref{eq:ph=phFnull>phAB}. 

It remains to study the stability of the ME. For $\rh=0$, we infer from Section \ref{sec:important_quantities} and Proposition \ref{prop_stab_ME_mphi}:
\begin{subequations}
\begin{align}
    \M_1 \;\text{ is asymptotically stable} &\;\iff\; \begin{cases}
							  m>-\mFnull \;\text{ and }\; \ph<\phMo, \text{ or}\\
							  m>-\mA  \;\text{ and }\; \phMo\leq \ph<\phA,
                                                         \end{cases} \label{eq:M1_r0}\\
    \M_3 \;\text{ is asymptotically stable} &\;\iff\; m>\max\{|\mA|,|\mB|\} \;\text{ and }\;\phA<\ph<\phB, \label{eq:M3_r0}\\
    \M_4 \;\text{ is asymptotically stable} &\;\iff\; \begin{cases}
							  m>\mB \;\text{ and }\; \phB<\ph\leq \phMo, \text{ or}\\
							  m>\mFnull \;\text{ and }\; \phMf<\ph.
                                                         \end{cases}\label{eq:M4_r0}
\end{align}
\end{subequations}
In conjunction with the above results on $\Fnull$ and the SLPs, this shows that, except in the degenerate cases \eqref{eq:ph=phMo_phMf}, \eqref{eq:ph=phA_phB}, \eqref{eq:ph=phFnull=phAB}, and \eqref{eq:ph=phAB},
a ME becomes stable through a transcritical bifurcation with either $\Fnull$, $\PP_{\A,1}$, or $\PP_{\B,2}$.
In particular, $\M_1$ becomes asymptotically stable for large $m$ if \eqref{eq:ph<phMo}, \eqref{eq:phMo<ph<phA}, or \eqref{eq:ph=phMo} applies,  $\M_3$ becomes asymptotically stable if one of \eqref{eq:diagram_e}, \eqref{eq:diagram_g}, \eqref{eq:ph=phAFnull_phBFnull}, \eqref{eq:ph=phFnull=phAB}, \eqref{eq:ph=phFnull}, or \eqref{eq:ph=phAB} applies, and $\M_4$ becomes asymptotically stable if \eqref{eq:phMf<ph}, \eqref{eq:phB<ph<phMf}, or \eqref{eq:ph=phMf} applies. 
If $\ph=\phA$ or $\ph=\phB$ \eqref{eq:ph=phA_phB}, then $\PP_{\A,1}$ or $\PP_{\B,2}$, respectively, is admissible and asymptotically stable for every $m$, and every ME is unstable. This finishes the proof of parts A and B.

Part C of Theorem \ref{Mainresult_no_reco} follows immediately from parts A and B by applying the relations in \eqref{eq:ph_rel}.
\end{proof}

This theorem demonstrates that, for given selection parameters, the equilibrium structure, hence also the evolutionary dynamics, depends strongly on the degree $\ph$ of asymmetry of the migration rates. However, it is also important to note (and maybe counter intuitive) that for symmetric migration ($\ph=1/2$) any of the ten possible bifurcation diagrams may apply, simply by choosing the selection parameters accordingly. 

The bifurcations of equilibria that cannot be stable can be derived easily from Sections \ref{sec:admissibility_SLP_m}, \ref{sec:stab_mono_m}, \ref{sec:no_rec}, and the above theorem by noting that these are boundary equilibria and corresponding pairs of SLPs are admissible for the same parameters. Inclusion of these bifurcations would require the introduction of subcases. In particular, $\PAt$, $\PBo$, and $\M_2$ are always unstable because gamete $A_1B_2$ is eventually lost. We observe from \eqref{mmaxinf<mmast} and \eqref{mmaxinf<mFnull} that at most one pair of SLPs can be admissible if $\Fnull$ is either unstable or not admissible. If this is the case, then one of these SLPs is asymptotically stable (Figure 4).

\begin{corollary}\label{corollary_mmax0}
If $\rh=0$, the maximum migration rate, below which a stable two-locus polymorphism exists, is given by
\begin{equation}\label{m0}
		\mmax^0 = \min\{|\mFnull|,\mast\} .
\end{equation}
\end{corollary}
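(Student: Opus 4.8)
The plan is to read the corollary off from Proposition~\ref{prop_norec} together with the stability characterization of $\Fnull$ already obtained, in particular \eqref{mmax_r0} in the proof of Theorem~\ref{Mainresult_no_reco}. The decisive observation is that, since $\rh=0$, a stable two-locus polymorphism can only be the equilibrium $\Fnull$: by Proposition~\ref{prop_norec}, $\Fnull$ is the unique equilibrium at which both loci are polymorphic, except in the nongeneric case $m_1m_2=\mtilde$ (equivalently $m=\mast$), where a non-isolated line of equilibria with three gametes appears; this degenerate situation occurs only at a single value of $m$ and can be disregarded. Hence it suffices to determine for which $m$ the equilibrium $\Fnull$ is simultaneously admissible and asymptotically stable.

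First I would rewrite the admissibility condition $|\ka_1+\ka_2|<1$ in terms of $m$. By \eqref{ka12_mFnull_bound} and \eqref{eq:mFnull_positiv} this is equivalent to $0<m<|\mFnull|$, with the convention $|\mFnull|=\infty$ when $\ph=\phFnull$, in which case $\Fnull$ is admissible for all $m>0$. Next I would rewrite the stability condition: Proposition~\ref{prop_norec} gives that $\Fnull$ is asymptotically stable if and only if $m_1m_2<\mtilde$. Writing $m_1=\ph m$ and $m_2=(1-\ph)m$, using that $\mtilde>0$ by \eqref{eq:parameter}, and noting that $(\mast)^2=\mtilde/[\ph(1-\ph)]$ by the definition of $\mast$ in \eqref{eq:mAmB}, the inequality $m_1m_2<\mtilde$ becomes $0<m<\mast$ (with $\mast=\infty$ when $\th=0$ or $\ph\in\{0,1\}$, in which case there is no stability restriction). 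Conjoining the two conditions yields that $\Fnull$ is admissible and asymptotically stable exactly for $0<m<\min\{|\mFnull|,\mast\}$, which is \eqref{m0}.

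The only point requiring care — the ``obstacle'', such as it is — is checking that no stable fully polymorphic equilibrium survives past this threshold. If $|\mFnull|\le\mast$, then at $m=|\mFnull|$ the equilibrium $\Fnull$ leaves the state space through $\M_1$ or $\M_4$ by \eqref{eq:Fnull_leaves_state_space_m}, so for $m\ge|\mFnull|$ there is no internal equilibrium at all (Proposition~\ref{prop_norec}); if $\mast<|\mFnull|$, then at $m=\mast$ the equilibrium $\Fnull$ loses stability and, by Theorem~\ref{Mainresult_no_reco}, the stable equilibrium becomes one of $\PAo$, $\PBt$, or a monomorphism. In both cases there is no stable two-locus polymorphism for $m\ge\min\{|\mFnull|,\mast\}$, and the degenerate line of equilibria occurring at the single value $m=\mast$ does not affect the statement. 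Thus $\mmax^0=\min\{|\mFnull|,\mast\}$, and the corollary follows directly from Proposition~\ref{prop_norec} and \eqref{mmax_r0}.
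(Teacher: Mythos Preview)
Your proof is correct and follows essentially the same approach as the paper: the paper's proof is the one-line remark that the corollary follows from the arguments surrounding \eqref{mmax_r0} and \eqref{eq:mmax_r0_rel_ph}, and you have simply made those arguments explicit by translating the admissibility condition $|\ka_1+\ka_2|<1$ into $m<|\mFnull|$ via \eqref{ka12_mFnull_bound} and the stability condition $m_1m_2<\mtilde$ into $m<\mast$ via the definition of $\mast$. Your additional paragraph verifying that no stable two-locus polymorphism persists beyond the threshold is a welcome elaboration but not a different method.
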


The corollary follows from the arguments surrounding \eqref{mmax_r0} and \eqref{eq:mmax_r0_rel_ph}.

\subsection{Weak recombination}\label{sec:bifs_weakrec}
If $m\ne\mast$, a regular perturbation analysis of $\Fnull$ yields the coordinates of a fully polymorphic (internal) equilibrium to leading order in $\rh$. This equilibrium, $\F$, is asymptotically stable (Karlin and McGregor 1972). We denote the first-order approximation of $\F$ by $\Frho$. Therefore, we have 
$\F=\Frho+o(\rh)$ and $\Frho=\Fnull+O(\rh)$ as $\rh\to0$. Because the coordinates of $\Frho$ are much too complicated to be informative, we refrain from presenting them. 

For sufficiently small $\rh$, the following properties of $\Frho$ (hence, of $\F$) can be inferred from Proposition \ref{prop_stab_ME_mphi}, Remark \ref{remark_mono}, and Theorem \ref{Mainresult_no_reco}, Part A.1:
\begin{subequations}\label{eq:Frho_bf}
\begin{align}
    \Frho \to \M_1 	& \;\iff\; m\uparrow \mMo \text{ and } \ph \leq \phMo, \\
    \Frho \to \M_4	& \;\iff\; m\uparrow \mMf \text{ and } \phMf \leq \ph.
\end{align}
\end{subequations}

The above perturbation analysis can not be used to investigate the properties of the internal equilibrium $\F$ for given small positive $\rh$ when $m$ is varied in the proximity of $\mast$. Therefore, we performed numerical calculations to study the fate of $\F$ when $\rh$ is small and fixed, and $m$ increases. It suggests the following:
\begin{subequations}\label{eq:Frho_bf2}
\begin{align}
    \F \to \PP_{\A,1} & \;\iff\; m\uparrow \mast_\A \text{ and } \phMo<\ph<\phAB, \\
    \F \to \M_3	& \;\iff\; m\uparrow \mast=\mA=-\mB \text{ and } \ph=\phAB, \\
    \F \to \PP_{\B,2} & \;\iff\; m\uparrow \mast_\B \text{ and } \phAB<\ph<\phMf,
\end{align}
\end{subequations}
where $\mast_\A$ and $\mast_\B$ are close to $\mast$. Thus, if $\rh$ is small, $\F$ stays close to $\Fnull$ as $m$ increases from 0 until a value close to $\mast$ is reached. Then, within a very short interval of $m$, $\F$ moves `quickly' along the manifold given by \eqref{manif_p2q2} and \eqref{manif} to one of the boundary equilibria ($\PAo$, $\PBt$, or $\M_3$) on the `opposite' side of the state space, where it exchanges stability upon collision with the respective equilibrium (at $\mast_\A$, $\mast_\B$, or $\mast$). $\F$ appears to be asymptotically stable whenever it is
admissible.
 
If one of the cases in \eqref{eq:Frho_bf} applies, then $\Fnull$ can be maintained for higher migration rates than $\F$ because $\mMo$ and $\mMf$ are decreasing functions in $\rh$. Numerical investigations support the conjecture that $\Fnull$ can be maintained for higher migration rates than $\F$ whenever recombination is weak but positive. Thus, when recombination is weak, decreasing $\rh$ increases the maximum migration rate below which a stable, fully polymorphic equilibrium can be maintained.

\subsection{Highly asymmetric migration}\label{sec:highly_asym}
As already discussed in Section \ref{Section_asym}, by introducing weak back migration (i.e., $\ph$ close to 0 or 1) to the CI model,  every equilibrium in the CI model gives rise to a unique equilibrium in a small neighborhood. This (perturbed) equilibrium has the same stability as the unperturbed. For weak or strong recombination, we can strengthen this conclusion. Because the CI model with $\rh=0$ is a generalized gradient system (B\"urger and Akerman 2011, Section 3.4.4) and the LE dynamics \eqref{dynLE} has a globally asymptotically stable equilibrium (Theorem \ref{LE_theorem}), the proof of Theorem 2.3 of Nagylaki et al.\ (1999) applies and shows that in both cases the global dynamics remains qualitatively unchanged under small perturbations. In particular, no new equilibria or limit sets are generated by a small perturbation. 

Therefore, if $\rh$ is sufficiently small and $\ph$ is sufficiently close to 0 or 1, we infer from Section \ref{Section_asym} and Theorem 2 in B\"urger and Akerman (2011) that the following bifurcation pattern applies (where $i=1$ or 4):
\begin{itemize} 
\item If $0<m<m^{\M_i}$, a unique internal equilibrium,  $\F$, exists. It is globally asymptotically stable.
\item At $m=m^{\M_i}$, $\F$ leaves the state space through the ME $\M_i$ by an exchange-of-stability bifurcation. 
\item If $m>m^{\M_i}$, $\M_i$ is globally asymptotically stable.
\end{itemize}
This pattern is displayed in diagram (a) of Figure 4, where $\Fnull$ needs to be substituted by $\F$. 
We conjecture that it applies whenever $\rh$ is sufficiently small and either $\ph<\phMo$ or $\ph>\phMf$ holds.
The bounds $\phMo$ and $\phMf$ follow from Remark \ref{remark_mono} because $\phMot$ is not needed if $\rh$ is sufficiently small; see \eqref{eq:-mA<-mB<mMo}. However, the upper bounds for $\rh$ given in Remark \ref{remark_mono} are, in general, too large to guarantee the above bifurcation pattern. This is known from the CI model in which the monomorphic equilibrium ($\M_i$) may be simultaneously stable with the internal equilibrium $\F$ because an unstable internal equilibrium enters the state space at $m=m^{\M_i}$ through $\M_i$. If $\ph=1$, this may occur if $\tfrac13(\a_1+\be_1) < \rh < 3\a_1-\be_1$, cf.\ \eqref{mmax_CIb}.
For $\ph\ne0$ or $\ph\ne1$, we have not been able to determine the upper bound for $\rh$ below which $\F$ indeed leaves the state space through $\M_i$.

Now we treat large $\rh$. Proposition \ref{prop_stab_ME_mphi} and Remark \ref{remark_mono} show that if
$\rh>\max\{-\a_2,-\be_2\}$, then $\M_1$ is asymptotically stable if and only if $\ph<\phA$ and $m>\max\{-\mA,-\mB\}$, and if $\rh>\a_1$, then $\M_4$ is asymptotically stable if and only if $\ph>\phB$ and $m>\max\{\mA,\mB\}$. 

If, in addition to $\rh$ being sufficiently large, $\ph$ is small or large, then Theorem \ref{LE_theorem} implies that the internal equilibrium ($\F$) leaves the state space through $\PAo$, $\PBo$, or $\PBt$. The respective conditions are small perturbations of those
given in \eqref{eq:Finf_mPh_a}, \eqref{eq:Finf_mPh_c}, of \eqref{eq:Finf_mPh_d}, respectively. Combining theses conditions with those for the stability of the ME and observing \eqref{phA<phB} and \eqref{phABt<phA}, we conclude that the following bifurcation 
pattern applies if one of the conditions (a) $\a_2\le\be_2$ and $\ph<\phA$, or (b) $\be_2<\a_2$ and $\ph<\phABt$, or (c) $\ph>\phB$ holds approximately:
\begin{itemize} 
\item If $0<m<m^\bullet$, a unique internal equilibrium, $\F$, exists. It is asymptotically stable.
\item At $m=m^\bullet$, $\F$ leaves the state space through a SLP by an exchange-of-stability bifurcation. 
\item If $m^\bullet<m<m^{\bullet\bullet}$, this SLP is asymptotically stable.
\item If $m\ge m^{\bullet\bullet}$, then a ME is asymptotically stable.
\end{itemize}
If (a) holds, then $m^\bullet\approx-\mB$ and the SLP and the ME are $\PAo$ and $\M_1$, respectively; if (b) holds, then $m^\bullet\approx-\mA$ and the SLP and the ME are $\PBo$ and $\M_1$, respectively;
if (c) holds, then $m^\bullet\approx\mA$ and the SLP and the ME are $\PBt$ and $\M_4$, respectively. Finally, $m^{\bullet\bullet}=\max\{-\mA,-\mB\}$ in (a) and (b), and $m^{\bullet\bullet}=\max\{\mA,\mB\}$ in (c).

\subsection{Maintenance of polymorphism}\label{sec:general_case}
As already noted in Section \ref{Section_general}, for general parameters the equilibrium configurations could not be determined analytically. To explore the potential of spatially heterogeneous selection in maintaining genetic variation in the presence of gene flow, we investigate the maximum total migration rate, $\mmax$, that admits a stable, fully polymorphic equilibrium. We have already shown that $\mmax=\mmax^\infty$ holds in the LE approximation (Corollary \ref{corollary_mmaxinf}), and $\mmax=\mmax^0$ holds if $\rh=0$ (Corollary \ref{corollary_mmax0}).
%Because $(\si_1+\si_2)^2<1$ and $(\ta_1+\ta_2)^2<1$ imply $(\ka_1+\ka_2)^2<1$, application of \eqref{sita_bounds} shows that
%$m<|\mA|$ and $m<|\mB|$ imply $m<|\mFnull|$, cf.\ \eqref{eq:crit_m_rel_r0}. Therefore, 
From \eqref{mmaxinf<mmast} and \eqref{mmaxinf<mFnull} we conclude that
\begin{equation}\label{minf<m0}
	\mmax^\infty \le \mmax^0,
\end{equation}
where, as is not difficult to show, equality holds if and only if $\ph=\phAB$.

For the CI model with $\ph=1$, Proposition 1 in B\"urger and Akerman (2011) yields
\begin{subequations}\label{mmax_CI} 
\begin{numcases}
 {\mmax=}
			\a_1 + \be_1 - \rh 		&if  $0< \rh \le \min\{\a_1,\tfrac13(\a_1+\be_1)\}$, \label{mmax_CIa}\\
			\dfrac{(\a_1+\be_1+\rh)^2}{8\rh} &if  $\tfrac13(\a_1+\be_1) < \rh \le 3\a_1-\be_1$, \label{mmax_CIb}\\
			\a_1\left(1+\dfrac{\be_1-\a_1}{\rh}\right) &if  $\max\{\a_1,3\a_1-\be_1\} < \rh$.  \label{mmax_CIc}
\end{numcases}
\end{subequations}
In this case, the fully polymorphic equilibrium is globally asymptotically stable if \eqref{mmax_CIa} or \eqref{mmax_CIc} apply, but only locally stable if \eqref{mmax_CIb} and $m$ is close to $\mmax$. A formula analogous to \eqref{mmax_CI}, but with $-\a_2$ and $-\be_2$ instead of $\a_1$ and $\be_1$, holds if $\ph=0$. 

In general, we have no explicit formula for $\mmax$. However, extensive numerical work, as well as \eqref{mmax_CI} and the considerations in Section \ref{sec:bifs_weakrec} suggest that 
\begin{equation}\label{mmax_bounds}
		\mmax \le \mmax^0
\end{equation}
holds always. This is illustrated by Figure 6, which displays the dependence of $\mmax$ on the migration ratio $\ph$ (Figures 6a and 6c) and on the recombination rate $\rh$ (Figures 6b and 6d) for two selection regimes. In Figures 6a and 6b, locus $\B$ is under stronger selection in both demes. In Figures 6c and 6d, each locus is under stronger selection in one deme. 

\begin{figure}[t]
 \centering
 \includegraphics[scale=0.55,keepaspectratio=true]{./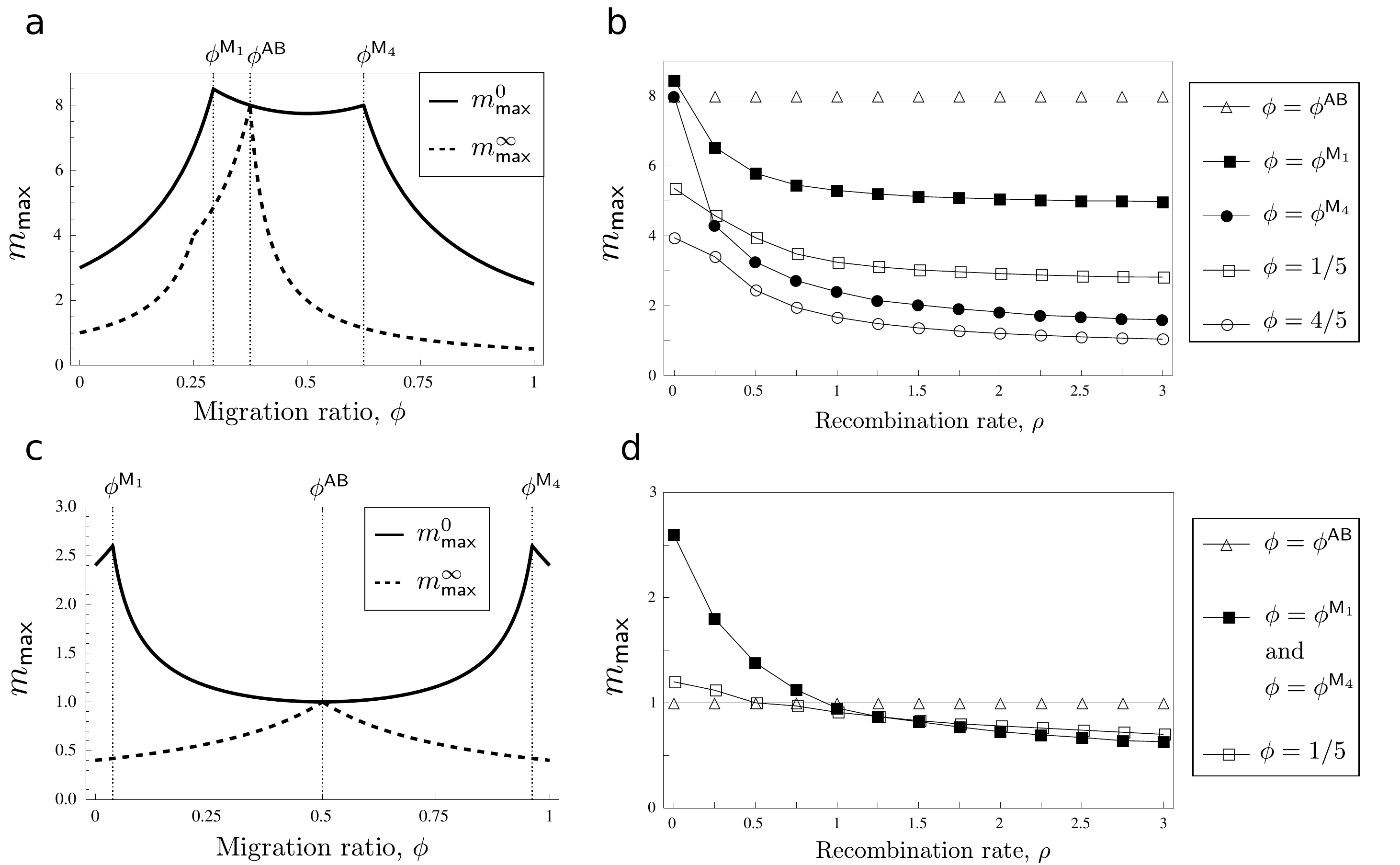}
 \caption{\small{The maximum amount of gene flow, $\mmax$, admitting an asymptotically stable two-locus polymorphism as a function of $\ph$ or $\rh$. In panels a and b, locus $\B$ is under stronger selection than locus $\A$ in both demes ($\alpha_2=-2\alpha_1=-1$, $\beta_1=-\beta_2=2$, $\theta=1$). In c and d, different loci are under stronger selection in the two demes ($\alpha_1=-\beta_2=0.4$, $\beta_1=-\alpha_2=2$, $\theta=3.84$). Panels a and c show $\mmax$ as a function of $\phi$ for complete linkage ($\mmax^0$, \eqref{m0}) and under linkage equilibrium ($\mmax^\infty$, \eqref{minfty}). Panels b and d display $\mmax$ for the indicated values of $\phi$ as a function of $\rho$. Here, $\mmax$ is obtained by determining numerically the critical migration rate when the stable internal equilibrium hits the boundary. This is done by computing when the leading eigenvalue at the boundary equilibrium is zero and by calculating the coordinates of the fully polymorphic equilibrium in a small neighborhood. 
In a and b, we have $\phABt=\tfrac14$ (indicated by the kink in the dashed line in a), $\phMo=\tfrac{5}{17}$, $\phA =\tfrac13$,
$\phAB=\tfrac38$, $\phB=\tfrac12$, $\phMf=\tfrac58$. In c and d, we have $\phMo=\tfrac{1}{26}$, $\phA=\tfrac16$, $\phAB=\tfrac12$, $\phB=\tfrac56$, $\phMf=\tfrac{25}{26}$. }}
 \label{fig:max_gene_flow}
\end{figure}
In Figures 6a and 6c, $\mmax^\infty$ and $\mmax^0$ are shown as functions of $\ph$. The inequality \eqref{minf<m0}
is a conspicuous feature in both cases. Also the shapes of $\mmax^\infty$ and $\mmax^0$ are conspicuous. The following properties are easy to prove: $\mmax^\infty$ is not differentiable at $\ph=\phAB$ and $\ph=\phABt$, and $\mmax^0$ is not differentiable at $\ph=\phMo$ and $\ph=\phMf$. $\mmax^\infty$ and $\mmax^0$ are piecewise convex functions in $\ph$.
If $\ph<\phMo$, $\mmax^0$ increases in $\ph$; if $\phMf<\ph$, $\mmax^0$ decreases in $\ph$; if $\phMo<\ph<\phMf$, $\mmax^0$ assumes its minimum at $\tfrac12$ provided $\phMo\le\tfrac12\le\phMf$. Therefore, $\mmax^0$ attains its maximum at $\phMo$ or $\phMf$. $\mmax^\infty$ increases if $\ph<\phAB$ and decreases if $\ph>\phAB$. It assumes its maximum at $\phAB$. 

Notably, $\mmax^\infty=\mmax^0$ holds if $\ph=\phAB$. Numerical work suggests that indeed $\mmax=\mmax^\infty=\mmax^0$ holds independently of $\rh$ if $\ph=\phAB$. If $\th=0$, then $\phA=\phB=\phAB=\phFnull$ and $\mmax^\infty=\mmax^0$ if $\ph=\phAB$. The latter condition is equivalent to \eqref{eq:th0_order_3}. Therefore, the analysis in Section \ref{sec:super_symm} applies and shows that an internal equilibrium, which presumably is globally asymptotically stable, exists always.

Figures 6b and 6d illustrate the effect of recombination on $\mmax$ for different values of $\ph$. In all cases investigated,
$\mmax$ decreased monotonically with increasing $\rh$. These findings support the conjecture that \eqref{mmax_bounds} is always valid. Therefore, $\mmax^0-\mmax^\infty$ serves as a useful estimate for the sensitivity of $\mmax$ to variation in $\rh$.  
We can prove that $\mmax^0-\mmax^\infty$ is maximized at $\ph=\phMo$, or at $\ph=\phMf$, or at $\ph=0$ if $\be_2<\a_2$.

Although we proved that $\mmax<\mmax^\infty$ can occur (Section \ref{QLE}), all numerical examples showed that $\mmax$ is only very slightly smaller than $\mmax^\infty$ in this case (results not shown). 
Therefore, our results suggest that the cases of LE (infinitely strong recombination) and of no recombination `essentially' bracket the range of parameters for which both loci can be maintained polymorphic. 

As Figures 6a and 6c show, the range of values $\ph$ for which the equilibrium structure can be expected to be similar to the CI model, i.e., $\ph<\min\{\phMo,\phABt\}$ or $\ph>\phMf$ (Section \ref{sec:highly_asym}), can vary considerably. 

Finally, we infer from Proposition \ref{prop_stab_ME_mphi} that none of the ME is stable if $m<\max\{|\mA|,|\mB|\}$. Hence, in this case at least one locus is maintained polymorphic. By contrast, we have shown in Section \ref{sec:admissibility_SLP_m} that no SLP is admissible if $m>\max\{|\mA|,|\mB|\}$. However, as demonstrated by our results for $\rh=0$, an internal equilibrium may be asymptotically stable if $\max\{|\mA|,|\mB|\}<m<\mmax^0$. These results suggest that no genetic variability can be maintained if
\begin{equation}
	m > \max\{|\mA|,|\mB|,\mmax^0\}.
\end{equation}
This bound is best possible if $\rh=0$. For sufficiently large $\rh$, the corresponding bound is $\max\{|\mA|,|\mB|\}$.

%Sections 5+6, CTTIM

\section{Migration load and local adaptation}\label{sec:load}
Here, we briefly investigate some properties of the migration load of the subpopulations and of the total population. We use these migration loads as simple measures for local adaptation (but see Blanquart et al.\ 2012). 
Mean fitness in deme $k$ is given by $\bar{w}_k=\a_k(2p_k-1)+\be_k(2q_k-1)$, with its maximum at $\a_k+\be_k$. Therefore, the migration loads in demes 1 and 2, defined as the deviation of $\bar w_k$ from its maximum, are given by
\begin{equation}
	L_1 = 2(\a_1(1-p_1)+\be_1(1-q_1)) \quad\text{and}\quad L_2 = 2(-\a_2p_2-\be_2q_2).
\end{equation}
Assuming that the subpopulations are of equal size, we define the load of the total population by $L=\tfrac12(L_1+L_2)$.

If migration is weak, we can calculate the migration load in each deme at the fully polymorphic equilibrium $\F$ (Proposition \ref{prop_weak_mig}) to leading order in $m_1$ and $m_2$. For deme 1, we obtain
\begin{equation}\label{eq:meanfit_weakmig}
  L_1 \approx 2m_1\frac{\a_1+\be_1+2\rh}{\a_1+\be_1+\rh},
\end{equation}
and an analogous formula holds for deme 2. Obviously, the migration load increases with increasing migration rates $m_1$ or $m_2$, hence with $m$, in each of the demes and in the total population. Simple calculations show that each of the loads also increases with increasing recombination rate $\rho$ if migration is weak.

In general, however, the load in each deme does not always increase with increasing $m$. The reason is that for sufficiently strong migration, generically, first one locus, then one of the haplotypes becomes fixed (Proposition \ref{prop_strong_mig}). If this is either $A_1B_1$ or $A_2B_2$, then the load in the corresponding deme will vanish for high migration rates, whereas that in the other deme will be very high. In such a case, the load of the total population may also decrease with increasing $m$. This occurs for large migration rates (not far below $\mmax$) and it can occur for completely linked loci as well as for loci in LE. In the CI model, the load always increases with the migration rate (B\"urger and Akerman 2011)

Finally, although $L$ is increasing in $\rh$ if migration is weak, this is not necessarily so if migration is strong. By using a grid of parameter combinations, we showed numerically that in about 0.34\% of more than $10^6$ combinations of $\a_1,\a_2,\be_1,\be_2,m$, and $\ph$, the total load $L$ at the equilibrium $\Finf$ is lower than that at $\Fnull$ (results not shown). Again, this occurs for high migration rates, not far below the value $\mmax^\infty$ at which $\Finf$ leaves the state space. Then a population maintained fully polymorphic by tight linkage may have a higher total load than a population in which fixation of a locus or a haplotype is facilitated by high recombination. In all such cases, selection in one deme was (considerably) stronger than in the other, and in more than 70$\%$ of the cases, a specialist haplotype became fixed at very high migration rates. In summary, under a wide range of conditions in this model, reduced recombination is favored, but there are instances where increased 
recombination is favored (cf.\ Pylkov et al.\ 1998; Lenormand and Otto 2000).

\section{$\Fst$ and differentiation}
The most commonly used measure for quantifying differentiation in spatially structured populations is $\Fst$. For diallelic loci, $\Fst$ can be defined as $\Fst=\frac{\text{Var}(p)}{\bar{p}(1-\bar{p})}$,
where $\text{Var}(p)$ is the variance of the allele frequencies in the total population and $\bar{p}$ is the allele frequency averaged over the demes. 
Estimators of multilocus $\Fst$ are usually defined as weighted averages of one-locus $\Fst$ estimators (e.g., Weir and Cockerham 1984, Leviyang and Hamilton, 2011).
Here, we extend Nagylaki's (1998) approach and define a genuine multilocus version of $\Fst$ that measures the covariance of the frequencies of (multilocus) haplotypes. We restrict attention to the diallelic two-locus case, but the extension to multiple multiallelic loci is evident. A general multilocus theory of fixation indices will be developed elsewhere.

Let $c_k$ denote the proportion of the population in deme $k$, so that $\sum_k c_k=1$. Then the frequency of haplotype $i$ in the entire population is $\bar x_i=\sum_k c_k x_{i,k}$. Because our subpopulations are randomly mating, the frequency of genotype $ij$ in the entire population is given by $\overline{x_ix_j}=\sum_k c_k x_{i,k}x_{j,k}$.
Following eqs.\ (6a) and (6b) in Nagylaki (1998), we define $\Fstij$ as a standardized measure of the covariance of the frequencies of haplotypes $i$ and $j$:
\begin{subequations}
\begin{align}
	\overline{x_i^2}  &= \bar x_i^2 + \Fstii \bar x_i(1-\bar x_i), \\
	\overline{x_ix_j} &= (1-\Fstij)\bar x_i \bar x_j.	
\end{align}
\end{subequations}
The multilocus, or haplotype, heterozygosity in the entire population can be defined as
\begin{equation}
	\bar h_S = \sum_{i,j:i\ne j}\overline{x_ix_j} = \sum_i(\bar x_i-\overline{x_i^2}),
\end{equation}
where $\sum_i$ runs over all haplotypes.
If the entire population were panmictic, its multilocus heterozygosity would be
\begin{equation}
	h_T = \sum_i \bar x_i(1-\bar x_i).
\end{equation}
Thus, $1-h_T$ is the probability that two gametes chosen at random from the entire population are the same haplotype.

Following eq.\ (32) in Nagylaki (1998), we define $\Fst$ by
\begin{equation}
	\Fst = \frac{1}{h_T}\sum_i \bar x_i(1-\bar x_i) \Fstii.
\end{equation}
Then $\Fst$ can be written as
\begin{equation}\label{FST}
	\Fst = 1-\frac{\bar h_S}{h_T} = \frac{\sum_i \text{Var}(x_i)}{\sum_i \bar x_i(1-\bar x_i)},
\end{equation}
in direct generalization of the classical formula given above.

We focus on the dependence of the equilibrium value of $\Fst$ on the migration parameters $m$ and $\ph$ and on the recombination rate $\rh$. Because we obtained the coordinates of the stable, fully polymorphic equilibrium equilibrium $\F$ explicitly only in special or limiting cases, explicit formulas for $\Fst$ can be derived only in these cases. For instance, if migration is weak, we obtain from \eqref{weak_mig} that, to leading order in $m$, 
\begin{equation}\label{eq:FST_weak_mig}
    \Fst = 1 - m\left[\frac{\ph}{c_2}\,\frac{\a_1\be_1+(\a_1+\be_1)\rh}{\a_1\be_1(\a_1+\be_1+\rh)} - 
    		\frac{1-\ph}{c_1} \frac{\a_2\be_2-(\a_2+\be_2)\rh}{\a_2\be_2(\a_2+\be_2-\rh)}\right].
\end{equation}
Here, $\Fst$ increases with decreasing $\rh$, and decreases with increasing $m$. Thus, stronger linkage leads to increased differentiation if migration is weak. 

\begin{figure}[t]
 \centering
 \includegraphics[scale=0.60,keepaspectratio=true]{./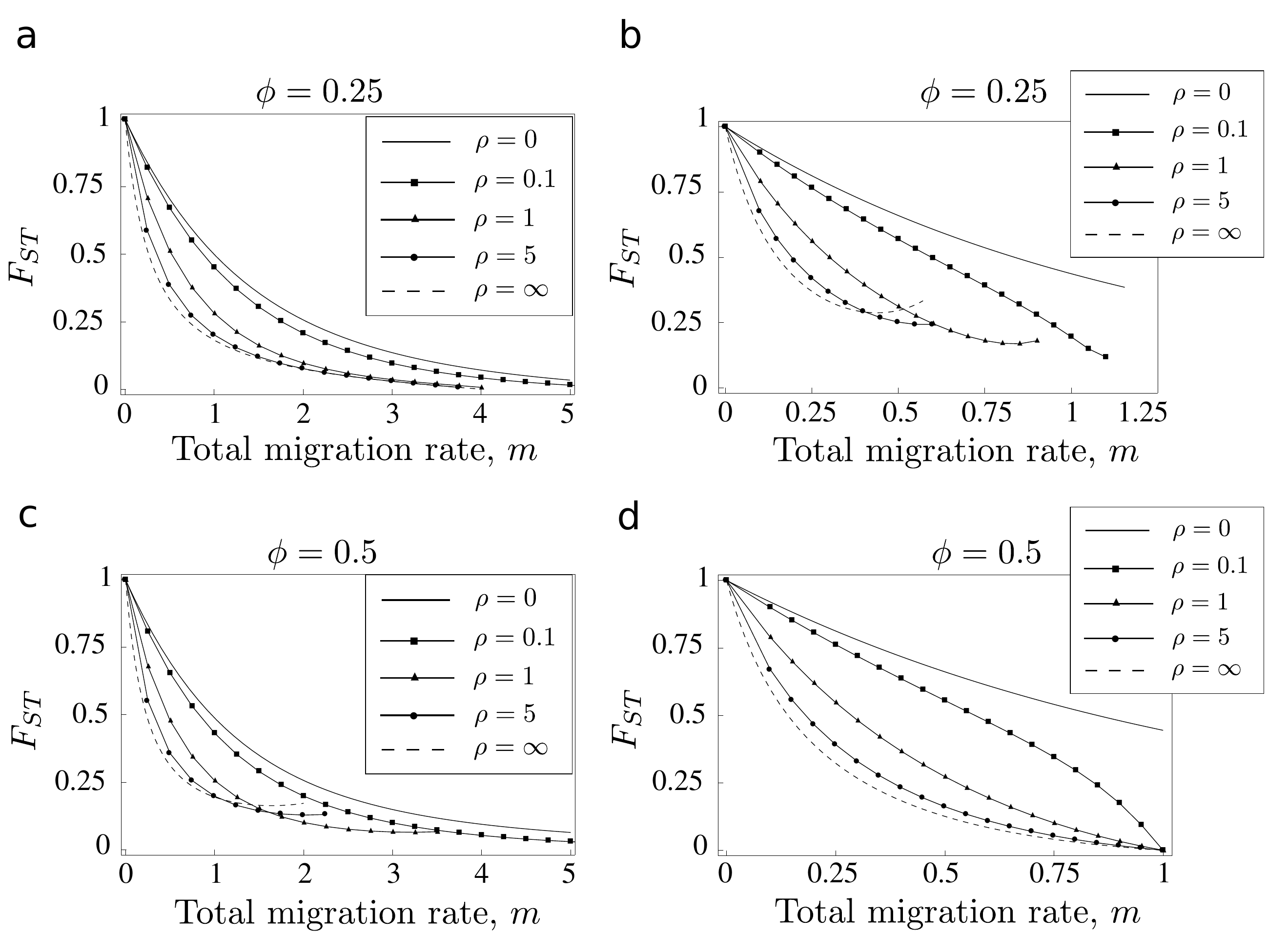}
 \caption{\small{$\Fst$ as a function of the total migration rate $m$. In panels a and c, locus $\B$ is under stronger selection in both demes ($\alpha_1=\tfrac12$, $\alpha_2=-1$, $\beta_1=-\beta_2=2$, $\theta=1$). In panels b and d, locus $\A$ is under stronger selection than $\B$ in deme 2, and locus $\B$ is under stronger selection than $\A$ in deme 1 ($\alpha_1=-\beta_2=0.4$, $\beta_1=-\alpha_2=2$, $\theta=3.84$). Note that in all cases, $\Fst$ is also monotone decreasing in $\rh$. For $\rh=0$ and $\rh=\infty$ (LE), the lines are from numerical evaluation of \eqref{FST} by substitution of the coordinates of $\Fnull$ \eqref{coord_Fnull} or $\Finf$ \eqref{Finf}. For other values of $\rh$, the numerically determined coordinates of the internal equilibrium are used.}}
 \label{fig:FST1}
\end{figure}

Figure 7 illustrates for two selection scenarios how $\Fst$, evaluated at the stable, fully polymorphic equilibrium $\F$, depends on the total migration rate $m$ and the recombination rate $\rho$. In diagrams (a) and (c) of Figure 7, it is assumed that locus $\B$ is under stronger selection than locus $\A$ in both demes.
It shows that $\Fst$ usually declines with increasing migration rate. However, there are a few instances, where $\Fst$ increases if $m$ is slightly below the migration rate at which the fully polymorphic equilibrium loses admissibility. 
In diagrams (a) and (c) of Figure 7, differentiation between the populations experiences the fastest decline for weak migration (relative to the selection parameters), whereas this is not necessarily so in diagrams (b) and (d). There, $\Fst$ may experience its strongest decrease if migration is strong.

Figure 7 also shows that at large migration rates, $\Fst$ may increase if the recombination rate increases, i.e., $\Fst$ is not minimized under linkage equilibrium. However, this occurs only for large recombination rates, i.e., larger than the largest selection coefficient. This is compatible with the finding in Section \ref{sec:general_case} that at high recombination rates, $\mmax$ may (slightly) increase in $\rh$, and the finding in Section \ref{sec:load} that the load $L$ may decrease with increasing $m$. We note that this `aberrant' behavior of $\mmax$, $L$, and $\Fst$ does not necessarily occur for the same parameter combinations.
Among more than $10^6$ parameter combinations of $\a_1,\a_2,\be_1,\be_2,m$, and $\ph$, we found no instance where $\Fst$ evaluated at the equilibrium $\Finf$ was higher than that at $\Fnull$ (results not shown).  
Importantly, if recombination is weak or migration is weak then $\Fst$ apparently always increases with tighter linkage.

Comparison of our multilocus $\Fst$ with averages of single-locus $\Fst$ values showed that the multilocus $\Fst$ declines somwehat faster at small migration rates than the 
averaged single-locus $\Fst$. For large parameter regions, the qualitative behavior of these measures of differentiation is the same. Differences occur only for a subset of selection coefficients at high migration rates and high recombination rates. Finally, we mention that our multilocus $\Fst$ is a sensitive measure of differentiation only if the effective number of haplotypes is low. This parallels the well known fact that the classical $\Fst$ is a sensitive measure of differentiation only if the effective number of alleles is low (e.g., Nagylaki 1998, 2011). Thus, our multilocus $\Fst$ may be most useful if applied to short sequences of DNA. A thorough and more general study is in preparation.

\section{Invasion of a locally beneficial mutant}
Differentiation between subpopulations can be increased by the invasion of mutants that establish a stable polymorphism at their locus.
Therefore, we consider a locus ($\A$) at which a new mutant $A_1$ arises that is advantageous relative to the wild type $A_2$ in deme 1, but disadvantageous in deme 2. In terms of our model, we assume $\a_1>0>\a_2$. If locus $\A$ is isolated, this mutant can invade and become established in a stable polymorphism if and only if $|\si_1+\si_2|<1$; cf.\ \eqref{si} and \eqref{eq:ASLPbf}. Using $m$ and $\ph$, this condition can be rewritten as
\begin{equation}
	m < |\mA|,
\end{equation}
see \eqref{mA} and \eqref{si12_mA_bound}, or
\begin{equation}
	\frac{m+\a_2}{m}\,\phA < \ph < \frac{m-\a_2}{m}\,\phA = \phinv.
\end{equation}
We restrict attention to the case $\ph>\phA$ \eqref{eq:phA} when the influx of the deleterious allele $A_2$ into deme 1 is sufficiently strong such that $\A_2$ is protected. (The case $\ph<\phA$ is symmetric and more suitable to study invasion of $\A_2$ under influx into deme 2 of $\A_1$ which is deleterious there.) Then the mutant $A_1$ can invade if any of the following equivalent conditions hold:
\begin{subequations}\label{A1_invade}
\begin{equation}\label{A1_m_invade}
 	m < \mA,
\end{equation}
\begin{equation}\label{A1_a1_invade}
	\a_1 > \frac{\a_2m\ph}{\a_2-m(1-\ph)} = \frac{m_1}{1-m_2/\a_2},
\end{equation}
or
\begin{equation}\label{A1_ph_invade}
	\phA < \ph < \phinv ,
\end{equation}
\end{subequations}
where $\phinv>1$ if and only if $m<\a_1$. Thus, $A_1$ can always invade if $m<\a_1$. For the CI model ($\ph=1$), each of the conditions in \eqref{A1_invade} simplifies to the well known invasion condition $m<\a_1$ (Haldane 1930). The conditions \eqref{A1_invade} show that invasion is facilitated whenever back migration is increased, either by keeping $m_1$ constant and increasing $m_2$, or by fixing $m$ and decreasing $\ph$.

For the CI model it was proved that invasion of a locally beneficial mutant is always facilitated by increased linkage to a locus in migration-selection balance (B\"urger and Akerman 2011). In fact, mutants of arbitrarily small effect can invade provided they are sufficiently tightly linked to this polymorphic locus which may be considered as the background in which the new mutant appears.  

Here, we investigate whether this is also the case with two-way migration. Thus, we assume that locus $\B$ is in migration-selection balance (which requires that analogs of \eqref{A1_invade} are satisfied for $\be_1$ and $\be_2$) and a locally beneficial mutant $A_1$ arises at the linked locus $\A$. Hence, the model in Section 2 applies and we assume \eqref{eq:parameter}.

\begin{figure}[t]
 \centering
 \includegraphics[scale=.8,keepaspectratio=true]{./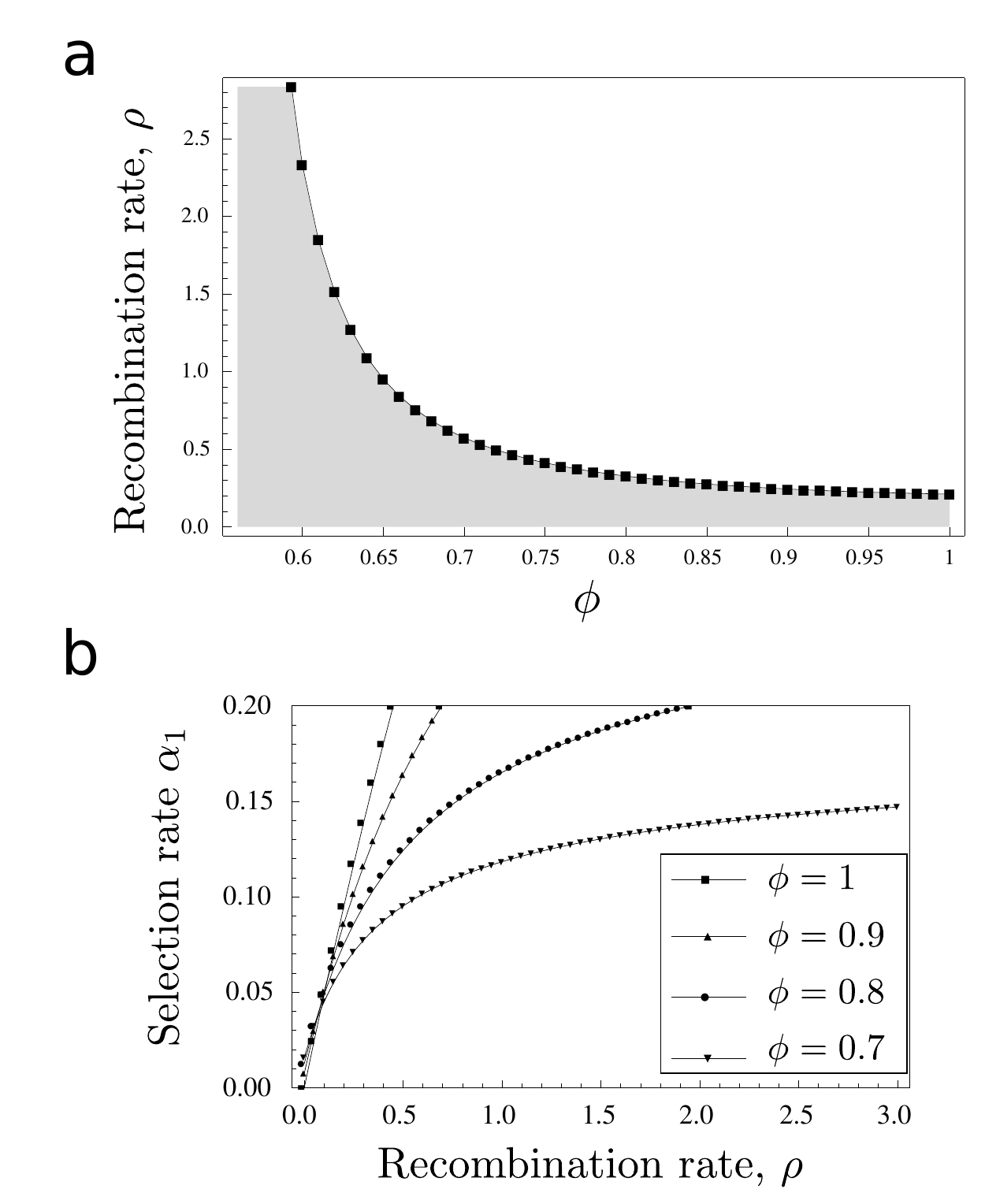}
 \caption{\small{Invasion properties of locally beneficial alleles. In a, the maximum recombination rate between loci $\A$ and $\B$, below which invasion of $\A_1$ can occur, is displayed as a function of $\ph$. The parameters $\a_1=-\a_2=0.1$, $\be_1=-2\be_2=2$, and $m=1$ are fixed. Therefore, $\phA=\tfrac12$ and $\phinv=0.55$. In b, the minimum selective advantage $\a_1$ required for invasion of $\A_1$ is shown as a function of $\rh$ for different values of $\ph$. The parameters $\a_2=-0.1$, $\be_1=-2\be_2=2$, and $m=1$ are fixed.}}
 \label{fig:invasion_reduced}
\end{figure}

Because we are mainly interested in the invasion properties of mutants of small effect, we assume that locus $\B$ is under stronger selection than $\A$, i.e., $|\a_k|<|\be_k|$ in deme $k=1,2$. Before the mutant $A_1$ arises, the population is at the equilibrium $\PBt$ (where $|\ta_1+\ta_2|<1$ must hold for admissibility; see Section 3). $A_1$ can invade if $\PBt$ is unstable. Since the eigenvalues determining external stability are zeros of a complicated quartic equations, the stability of $\PBt$ cannot be determined analytically. We expect that the new stable equilibrium that will be reached is the fully polymorphic equilibrium $\F$. For the CI model, this was be proved in (B\"urger and Akerman 2011). For the case of LE, it follows from Theorem \ref{LE_theorem}.

Figure 8 displays typical results about the invasion of the mutant $A_1$. In Figure 8a, the maximum recombination rate admitting invasion, denoted by $\rh_\text{max}$, is shown as a function of $\ph$.  In the shaded region, $A_1$ can invade. If $\ph\le\phinv=0.55$, \eqref{A1_ph_invade} implies that $\A_1$ can always invade. If $\ph>\phinv$, there exists $\rh_\text{max}<\infty$, such that $A_1$ can invade only if $\rh<\rh_\text{max}$, i.e., if $\A_1$ is sufficiently tightly linked to locus $\B$. 
In Figure 8b, the minimum selection coefficient $\a_1$ necessary for invasion of $A_1$ is shown as a function of $\rh/m$ for various values of $\ph$. These values are obtained by computing when the leading eigenvalue that determines external stability of $\PBt$ equals zero. 

We conclude that, as in the CI model, mutants of arbitrarily small effect can invade provided they are sufficiently tightly linked to a locus that is already maintained in migration-selection balance. In addition, as shown by both panels in Figure 8, increasingly symmetric migration facilitates the invasion and establishment of locally beneficial alleles.

%Section 6, CTTIM

\section{The effective migration rate at a linked neutral site}\label{sec:neutral_locus}
Linkage to loci under selection may impede or enhance gene flow at a neutral marker locus. In the first case, linkage may act as a barrier to gene flow. This was shown by the work of Petry (1983), Bengtsson (1985), Barton and Bengtsson (1986), and Charlesworth et al.\ (1997), who developed and studied the concept of the effective migration rate as a measure of the `effective' gene flow at a neutral site. More recently, the effective migration rate was studied for CI models with selection on a single locus in a class-structured population (Kobayashi et al.\ 2008) or with selection on two linked loci (B\"urger and Akerman 2011). Fusco and Uyenoyama (2011) investigated the consequences of a selectively maintained polymorphism on the rate of introgression at a linked neutral site under symmetric migration between two demes.

Here, we derive an explicit expression for the effective migration rate at a neutral locus ($\N$) that is located between the two selected loci, $\A$ and $\B$.
Recombination between locus $\A$ ($\B$) and the neutral locus occurs with rate $\rho_\AN$ ($\rho_\NB$) such that $\rho=\rho_\AN+\rho_\NB$. Thus, only one crossover event occurs in a sufficiently small time interval. We assume that $\rho_{\AN}$ and $\rho_{\NB}$ are positive, i.e., the neutral locus is not completely linked to a selected site. We consider two variants at the neutral locus, $N_1$ and $N_2$, each with arbitrary, positive initial frequency in at least one deme. The frequency of $N_1$ in deme $k(=1,2)$ is denoted by $n_k$. We model evolution at the three loci by a system of $7\times 2$ ordinary differential equations for
the allele frequencies and linkage disequilibria ($p_1$, $p_2$, $q_1$, $q_2$, $D^\AB_1$, $D^\AB_2$, $n_1$, $n_2$, $D^{\AN}_1$, $D^{\AN}_2$, $D^{\NB}_1$, $D^{\NB}_2$, $D^{\ANB}_1$, $D^{\ANB}_2$). We refrain from presenting the equations for the allele frequencies at the neutral locus and the associated  linkage disequilibria because they are a straightforward extension of those in Section 4.6 of B\"urger and Akerman (2011).

Obviously, the equilibrium allele frequencies at the neutral locus are the same in each deme and given by the initial allele frequencies averaged over the two demes:
  \begin{equation}
   \hat{n}_1=\hat{n}_2=\hat{n}=\frac{m_2 n_{1}(0)}{m}+\frac{m_1 n_{2}(0)}{m}.
    \label{eq:neutralequilibrium}
  \end{equation}
%Here, $n_{k}(0)$ is the initial frequency of $N_1$ in deme $k$. 
The equilibrium frequencies at the two selected loci are independent of the neutral locus andŽ, thus, the same as in the two-locus model treated above. The linkage disequilibria involving the neutral locus ($D^{\AN}_k$, $D^{\NB}_k$, and $D^{\ANB}_k$) are zero at equilibrium. By \eqref{eq:neutralequilibrium}, there is a one-dimensional manifold of equilibria resulting from the absence of selection at the neutral locus.

\begin{figure}[t]
\centering
\includegraphics[scale=0.8,keepaspectratio=true]{./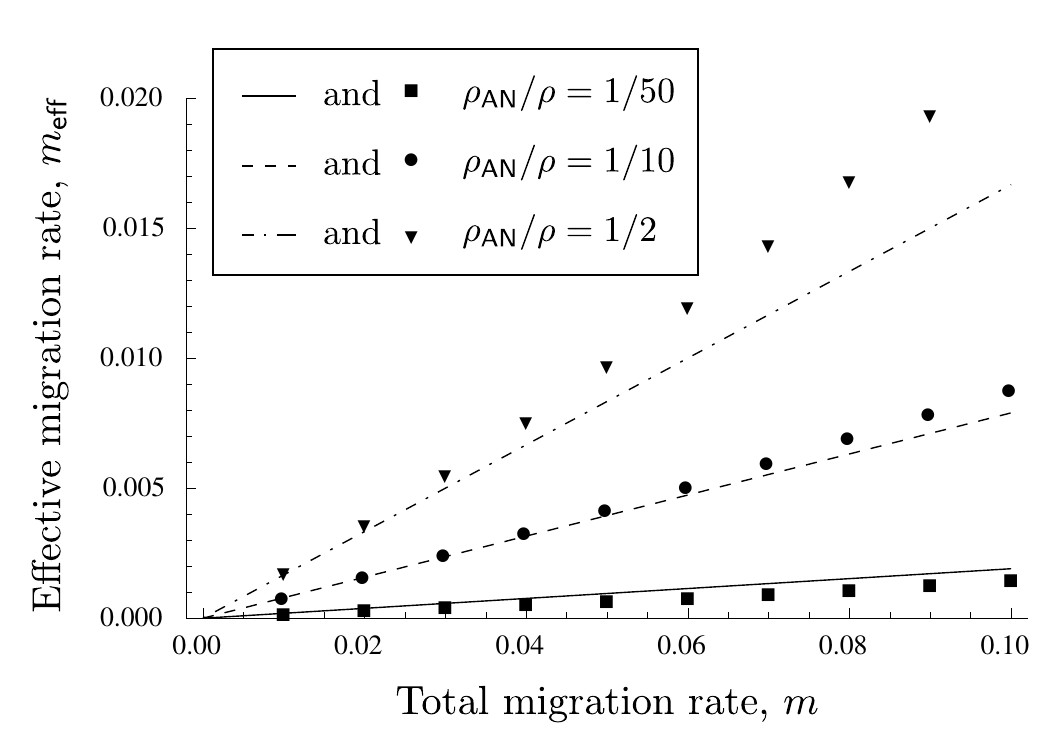}
\caption{\small{The effective migration rate $\meff$ as a function of $m$ for $\alpha_1=-\alpha_2=0.1$, $\beta_1=-\beta_2=0.2$, $\ph=\tfrac12$, and $\rho=0.2$. Recall that $\rh=\rh_\AN+\rh_\NB$. Lines show the weak-migration approximation of $\meff$ \eqref{eq:effectivemigrationrate}. Symbols give the exact numerical value of $\meff=-\la_N$.}}
 \label{fig:meff}
\end{figure}

We assume that parameters are such that the fully polymorphic equilibrium $\F$ is admissible and globally asymptotically stable.
Using the above order for the allele frequencies and linkage equilibria, the Jacobian at the equilibrium $\F$ has block structure,
\begin{equation}
   J= \begin{pmatrix} J_S & 0 \\ 0 & J_N  \end{pmatrix},
\end{equation}
where $J_S$ is the Jacobian describing convergence of $(p_1,p_2,q_1,q_2,D^\AB_1,D^\AB_2)$ to $\F$, and $J_N$ is the Jacobian describing convergence of $(n_1,n_2,D^{\AN}_1, D^{\AN}_2, D^{\NB}_1, D^{\NB}_2, D^{\ANB}_1, D^{\ANB}_2)$ to \newline $(\hat{n},\hat{n},0,0,0,0,0,0)$.

Because zero is the leading eigenvalue of $J_N$, the rate of convergence to equilibrium at the neutral locus is determined by the second largest eigenvalue of $J_N$, which we denote by $\la_N$. We define the effective (total) migration rate by $\meff=-\la_N$ (Bengtsson 1986, Kobayashi et al.\ 2008, B\"urger and Akerman 2011). It can be checked that under weak migration, i.e., to leading order in $m_1$ and $m_2$, one obtains
\begin{equation}\label{eq:effectivemigrationrate}
   \meff = -\la_N = m_1\frac{\rho_\AN\rho_\NB}{(\rho_\AN+\alpha_1)(\rho_\NB+\beta_1)} 
   					 + m_2\frac{\rho_\AN\rho_\NB}{(\rho_\AN-\alpha_2)(\rho_\NB-\beta_2)}
\end{equation}
(a \textit{Mathematica} notebook is available on request). If the neutral site is linked only to one selected locus (e.g., because $\be_1=\be_2=0$), then
\begin{equation}
   \meff = m_1\frac{\rho_\AN}{\rho_\AN+\alpha_1} + m_2\frac{\rho_\AN}{\rho_\AN-\alpha_2}
\end{equation} 
is obtained. Thus, two linked selected loci act as a much stronger barrier to gene flow than a single selected locus, especially
if the recombination rate between the two loci is not much larger than the selective coefficients.
In Figure 9, the approximation \eqref{eq:effectivemigrationrate} of the effective migration rate $\meff$ is displayed as a function of $m$ for various parameter combinations and compared with the exact value obtained by numerical evaluation of $\la_N$.

We note that $\meff$ is (approximately) the sum of the two effective one-way migration rates (B\"urger and Akerman 2011) and closely related to Kobayashi and Telschow's (2011) effective recombination rate. Our result complements their explicit
example on two-locus incompatibilities. We refer to their paper for the discussion of the relation of this concept of an effective migration rate to that of Bengtsson (1985) and for applications in the context of speciation theory.

%Section 7 (Discussion), CTTIM

\section{Discussion}
The purpose of this investigation was to improve our understanding of how genetic architecture, in particular recombination and locus effects, as well as the pattern and amount of migration determine polymorphism, local adaptation, and differentiation in a subdivided population inhabiting a heterogeneous environment. For simplicity, we restricted attention to two linked, diallelic loci and to migration between two demes. The study of diversifying selection in just two demes may also shape our intuition about clinal variation if the two subpopulations are from different ends of the cline. If alleles are beneficial in only one environment and detrimental in the other, local adaptation of subpopulations and differentiation between them can be obtained only if a (multilocus) polymorphism is maintained. Therefore, most of our mathematical results focus on existence and stability of polymorphic equilibria and on the dependence of the equilibrium configurations on the model parameters (migration rates, selection 
coefficients, recombination rate).

The model is introduced in Section 2. Sections 3 and 4 are devoted to the derivation of the possible equilibrium configurations and bifurcation patterns. They contain our main mathematical results. Explicit analytical results about existence and stability of equilibria were obtained for several limiting or special cases and are complemented by numerical work. 

The conditions for admissibility of all single-locus polymorphisms (SLPs) are given in Section \ref{sec:boundary_equil}, those for asymptotic stability of the monomorphic equilibria (ME) in Proposition \ref{prop_stab_ME} in Section \ref{sec:stab_mono}. The stability of SLPs could not generally be determined (Section \ref{ref:stab_SLP}). Weak migration is treated by perturbation methods in Section \ref{ref:weak_mig}. For sufficiently weak migration, there exists a globally attracting fully polymorphic equilibrium, $\F$ (Proposition \ref{prop_weak_mig}). Its approximate coordinates are given by \eqref{weak_mig}. 

The complete equilibrium and stability structure could be derived under the assumption of linkage equilibrium (Section \ref{ref:LE}). The unique, fully polymorphic equilibrium $\F=\Finf$ is admissible and globally attracting if and only if all four SLPs are admissible. Otherwise, one boundary equilibrium (SLP or ME) is globally asymptotically stable (Proposition \ref{prop_LE}). These results extend straightforwardly to an arbitrary number of diallelic loci. Based on these results, nonlinear perturbation theory establishes the existence of a globally stable, fully polymorphic equilibrium in a perturbed parameter range if recombination is sufficiently strong (Section \ref{sec:QLE}). This equilibrium is in quasi-linkage equilibrium and given by \eqref{F_largerho}.

Also for completely linked loci all equilibria and their local stability properties could be derived (Section \ref{sec:no_rec}). In this case, the fully polymorphic equilibrium $\Fnull$ \eqref{coord_Fnull} may lose stability while it is admissible \eqref{stab_F0}.  At this threshold a boundary equilibrium becomes stable by a `jump bifurcation' (Proposition \ref{prop_norec}). In general, however, more complicated equilibrium patterns than determined by Propositions \ref{prop_LE} and \ref{prop_norec} can occur, in particular, multiple stable equilibria.

In Section \ref{Section_asym}, we apply perturbation theory to infer the equilibrium properties under highly asymmetric migration from those derived for the continent-island model in B\"urger and Akerman (2011) and Bank et al.\ (2012). There, a stable ($\F$) and an unstable fully polymorphic equilibrium may exist if recombination is intermediate, and $\F$ is simultaneously stable with a boundary equilibrium. In general (Section \ref{Section_general}), we cannot exclude the existence of more than two internal equilibria or complicated dynamical behavior. Numerical searches produced no such instances. What can be shown easily is that, if $\rh<\infty$, any fully polymorphic equilibrium exhibits LD. In all cases, where an internal equilibrium was calculated (numerically or analytically), it exhibited positive LD.  

In the super-symmetric case, in which selection in deme 2 mirrors that in deme 1 and migration is symmetric, an assumption made in several applications, a fully polymorphic equilibrium exists always and, presumably, is stable (Section \ref{sec:super_symm}). This is a highly degenerate situation because if $\th\ne0$, only a monomorphic equilibrium can be stable for sufficiently large migration rates (Proposition \ref{prop_strong_mig}). If $\th=0$ (Section \ref{sec:theta=0}), then a fully polymorphic equilibrium can exist for arbitrarily large migration rates if $\ph=\phAB$ (see also Section \ref{sec:general_case}).

Whereas in Section 3 the focus was on the efficient presentation of the existence and stability results of equilibria, in Section 4 these results are used to derive the possible bifurcation patterns with the total migration rate $m$ as the bifurcation parameter.
All possible bifurcation patterns could be derived under the assumption of LE (Theorem \ref{LE_theorem}, Figures 2 and 3), and under the assumption of complete linkage (Theorem \ref{Mainresult_no_reco}, Figures 4 and 5). The latter case is considerably more complex. Interestingly, in each case, every bifurcation pattern can occur for every ratio $\ph=m_1/m$ of migration rates by choosing the selection coefficients appropriately. Hence, the assumption of symmetric migration does not yield simpler equilibrium configurations than general migration if arbitrary selection coefficients are admitted. 

In each of these cases (LE or $\rh=0$), we determined the maximum migration rate $\mmax$ admitting an asymptotically stable, fully polymorphic equilibrium (Corollaries \ref{corollary_mmaxinf} and \ref{corollary_mmax0}). The maximum migration rate $\mmax^0$ for $\rh=0$ always exceeds or equals that ($\mmax^\infty$) for LE, i.e., $\mmax^\infty\le\mmax^0$. Although for strong recombination, $\mmax$ can be very slightly smaller than $\mmax^\infty$ (Section \ref{QLE}), in the vast majority of investigated cases, $\mmax$ is bracketed by $\mmax^\infty$ and $\mmax^0$ (Figure 6, Section \ref{sec:general_case}).

Proposition \ref{prop_strong_mig} demonstrates that a ME is globally attracting if migration is sufficiently strong (except in the degenerate case noted above). If we interpret the equilibria $\M_2$ and $\M_3$ as fixation of a generalist ($A_1B_2$ and $A_2B_1$ are haplotypes of intermediate fitness), and $\M_1$ and $\M_4$ as fixation of a specialist ($A_1B_1$ and $A_2B_2$ are the locally adapted haplotypes), then depending on the sign of $\th$ one of the generalists becomes fixed for high $m$ if $\ph$ is intermediate (i.e., $\phA<\ph<\phB$ if $\th>0$, $\phB<\ph<\phA$ if $\th<0$; but note that, depending on the selection coefficients, both $\phA$ and $\phB$ can be arbitrarily close to 0 or 1.). The critical value $m$ as well as $\phA$ and $\phB$ are independent of $\rh$. Otherwise, one of the specialists becomes fixed for large $m$. 

The fact that a generalist becomes fixed for strong migration is a distinct feature of (balanced) two-way migration: in the CI model or if migration is sufficiently asymmetric ($\ph<\phA$ or $\ph>\phB$ if $\th>0$), one of the specialist haplotypes swamps the populations and becomes fixed. Another difference between highly asymmetric and more symmetric migration patterns is that in the first case, it is always the locus under weaker selection that first loses its polymorphism while $m$ increases, whereas this not necessarily so in the latter case (see Section \ref{sec:admissibility_SLP_m} and Theorem \ref{Mainresult_no_reco}, cases A3 and A4).

In summary, we determined quantitatively when the following three evolutionarily stable states discussed by Kawecki and Ebert (2004) occur: (i) existence of a single specialist optimally adapted to one deme and poorly to the other, (ii) existence of a single generalist type which has higher average fitness in the whole population than than any of the specialists, and (iii) existence of a set of specialists each adapted to its deme, i.e., coexistence in a polymorphism. Local adaptation and differentiation occur only in case (iii).

In Section 5, we used the migration load in each deme to quantify the degree of local adaptation. In Section 6 we introduced a new multilocus version of $\Fst$ to measure differentiation. If migration is weak, then local adaptation and differentiation decrease with increasing migration rate and increase with increasing linkage between the loci (Figure 7). In particular, for given (small) migration rate, local adaptation and differentiation are maximized if the fitness effects are concentrated on a single locus (corresponding to $\rh=0$ in our model). However, as discussed in Section 5, for high migration rates, the migration load of the total population can decrease with increasing recombination or migration rate. Similarly, at high recombination and migration rates, $\Fst$ can increase with increasing migration or recombination rate. Thus, for given, relatively high migration rate, $\Fst$ may be minimized at intermediate recombination rates. Apparently, it is always maximized in the absence of recombination.

In Section 7, we investigated the conditions for invasion of locally beneficial mutants. At an isolated locus, such a mutant can invade and become established in a migration-selection equilibrium if and only if its advantage exceeds a threshold that increases with the immigration rate of the wild type; see \eqref{A1_a1_invade}. If, however, this mutant occurs at a locus that is linked to a locus that is already in migration-selection balance, then its invasion is facilitated, i.e., its local selective advantage can be smaller (Figure 8b). Equivalently, for given selection coefficients and total migration rate, the minimum recombination rate needed for invasion increases if $\ph$, or the influx of the (deleterious) wild type relative to the efflux of the new mutant, increases (Figure 8a). For the extreme case of one-way migration from a `continental' population to an `island' population that is adapting to a new environment, B\"urger and Akerman (2011) proved that invasion of a locally beneficial mutant is 
always facilitated by increased linkage to a locus in migration-selection balance. 

Thus, our results complement the numerical finding by Yeaman and Whitlock (2011) for a multilocus quantitative-genetic model that clusters of locally adaptive mutations, or concentrated genetic architectures, build up in spatially structured populations with opposing selection pressures in two demes. Because tighter linkage is required for invasion under increasingly asymmetric migration rates, more concentrated architectures and a greater advantage for recombination-reducing mechanisms (such as chromosome inversions) should be expected for highly asymmetric migration. 
In finite populations, invasion of new mutants occurs only with a certain probability, and genetic drift may erase polymorphism. Numerical work, supported by analytical methods, has already shed some light on the dependence of the probability of establishment of new, locally adaptive mutations on the recombination rate and other factors (Yeaman and Otto 2011, Feder et al.\ 2012). Analytical work on the role of genetic drift and finite population size on these issues is in progress. 

Our results also show that, in the absence of epistasis and under the present form of balancing selection, reduced recombination between selected loci is favored, except when migration rates are sufficiently symmetric and high (Section 5). Selection inducing certain forms of epistasis may favor high recombination in structured populations more easily 
(Pylkov et al.\ 1998; Lenormand and Otto, 2000; Bank et al.\ 2012). Therefore, general predictions about the emergence of clusters of locally adaptive mutations in regions of reduced recombination, or of genomic islands of speciation (Wu and Ting 2004) or of differentiation (Feder et al.\ 2012), can not be made in the absence of detailed information about epistasis and the spatial pattern of selection and migration. At least in the absence of epistasis, the most favorable situation for the emergence of such clusters should occur in populations that are adapting to a new environment, still receiving maladaptive gene flow but sending out only very few or no migrants (corresponding to a continent-island model). 

In Section 8, we derived the approximation \eqref{eq:effectivemigrationrate} for the effective migration rate at a linked neutral locus that is located between the selected loci. This approximation is simply the sum of the two effective migration rates under one-way migration (B\"urger and Akerman, 2011). Because in the present model, polymorphism at the selected loci is maintained by balancing selection, the effective migration rate may be greatly reduced compared with the actual migration rate (see Figure 9). Thus, strong barriers against gene flow may build up at such neutral sites and enhance (neutral) differentiation (see Charlesworth and Charlesworth 2010, Chap.\ 8.3). Future work will have to study the actual amount and pattern of neutral diversity at such sites in finite populations.

\section*{Acknowledgments}
We are grateful for useful comments on the manuscript by two anonymous referees. One of them  inspired us to introduce the new multilocus fixation index. This work was supported by grants P21305 and P25188 of the Austrian Science Fund FWF. Support by the Vienna Graduate School of Population Genetics funded by the Austrian Science Fund (FWF, W1225) is also acknowledged.

\vspace{1cm}

\begin{appendix}
 \numberwithin{equation}{section}

\section{Appendix}\label{sec:appendix}

%\subsection{Parameter space of selection coefficients}\label{sec:parameters}
\subsection{Sufficiency of the assumptions \eqref{eq:parameter}}\label{sec:parameters}
By relabeling alleles, we can assume without loss of generality \eqref{assume}. Generically, one of the following nine parameter sets applies:
\begin{subequations}
\begin{align}
  \th &> 0, \;\;\a_1 < \be_1, \;\text{ and }\; \a_2 \ge \be_2 , \label{eq:ca1}\\ 
  \th &> 0, \;\;\a_1 < \be_1, \;\text{ and }\; \a_2 < \be_2 ,   \label{eq:ca2}\\
  \th &> 0, \;\;\a_1 \ge \be_1, \;\text{ and }\; \a_2 < \be_2 , \label{eq:ca3}\\
  \th &< 0, \;\;\a_1 > \be_1, \;\text{ and }\; \a_2 \le \be_2 , \label{eq:ca4}\\
  \th &< 0, \;\;\a_1 > \be_1, \;\text{ and }\; \a_2 > \be_2 ,   \label{eq:ca5}\\
  \th &< 0, \;\;\a_1 \le \be_1, \;\text{ and }\; \a_2 > \be_2 . \label{eq:ca6}
\end{align}
In addition, there are the following three parameter sets:
\begin{align}
  \th &= 0, \;\;\a_1 < \be_1 \;\text{ and }\; \a_2 > \be_2 ,\label{eq:ca7}\\
  \th &= 0, \;\;\a_1 = \be_1 \;\text{ and }\; \a_2 = \be_2 ,\label{eq:ca8}\\
  \th &= 0, \;\;\a_1 > \be_1 \;\text{ and }\; \a_2 < \be_2 .\label{eq:ca9}
\end{align}
\end{subequations}
The sets \eqref{eq:ca1} -- \eqref{eq:ca9} yield the complete parameter space of the selection coefficients. 

We show that the parameter sets \eqref{eq:ca3} -- \eqref{eq:ca6} can be derived from \eqref{eq:ca1} and \eqref{eq:ca2} by simple transformations. Let $f$ denote the exchange of loci, i.e., $f(\a_k)=\be_k$ and $f(\be_k)=\a_k$, and $g$ the exchange of demes, i.e.,  $g(\a_k)=-\a_{k^\ast}$ and $g(\be_k)=-\be_{k^\ast}$. We observe that $\text{sign} (f(\th))=\text{sign}(g(\th))=-\text{sign}(\th)$ and
\begin{subequations}
\begin{align}
  \eqref{eq:ca1} &\stackrel{f}{\rightarrow} \eqref{eq:ca4} \stackrel{g}{\rightarrow} \eqref{eq:ca3} \stackrel{f}{\rightarrow} \eqref{eq:ca6},\\
  \eqref{eq:ca2} &\stackrel{g}{\rightarrow} \eqref{eq:ca5}
\end{align}
\end{subequations}
hold. 
Therefore, \eqref{eq:parameter} is sufficient to describe the (generic) parameter region where $\th\ne0$. Since 
\begin{align}
  \eqref{eq:ca7} \stackrel{f}{\rightarrow} \eqref{eq:ca9},
\end{align}
\eqref{eq:parameterdeg} is sufficient to describe the degenerate cases $\th=0$.

\subsection{Proof of Proposition \ref{prop_stab_ME}}\label{App_Proof_Prop_stabM1}
At each monomorphic equilibrium, the characteristic polynomial factors into three quadratic polynomials, $P(\la) = t_1(\la)t_2(\la)t_3(\la)$. Two of them, $t_1(\la)$ and $t_2(\la)$, determine stability with respect to the adjacent marginal one-locus systems. The corresponding conditions are already known from one-locus theory. The third, $t_3(\la)$, determines stability with respect to the interior of the state space. 

In the following, we derive the stability conditions \eqref{M1_stab_marginal_orig} and \eqref{M1_stab_rh} for $\M_1$. 
Those for $\M_4$ can be deduced analogously or by symmetry considerations by taking into account that \eqref{be1>a1} implies
$\min\{\a_1,\be_1\}=\a_1$. The stability analysis of $\M_2$ and $\M_3$ is much simpler and left to the reader. 

For $\M_1$, it is straightforward to show that
\begin{subequations}\label{M1_charpol}
\begin{align}
  t_1(\la)&= \la^2 + [\a_1(1+\si_1)+\a_2(1+\si_2)]\la + \a_1\a_2(1+\si_1+\si_2), \\
  t_2(\la)&= \la^2 + [\be_1(1+\ta_1)+\be_2(1+\ta_2)]\la + \be_1\be_2(1+\ta_1+\ta_2), \\
  t_3(\la)&= \la^2+(\a_1+\a_2+\be_1+\be_2+2\rh+m_1+m_2)\la \nonumber \\
  				&\quad+ (\a_1+\be_1+m_1+\rh)(\a_2+\be_2+m_2+\rh)-m_1m_2.
\end{align}
\end{subequations}
Because $t_1''(\la)>0$ for every $\la$, $t_1'(0)=\a_1(1+\si_1)+\a_2(1+\si_2)>0$ if $\si_2<-1$, $t_1(0)>0$ if and only if
$\si_1+\si_2<-1$, and $\min_\la \{t_1(\la)\}<0$, we conclude that the two eigenvalues emanating from $t_1$ are negative if and only if 
\begin{subequations}\label{eq:M1_stability_Appendix}
\begin{equation}\label{si1+si2<-1}
	\si_1+\si_2 < -1.
\end{equation}
Analogously, the two eigenvalues emanating from $t_2$ are negative if and only if 
\begin{equation}\label{ta1+ta2<-1}
	\ta_1+\ta_2 < -1,
\end{equation}
and those originating from $t_3$ are negative if and only if 
\begin{equation}\label{eq:M1_lambda3_negative}
	m_2 > -\frac{(\a_1+\be_1+m_1+\rh)(\a_2+\be_2+\rh)}{\a_1+\be_1+\rh}.
\end{equation}
\end{subequations}
Conditions \eqref{si1+si2<-1} and \eqref{ta1+ta2<-1} yield \eqref{M1_stab_marginal_orig}.

Concerning \eqref{eq:M1_lambda3_negative}, we observe that it is always satisfied if $\rh>-(\a_2+\be_2)$ because then the right-hand side is negative. Next we show, that \eqref{eq:M1_lambda3_negative} is also satisfied if $\rh>-\a_2$. Because the right-hand side of
\eqref{eq:M1_lambda3_negative} is strictly monotone decreasing in $\rh$, it is sufficient to prove that \eqref{eq:M1_lambda3_negative}
holds if $\rh=-\a_2$. Then simple rearrangement of \eqref{eq:M1_lambda3_negative} leads to the condition
\begin{equation}
	\frac{m_2(\a_1+\be_1-\a_2)}{\be_1\be_2} + \frac{\a_1+\be_1-\a_2+m_1}{\be_1} <0,
\end{equation}
which can be rewritten as
\begin{equation}
	\ta_1+\ta_2+1 + \frac{\a_1-\a_2}{\be_1}(1+\ta_2)<0.
\end{equation}
This is satisfied if \eqref{ta1+ta2<-1} holds because this also implies $1+\ta_2<0$. One shows similarly that \eqref{eq:M1_lambda3_negative} is satisfied if $\rh\ge-\be_2$. 
Therefore, we have proved that $\M_1$ is asymptotically stable if \eqref{M1_stab_marginal_orig} and \eqref{M1_stab_rh} hold.

\subsection{Calculation of equilibria with two polymorphic loci if $\rh=0$}\label{int_equi_rho0}
As shown in the main text, by Corollary 3.9 of Nagylaki and Lou (2007) it is sufficient to assume that $A_1B_2$ is absent, which implies $D_k=p_k(1-q_k)$ and $p_k\le q_k$. Setting $\rh=0$, we find from the equations $\dot p_1=0$ and $\dot q_1=0$ \eqref{eq:dynamics} that
\begin{subequations}\label{manif_p2q2}
\begin{align}
	p_2 &= p_1[m_1-\a_1(1-p_1)-\be_1(1-q_1)]/m_1,\\
	q_2 &= 1 - (1-q_1)(m_1+\a_1p_1+\be_1q_1)/m_1
\end{align}
\end{subequations}
holds at equilibrium.
Substituting \eqref{manif_p2q2} into $\dot p_2$ and $\dot q_2$, we obtain at equilibrium,
\begin{subequations}
\begin{align}
	0 &= p_1[g_1(p_1,q_1)-\a_1^2\a_2p_1^3-\be_1^2\be_2q_1^3]/m_1^2, \\
	0 &= (1-q_1)[g_2(p_1,q_1)-\a_1^2\a_2p_1^3-\be_1^2\be_2q_1^3]/m_1^2,
\end{align}
\end{subequations}
where $g_1$ and $g_2$ are quadratic polynomials in $(p_1,q_1)$. The obvious substitution results in the equilibrium condition
\begin{subequations}\label{int_equi_cond_rho0}
\begin{align}
	0 &=  \a_1\a_2(\a_1+\be_1)p_1^2 + \be_1\be_2(\a_1+\be_1)q_1^2 + (\a_1+\be_1)(\a_1\be_2+\a_2\be_1)p_1q_1 \notag \\
	 &\qquad + [m_1(\a_1(2\a_2+\be_2)+\a_2\be_1)-(\a_1+\be_1)(\a_2\be_1+\a_1(\a_2+\be_2))]p_1 \notag \\
	 &\qquad+ [m_1(\be_2(\a_1+2\be_2)+\a_2\be_1)-\be_1\be_2(\a_1+\be_1)]q_1 \notag \\
	 &\qquad+ m_1[m_1(\a_2+\be_2)+m_2(\a_1+\be_1)-(\a_1+\be_1)(\a_2+\be_2)].
\end{align}
\end{subequations}
It is easy to check that $\Fnull$ always fulfills this condition and it is the only solution satisfying $0\le p_1=q_1\le 1$.
Hence, unless there is curve $(p_1,q_1)$ of solutions of \eqref{int_equi_cond_rho0} that passes through $\Fnull$ and through
either a point on $p_1=0$ with $0< q_1\le1$ or on $q_1=1$ with $0\le p_1<1$, $\Fnull$ is the unique admissible solution of \eqref{int_equi_cond_rho0}.

Because $\Fnull$ has an eigenvalue 0 only if either \eqref{critmu12} is satisfied or if $|\ka_1+\ka_2|=1$ (which occurs if and only if $\Fnull$ collides with either $\M_1$ or $\M_4$), $\Fnull$ is the only equilibrium with both loci polymorphic, except when \eqref{critmu12} is satisfied. In the latter case, a line of equilibria exists, as we show now.

We calculate $m_2$ from \eqref{critmu12} and substitute into \eqref{int_equi_cond_rho0}. The right-hand side factorizes into two linear terms. Only one of them gives rise to admissible equilibria and, in fact, yields the manifold:
\begin{equation}\label{manif}
	p_1 = \frac{\th[\be_1(1-q_1)-m_1]-\a_1\be_1(\a_2+\be_2)}{\a_1\th},\\
\end{equation}
where $0\le q_1\le 1$. The allele frequencies in the other deme are obtained from \eqref{manif_p2q2}. It is straightforward to check that not only $\Fnull$, but also the equilibria $\PAo$ and $\PBt$ lie on this manifold. In terms of the gamete frequencies, this manifold is a straight line.

\subsection{Stability of $\Fnull$}\label{sec:Fnull_stability}
In this section we derive the stability of $\Fnull$.

As $A_1B_2$ is lost if $\rh=0$ and \eqref{eq:parameter} hold, it is sufficient to consider the dynamics \eqref{dynamics_gametes} in $S_3\times S_3$. In this case, the characteristic polynomial at $\Fnull$ factors into two quadratic polynomials, $P(\la)=t_1(\la)t_2(\la)$. These are given by
\begin{subequations}
\begin{align}
    t_1(\la)=&\la^2+\left[(\a_1+\be_1-\a_2-\be_2)\sqrt{1-\ka_1\ka_2}-(m_1+m_2) \right]\la \nonumber\\
	    &\quad +(\a_1+\be_1)(\a_2+\be_2)\left[1+(\ka_1-\ka_2)\sqrt{1-\ka_1\ka_2}\right],\\
    t_2(\la)=&\la^2+\frac{1}{2}\left[\a_1+\a_2-\be_1-\be_2+(\a_1-\a_2+\be_1-\be_2)\sqrt{1-\ka_1\ka_2}\right]\la \nonumber\\
	    &\quad +\frac{1}{2}\left[-\a_1\be_2(1+\sqrt{1-\ka_1\ka_2})-\a_2\be_1(1-\sqrt{1-\ka_1\ka_2})\right].
\end{align}
\end{subequations}
The polynomial $t_1$ determines the stability with respect to the (effectively one-locus) system where only 'alleles' $A_1B_1$ and $A_2B_2$ are present. It is convex with $t_1(0)\geq 0$ if and only if $|\ka_1+\ka_2|\leq 1$ (where the equalities correspond), i.e., whenever $\Fnull$ is admissible, cf.\ \eqref{eq:Fo_admissible}. If $|\ka_1+\ka_2|<1$, $t^{'}_1(0)>0$ and $t_1$ attains a negative value at its minimum (as can be shown easily). Therefore, all eigenvalues emanating from $t_1$ are real and negative whenever $\Fnull$ is admissible.

The polynomial $t_2$ determines stability with respect to the interior of $S_3\times S_3$. It is convex and attains its minimum at
\begin{equation}
  \la_\text{min}=\frac{1}{4}\left[(\a_1+\be_2-\a_2-\be_2)(1-\sqrt{1-\ka_1\ka_2})\right]
\end{equation}
where $\la_\text{min}<0$ by \eqref{assume} and \eqref{eq:kappa}. As $t_2(\la_\text{min})<0$, the eigenvalues emanating from $t_2$ are real. As
\begin{equation}
  t_2(0)\geq 0 \iff m_1m_2\leq \mtilde,
\end{equation}
where the equalities correspond and $\mtilde$ is defined in \eqref{mcrit}, and because $t^{'}_2(0)>0$, we conclude that the two eigenvalues emanating from $t_2$ are negative if and only if \eqref{stab_F0} holds.

\subsection{Stability of SLPs under $\rh=0$}\label{sec:PB2_PA1_rhnull_stability}
For $\rh=0$ it is sufficient to study the dynamics \eqref{dynamics_gametes} in $S_3\times S_3$. SLPs where $\hat{x}_{k,2}>0$ ($k=1,2$), i.e., $\PAt$ and $\PBo$, are unstable. It remains to study the stability of $\PAo$ and $\PBt$. 

We present the analysis for $\PAo$ in detail, as results for $\PBt$ follow analogously. 

At $\PAo$ the characteristic polynomial factors into two quadratic polynomials, $P(\la)=t_1(\la)t_2(\la)$, given by
\begin{subequations}
\begin{align}
    t_1(\la)=&\la^2+\left[\a_1(\sqrt{1-4\si_1\si_2}-\si_1)-\a_2(\sqrt{1-4\si_1\si_2}+\si_2)\right]\la \nonumber\\
	      & +\a_1\a_2\left[\right(\si_1-\si_2)\sqrt{1-4\si_1\si_2}-(1-4\si_1\si_2)],\\
    t_2(\la)=&\la^2+\frac{1}{2}\left[2\be_1+2\be_2+\a_1(1-\sqrt{1-4\si_1\si_2})+\a_1(1+\sqrt{1-4\si_1\si_2})\right]\la \nonumber\\
	      & +\frac{1}{2}\left[\be_1(\be_2+\a_2)+\be_2(\a_1+\be_1)+\theta \sqrt{1-\si_1\si_2}\right].
\end{align}
\end{subequations}
$t_1$ determines stability with respect to the one-locus system where $B_1$ is fixed. $t_1(0)=0$ if and only if $|\si_1+\si_2|=1$, i.e., whenever $\PAo$ collides with a ME according to \eqref{eq:ASLPbf_1} and \eqref{eq:ASLPbf_2}. Whenever $|\si_1+\si_2|<1$, i.e., $\PAo$ is admissible \eqref{si}, $t_1(0)>0$ and $t_1^{'}(0)>0$. As $t_1^{''}(\la)>0$ for every $\la$, $t_1$ attains a minimum, where it is straightforward to show that $t_1$ takes a negative value at its minimum. Thus, all eigenvalues emanating from $t_1$ are real and negative whenever $\PAo$ is admissible.

$t_2$ determines stability with respect to the interior of $S_3\times S_3$. $t_2(0)\geq 0$, if and only if $m_1m_2\geq \mtilde$, cf.\ \eqref{mcrit}, where the equalities correspond. Whenever $m_1m_2>\mtilde$, $t_2^{'}(0)>0$. As $t_2^{''}(\la)>0$ for every $\la$, $t_2$ attains a minimum, where it is straightforward to show that $t_2$ takes a negative value at its minimum. Thus, all eigenvalues emanating from $t_1$ are real and negative whenever $m_1m_2>\mtilde$ holds. Otherwise, at least one eigenvalue is positive.

Combining the results obtained for $t_1$ and $t_2$ it follows that $\PAo$ is asymptotically stable if and only if
\begin{equation}\label{eq:PAo_EV_negative}
    -1<\si_1+\si_2<1 \text{ and } m_1m_2>\mtilde
\end{equation}
hold. We note that $m_1m_2>\mtilde$ is equivalent to $(\si_1\ta_2-\si_2\ta_1)^2<-(\si_1+\ta_1)(\si_2+\ta_2)$, and our general assumption \eqref{eq:parameter} implies $\ta_1<\si_1$ and $\si_1\ta_2-\si_2\ta_1<0$. Using these relations we can show with the help of \textit{Mathematica} that \eqref{eq:PAo_EV_negative} is incompatible with $-1<\ta_1+\ta_2$. Consequently, $\PBt$ is not admissible if $\PAo$ is asymptotically stable.

\subsection{The super-symmetric case}\label{sec:super_symmetric}
We prove that in the super-symmetric case of Section \ref{sec:super_symm}, all SLPs are unstable. 

We assume symmetric migration rates ($m_1=m_2=m$), equivalent loci ($\a_k=\be_k=a$), and selection in deme 2 mirrors that in deme 1 ($\a_k=-\a_{k^\ast}$). Thus, $\th=0$. Equilibria may collide (thus leave or enter the state space) if and only if at least one of their eigenvalues is zero. Eigenvalues are zeros of the characteristic polynomial, which has the form $P(\la) = c_6\la^6 + \dots + c_1 \la + c_0$. If zero is an eigenvalue at an equilibrium, i.e., $P(0)=0$, the constant term $c_0$ must vanish. In the super-symmetric case every characteristic polynomials at an SLP has the same constant term 
\begin{equation}
 c_0=-a^2 \rh \left(2a^2 \sqrt{a^2+m^2}+ m (3m-\rh) \sqrt{a^2+m^2} - (a^2+m^2)(3m-\rh) \right).
\end{equation}
One can show that $c_0=0$ is impossible if $m > 0$.

\subsection{Important quantities and relations}\label{sec:important_quantities_cont}
The following section complements Section \ref{sec:important_quantities}. Here, we derive all relations of $\ph^\X$ \eqref{eq:phAphB} and $m^\X$ \eqref{eq:mAmB} needed in Sections \ref{sec:admissibility_SLP_m} to \ref{sec:bifs_norec} and in the proofs of the theorems there.

Using \eqref{eq:mA_mB_2}, \eqref{eq:mA_mB}, \eqref{eq:mA_positiv}, \eqref{eq:mB_positiv}, \eqref{phA<phB} and \eqref{phABt<phA}, we derive all possible inequalities between $\mA$ and $\mB$:
\begin{subequations}\label{mAB_relations}
\begin{align}
		0<\mA<\mB &\;\iff\; \phB \le \ph, \label{0<mA<mB}\\
		0<-\mA <-\mB &\;\iff\; \a_2 > \be_2 \text{ and } \ph<\phABt, \label{0<-mA<-mB} \\
		0<-\mB <-\mA &\;\iff\; 
				\begin{cases} \a_2 \le \be_2 \text{ and } \ph<\phA, \text{ or } \\
											\a_2 > \be_2 \text{ and } \phABt<\ph<\phA,
				\end{cases} 	 \label{0<-mB<-mA} \\				
		0 < -\mB < \mA &\;\iff\; \phA \le \ph<\phAB, \label{0<-mB<mA}\\
		0 < \mA < -\mB &\;\iff\; \phAB<\ph<\phB, \label{0<mA<-mB}
\end{align}
where
\begin{equation}\label{eq:infeasible_m}
			0<\mB\leq \mA,\; 0 < -\mA \leq \mB, \text{ and } 0 < \mB \leq -\mA \;\text{are infeasible}.
\end{equation}
\end{subequations}
Using \eqref{eq:m_admiss}, \eqref{mAB_PhAB}, and \eqref{phA<phB}-\eqref{eq:ph_rel} we obtain the following inequalities for $\mast$:
\begin{subequations}\label{eq:mast_mFnull_relation}
\begin{align}
	  0<\mast<-\mFnull & \;\iff\; \phMo(\rh=0)<\ph<\phFnull, \label{eq:mast<-mFnull}\\
	  0<-\mFnull<\mast & \;\iff\; \ph<\phMo(\rh=0) , \label{eq:mast>-mFnull} \\
	  0<\mast<\mFnull  & \;\iff\; \phFnull<\ph<\phMf(\rh=0) , \label{eq:mast<mFnull}\\
	  0<\mFnull<\mast  & \;\iff\; \phMf(\rh=0)<\ph , \label{eq:mast>mFnull} \\
	  0<\mast<-\mA & \;\iff\; \phMo(\rh=0)<\ph<\phA, \label{eq:mast<-mA}\\
	  0<\mast<-\mB & \;\iff\; \phAB<\ph<\phB, \label{eq:mast<-mB}\\
	  0<\mast<\mA  & \;\iff\; \phA<\ph<\phAB, \label{eq:mast<mA}\\
	  0<\mast<\mB  & \;\iff\; \phB<\ph<\phMf(\rh=0), \label{eq:mast<mB}\\
	  0<-\mA<\mast & \;\iff\; \ph<\phMf(\rh=0), \\
	  0<-\mB<\mast & \;\iff\; \ph<\phAB, \\
	  0<\mA<\mast  & \;\iff\; \phAB<\ph, \\
	  0<\mB<\mast  & \;\iff\; \phMf(\rh=0)<\ph \label{eq:mB<mast}.
\end{align}
\end{subequations}
From  \eqref{phA<phB}, \eqref{phABt<phA}, \eqref{mAB_relations}, and \eqref{eq:mast<-mA} -- \eqref{eq:mB<mast} we infer
\begin{equation}\label{mmaxinf<mmast}
	\min\{|\mA|,|\mB|\} \le \mast.
\end{equation}

Next, we derive the relations between $\mFnull$ and $\mA$ or $\mB$ needed in the proof of Theorem \ref{Mainresult_no_reco}. As their derivation is lengthy, the reader may wish to skip the proof and go immediately to the results given by \eqref{eq:crit_m_rel_r0} and \eqref{mmaxinf<mFnull}.

Our approach to derive the possible relations between $\mFnull$ and $\mA$ or $\mB$ is as follows: First, we derive all relevant relations of $\ph^\X$ \eqref{eq:phAphB} for arbitrary recombination $\rh$. We use these relations to determine the required relations between $\mA$, $\mB$, $\mMo$ and $\mMf$ for arbitrary $\rh$. By setting $\rh=0$ in the results obtained and by the equivalence given in \eqref{eq:mFnull_mMf_mMo}, the possible relations between $\mFnull$ and $\mA$ or $\mB$ follow immediately.

%Relations of ph
By definition, the values $\phA$, $\phB$, $\phFnull$, $\phABt$, $\phAB$, $\phAFnull$, and $\phBFnull$ \eqref{eq:phAphB} are independent of the recombination rate $\rh$. Their relations under \eqref{eq:parameter} are given in \eqref{phA<phB}, \eqref{phABt<phA}, and \eqref{phFnull_ph_A_B_Fnull}. 

The values $\phMo$, $\phMot$, $\phMf$, and $\phMft$ \eqref{eq:phAphB} depend on $\rh$, and we analyze this dependence in the following.
The conditions which determine the admissibility of $\ph^{\M_\textsf{i}}$ and $\tilde{\ph}^{\M_\textsf{i}}$ ($i=1,4$) are:
\begin{subequations}\label{eq:phMo_phMf}
\begin{align}
	0 < \phMo < 1 & \;\iff\; 
				 0\leq \rh <-\be_2 \text{ or } \rho>-\a_2-\be_2, \\
	0 < \phMot < 1 & \;\iff\; 
				0\leq \rh <-\a_2 \text{ or } \rho>-\a_2-\be_2, \\
	0 < \phMf < 1 & \;\iff\; 
				0\leq \rh <\a_1 \text{ or } \rho>\a_1+\be_1, \\
	0 < \phMft < 1 & \;\iff\; 
				0\leq \rh <\be_1 \text{ or } \rho>\a_2+\be_2,
\end{align}
\end{subequations}
with the relations
\begin{subequations}\label{eq:phM_interval_1}
\begin{align}
	0\leq \rh <-\be_2 	& \;\; \Longrightarrow \;\; 0 < \phMo < \phA, \label{eq:phMo_phA}\\
	0\leq \rh < \a_1 	& \;\; \Longrightarrow \;\; \phB< \phMf < 1. \label{eq:phB_phMf}
\end{align}
\end{subequations}
To determine further relations of $\phMo$, $\phMot$, $\phMf$, and $\phMft$, we define the following critical recombination rates:
\begin{subequations}\label{eq:ME_critical_rho}
\begin{align}
      \rhMo 	& =  \frac{\a_2\be_1(\a_2+\be_1)-\a_1\be_2(\a_1+\be_2)}{2\th} \nonumber \\
		& \quad +\frac{\sqrt{(\a_2\be_1(\a_2+\be_1)-\a_1\be_2(\a_1+\be_2))^2-4 \th^2 (\a_2\be_1+\a_1\be_2)}}{2\th}, \\
      \rhMot 	& = \frac{-\th(\a_1+\a_2)+\a_2\be_1^2-\a_1\be_2^2}{2\th} \nonumber \\
		& \quad +\frac{\sqrt{(-\th(\a_1+\a_2)+\a_2\be_1^2-\a_1\be_2^2)^2-4\th^2(\a_2\be_1+\a_1(\a_2+\be_2))}}{2\th}, \\
      \rhMf	& = \frac{\a_1\be_2(\a_1+\be_2)-\a_2\be_1(\a_2+\be_1)}{2\th} \nonumber \\
		& \quad + \frac{\sqrt{(\a_1\be_2(\a_1+\be_2)-\a_2\be_1(\a_2+\be_1))^2-4\th^2(\a_2\be_1+\a_1\be_2)}}{2\th}, \\
      \rhMft	& = \frac{\th(\be_1+\be_2)+\a_1^2\be_2-\a_2^2\be_1}{2\th} \nonumber \\
		& \quad + \frac{\sqrt{(-\th(\be_1+\be_2)+\a_2^2\be_1-\a_1^2\be_2)^2-4\th^2(\a_2\be_1+\be_2(\a_1+\be_1))}}{2\th}.
\end{align}
\end{subequations}
Next, we determine the admissibility of $\rh^\X$ and $\tilde{\rh}^X$ defined in \eqref{eq:ME_critical_rho}. Therefore, we partition the selection parameters satisfying \eqref{eq:parameter} and $\be_2>\a_2$ according to
\begin{subequations}
\begin{equation}\label{eq:be2_a2_condi1}
       \be_2 > \a_2  +\frac{\a_1\be_2(\be_2-\a_1)}{\be_1(\a_1+\be_1-\be_2)} > \a_2
\end{equation}
and
\begin{equation}\label{eq:be2_a2_condi2}
	\a_2 +\frac{\a_1\be_2(\be_2-\a_1)}{\be_1(\a_1+\be_1-\be_2)}>\be_2 > \a_2.
\end{equation}
\end{subequations}
Analogously, the selection parameters satisfying \eqref{eq:parameter} and $\be_1>\a_1$ can be partitioned according to
\begin{subequations}
\begin{equation}\label{eq:be1_a1_condi1}
      \be_1 > \a_1+\frac{\a_1\a_2(\a_1-\a_2)}{\be_2(\a_1-\a_2-\be_2)}>\a_1
\end{equation}
and
\begin{equation}\label{eq:be1_a1_condi2}
      \a_1+\frac{\a_1\a_2(\a_1-\a_2)}{\be_2(\a_1-\a_2-\be_2)}>\be_1>\a_1.
\end{equation}
\end{subequations}
Using these partitions, we obtain that $\rh^\X$ and $\tilde{\rh}^X$ satisfy the following relations (as can be checked with \textit{Mathematica}):
\begin{subequations}\label{eq:r_crit_order}
\begin{align}
       \rhMot< \rhMo < -\a_2 <-\be_2 &\;\iff\; \be_2 < \a_2, \label{rhMo<-a2} \\
      \rhMot < \rhMo = -\a_2 = -\be_2 &\;\iff\; \be_2 = \a_2,\\
      -\be_2<\rhMot< -\a_2<\rhMo<-\a_2-\be_2 &\;\iff\; \eqref{eq:be2_a2_condi1} \text{ holds },\\
      \rhMot<-\be_2< -\a_2<\rhMo<-\a_2-\be_2  &\;\iff\;
      \eqref{eq:be2_a2_condi2} \text{ holds },\\
      \a_1<\rhMft<\be_1<\rhMf<\a_1+\be_1  &\;\iff\; \eqref{eq:be1_a1_condi1} \text{ holds }, \\
      \rhMft<\a_1<\be_1<\rhMf<\a_1+\be_1 &\;\iff\;
      \eqref{eq:be1_a1_condi2}  \text{ holds }.
\end{align}
\end{subequations}

As $\rh \geq 0$ and \eqref{eq:parameter} hold, we obtain that
\begin{subequations}\label{eq:phM_equality}
\begin{align}
      \phMo = \phMot	       	& \;\iff\; \rh = -\a_2-\be_2 \text{ or } \rh = \rhMo, \\
      \phABt = \phMo	\;\iff\;  \phABt = \phMot &\;\iff\;  \rh = \rhMo, \\
      \phA =\phMot 		& \;\iff\; \rh = \rhMot,
\end{align}
\end{subequations}
and
\begin{subequations}\label{eq:phM_equality_2}
\begin{align}
      \phMf = \phMft	       	& \;\iff\; \rh = \a_1+\be_1 \text{ or } \rh = \rhMf, \\
      \phB =\phMft 		& \;\iff\; \rh = \rhMft,
      \end{align}
where
      \begin{equation}
      \phA =\phMo \text{ and } \phB =\phMf \; \text{ are infeasible}.
\end{equation}
\end{subequations}
We derive the following relations additinal to \eqref{eq:phMo_phA} and \eqref{eq:phB_phMf}, using \eqref{eq:r_crit_order} and \eqref{eq:phM_equality}:
\begin{subequations}\label{eq:phMot_phMft}
\begin{align}
      0<\phMot<\phA & \;\iff\; \rhMot <\rh<-\a_2,\\
      \phA<\phMot<1 & \;\;\Longleftarrow\; 0\leq \rh< \rhMot,
\end{align}
\end{subequations}
and
\begin{subequations}\label{eq:phMot_phMft_2}
\begin{align}
      \phB<\phMft<1 & \;\iff\; \rhMft <\rh<\be_1, \\
      0<\phMft<\phB & \;\;\Longleftarrow\; 0\leq \rh <\rhMft,
\end{align}
\end{subequations}
and by recalling \eqref{phABt<phA}:
\begin{subequations} \label{eq:phM_interval_2}
\begin{align}
	\phABt < \phMo <\phA 	& \; \iff \; \be_2 < \a_2 \;\text{and}\; 0\leq \rh < \rhMo ,\\
	0 < \phMo < \phABt 	& \; \iff \; \be_2 < \a_2 \;\text{and}\; \rhMo < \rh < -\a_2.
\end{align}
\end{subequations}
Now we derived all relations between $\ph^X$ necessary to deduce the relevant relations between $\mMo$ ($\mMf$) and $\mA$, $\mB$.

First, we note that under \eqref{eq:parameter}, $\mMo>0$ if $\ph<\phA$ and $\mMf>0$ if $\phB<\ph$. 

In the following, we derive all possible relations between $\mA$, $\mB$, and $\mMo$ where we assume that $\ph<\phA$ (otherwise $\M_1$ is unstable, cf.\ Proposition \ref{prop_stab_ME_mphi}, and $\mMo$ is not of particular interest). By \eqref{eq:mA_positiv}, \eqref{eq:mB_positiv}, \eqref{phA<phB}, \eqref{phABt<phA}, \eqref{eq:phMo_phA}, \eqref{eq:phM_equality}, \eqref{eq:phMot_phMft}, and \eqref{eq:phM_interval_2}, we obtain that
\begin{subequations} \label{eq:mMo>mAmB}
\begin{align}
& 0<-\mB <-\mA <\mMo \;\iff\;   \nonumber\\
			    & \quad \quad \quad \begin{cases}
                             \be_2<\a_2 \text{ and } \rh < \rhMo \;\text{ and }\; \phABt < \ph < \phMo, \text{ or}\\
			     \be_2\geq \a_2 \text{ and } \rh < -\be_2 \;\text{ and }\; \ph < \phMo,
                            \end{cases} 
			    \label{eq:-mB<-mA<mMo}\\
& 0<-\mA <-\mB <\mMo \;\iff\;   \nonumber \\
			    & \quad \quad \quad \be_2<\a_2 \;\text{ and }\; 
			     \begin{cases}
			     \rh <  \rhMo \;\text{ and }\; \ph<\phABt, \text{or} \\
			     \rhMo < \rh <-\a_2 \;\text{ and }\; \ph<\phMot,
			    \end{cases}  
			    \label{eq:-mA<-mB<mMo} \\
& 0<-\mA =-\mB <\mMo \;\iff\;  \be_2<\a_2 \;\text{ and }\; 
			      \rh <  \rhMo \;\text{ and }\; \ph=\phABt,
			    \label{eq:-mA=-mB<mMo}
\end{align}
\end{subequations}
and
\begin{subequations} \label{eq:mMo<mAmB}
\begin{align}
& 0<-\mB < \mMo <-\mA  \;\iff\;	\nonumber \\ 
			  & \quad \quad \quad \begin{cases}
                         \be_2 \leq \a_2 \text{ and }  \rh < \rhMot \;\text{ and }\; \phMo <\ph <\phA, \text{or} \\
			 \be_2 \leq \a_2 \text{ and }   \rhMot<\rh<\rhMo \;\text{ and }\; \phMo <\ph <\phMot, \text{or} \\
			 \eqref{eq:be2_a2_condi1} \text{ and } \rh <-\be_2 \;\text{ and }\; \phMo<\ph<\phA, \text{or} \\
			 \eqref{eq:be2_a2_condi1} \text{ and } -\be_2< \rh <\phMot \;\text{ and }\; \ph<\phA, \text{or} \\
			 \eqref{eq:be2_a2_condi1} \text{ and } \rhMot< \rh <-\a_2 \;\text{ and }\; \ph<\phMot, \text{or} \\
			 \eqref{eq:be2_a2_condi2} \text{ and } \rh<\rhMot \;\text{ and }\; \phMo<\ph<\phA, \text{or} \\
			 \eqref{eq:be2_a2_condi2} \text{ and } \rhMot<\rh<-\be_2 \;\text{ and }\; \phMo<\ph<\phMot, \text{or} \\
			 \eqref{eq:be2_a2_condi2} \text{ and } -\be_2<\rh<-\a_2 \;\text{ and }\; \ph<\phMot,
			  \end{cases} 
			  \label{eq:-mB<mMo<-mA}\\
& 0< -\mA < \mMo < -\mB \;\iff\;  \nonumber \\
			    & \quad \quad \quad \be_2<\a_2 \;\text{ and }\;
			      \begin{cases}
			      \rhMo < \rh < -\a_2 \;\text{ and }\; \phMot < \ph < \phMo , \text{or} \\
			      -\a_2 < \rh < -\be_2 \;\text{ and }\; \ph < \phMo ,
                              \end{cases} 
			    \label{eq:-mA<mMo<-mB} \\
& 0< \mMo < -\mB <-\mA \;\iff\; \nonumber \\
			    & \quad \quad \quad \begin{cases}
			     \be_2<\a_2 \text{ and } \rhMot <\rh<\rhMo \;\text{ and }\; \phMot<\ph<\phA, \text{or} \\
                             \be_2<\a_2 \text{ and }  \rhMo <\rh \;\text{ and }\; \phABt<\ph<\phA, \text{or} \\
			     \be_2\geq \a_2 \text{ and } \rhMot< \rh <-\be_2 \;\text{ and }\; \phMot<\ph<\phA, \text{or} \\
			     \be_2\geq \a_2 \text{ and } -\be_2<\rh \;\text{ and }\; \ph<\phA, \text{or}
                            \end{cases}
			      \label{eq:<mMo<-mB<-mA}\\
& 0<\mMo <-\mA <-\mB \;\iff\; \nonumber \\
			    & \quad \quad \quad \be_2<\a_2 \;\text{ and }\; 
			    \begin{cases}
			    \rhMo<\rh<-\be_2 \;\text{ and }\; \phMo<\ph<\phABt, \text{or} \\
			    -\be_2<\rh \;\text{ and }\; \ph<\phABt.
                          \end{cases}
			    \label{eq:<mMo<-mA<-mB}
\end{align}
\end{subequations}
From \eqref{eq:infeasible_m} it follows that
\begin{subequations} \label{eq:mMo_mA_mB_impossible}
\begin{align}
& 0<-\mA <-\mB <\mMo \text{ occurs never if }\; \be_2\geq \a_2, \\
& 0< -\mA < \mMo < -\mB \text{ occurs never if } \; \be_2\geq \a_2,\\
& 0<\mMo <-\mA <-\mB \text{ occurs never if } \; \be_2\geq \a_2.
\end{align}
\end{subequations}

To derive all possible relations between $\mA$, $\mB$, and $\mMf$ we assume $\ph>\phB$ (cf.\ Proposition \ref{prop_stab_ME_mphi}). By \eqref{eq:mA_positiv}, \eqref{eq:mB_positiv}, \eqref{phA<phB}, \eqref{phABt<phA}, \eqref{eq:phB_phMf}, \eqref{eq:phM_equality_2}, and \eqref{eq:phMot_phMft_2}, we obtain that
\begin{subequations} \label{eq:m_relations_mMf}
\begin{align}
& 0<\mA<\mB <\mMf  \;\iff\; \rh<\a_1 \;\text{ and }\; \phMf <\ph, \label{eq:0<mA<mB<mMf}\\
& 0<\mA<\mMf <\mB  \;\iff\; \nonumber \\
		      &\quad \quad \quad \begin{cases}
			\eqref{eq:be1_a1_condi1} \text{ and } \rh<\a_1 \;\text{ and }\; \phB<\ph<\phMf, \text{or}\\
			\eqref{eq:be1_a1_condi1} \text{ and } \a_1<\rh<\rhMft \;\text{ and }\; \phB<\ph, \text{or}\\
			\eqref{eq:be1_a1_condi1} \text{ and } \rhMft<\rh<\be_1 \;\text{ and }\; \phMft<\ph, \text{or}\\
			\eqref{eq:be1_a1_condi2} \text{ and } \rh<\rhMft \;\text{ and }\; \phB<\ph<\phMf, \text{or}\\
			\eqref{eq:be1_a1_condi2} \text{ and } \rhMft<\rh<\a_1 \;\text{ and }\; \phMft<\ph<\phMf, \text{or}\\
			\eqref{eq:be1_a1_condi2} \text{ and } \a_1<\rh<\be_1 \;\text{ and }\; \phMft<\ph,
                      \end{cases} \\
& 0<\mMf<\mA <\mB  \;\iff\; \nonumber\\
		      &\quad \quad \quad \begin{cases}
			\rhMft<\rh<\be_1 \;\text{ and }\; \phB<\ph<\phMft, \text{or} \\
			\be_1<\rh \;\text{ and }\; \phB<\ph.
                      \end{cases}
\end{align}
\end{subequations}
From \eqref{eq:infeasible_m} it follows that 
\begin{subequations}\label{eq:m_relations_mMf_impossible}
\begin{align}
    & 0<\mB<\mA <\mMf  \;\text{ occurs never if } \; \a_1<\be_1, \\
    & 0<\mB<\mMf <\mA  \;\text{ occurs never if } \; \a_1<\be_1, \\
    & 0<\mMf<\mB <\mA    \;\text{ occurs never if } \; \a_1<\be_1.
\end{align}
\end{subequations}

If $\ph<\phA$ or $\ph>\phB$, the relations involving $\mFnull$, $\mA$ and $\mB$ follow immediately by \eqref{eq:mFnull_mMf_mMo}, i.e., by setting $\rh=0$ in the relevant formulas in \eqref{eq:mMo>mAmB}-\eqref{eq:m_relations_mMf}. The remaining cases where $\phA<\ph<\phB$ can be calculated easily using $\tht=\a_1\a_2-\be_1\be_2$, \eqref{mAB_PhAB}, \eqref{eq:ph_rel}, and \eqref{mAB_relations}. Then, all admissible relations are: 
\begin{subequations}\label{eq:crit_m_rel_r0}
\begin{align}
	  0<-\mA<-\mB<-\mFnull &\;\iff\; \a_2>\be_2 \;\text{ and }\; \ph<\phABt, \\
	  0<-\mB<-\mA<-\mFnull &\;\iff\; \begin{cases}
					    \a_2>\be_2 \;\text{ and }\; \phABt<\ph<\phMo, \text{or} \\
					    \a_2\leq \be_2 \;\text{ and }\; \ph<\phMo,
	                                  \end{cases} \\
	  0<\mA<-\mB<-\mFnull &\;\iff\; \tht>0 \;\text{ and }\; \phAB<\ph<\phFnull, \\
	  0<-\mB<\mA<-\mFnull &\;\iff\; \begin{cases}
	                                  \tht<0 \;\text{ and }\; \phAFnull<\ph<\phFnull, \text{or} \\
					   \tht\geq 0 \;\text{ and }\; \phAFnull<\ph<\phAB,
	                                 \end{cases}\\
	  0<\mA<-\mB<\mFnull 	&\;\iff\; \begin{cases}
					 \tht<0 \;\text{ and }\; \phAB<\ph<\phBFnull, \text{or} \\
	                       	         \tht\geq 0 \;\text{ and }\;\phFnull<\ph<\phBFnull,				  
	                       	         \end{cases}\\
	  0<-\mB<\mA<\mFnull  &\;\iff\; \tht<0 \;\text{ and }\; \phFnull<\ph<\phAB, \\
	  0<\mA<\mB<\mFnull   &\;\iff\; \phMf(\rh=0)<\ph, \\
	  0<-\mB<-\mFnull<-\mA &\;\iff\; \phMo(\rh=0)<\ph<\phA, \\
	  0<-\mB<-\mFnull<\mA &\;\iff\; \phA<\ph<\phAFnull, \\
	  0<\mA<\mFnull<-\mB  &\;\iff\; \phBFnull<\ph<\phB, \\
	  0<\mA<\mFnull<\mB   &\;\iff\; \phB<\ph<\phMf(\rh=0).
\end{align}
\end{subequations}
Because other strict inequalities between $\mA$, $\mB$, and $\mFnull$ do not occur, we infer
\begin{equation} \label{mmaxinf<mFnull}
	\min\{|\mA|,|\mB|\} \le |\mFnull|.
\end{equation}
\end{appendix}

\vspace{1cm}

\parindent=0pt
\section*{References}

{\everypar={\hangindent=.5cm \hangafter=1}

Bank, C., B\"urger, R., Hermisson, J. 2012. The limits to parapatric speciation: Dobzhansky-Muller incompatibilities in a continent-island model. Genetics 191, 845-865.

Barton, N.H. 1983. Multilocus clines. Evolution 37, 454-471.

Barton, N.H. 2010. What role does natural selection play in speciation? Phil.\ Trans.\ R.\ Soc.\ B.\ 365, 1825-1840.

Barton, N.H., Bengtsson, B.O. 1986. The barrier to genetic exchange between hybridising populations. Heredity 56, 357-376.

Bengtsson, B.O. 1985. The flow of genes through a genetic barrier. In: Evolution Essays in honour of John Maynard
Smith, Greenwood, J.J., Harvey, P.H., and Slatkin, M. (eds), pp. 31-42. Cambridge: University Press.

Blanquart, F., Gandon, S., Nuismer, S.L. 2012. The effects of migration and drift on local adaptation to a heterogeneous environment. J.\ Evol.\ Biol.\ 25, 1351-1363.

B\"urger, R. 2000. The Mathematical Theory of Selection, Recombination, and Mutation. Wiley, Chichester.

B\"urger, R. 2009a. Multilocus selection in subdivided populations I. Convergence properties for weak or strong migration. J.\ Math.\ Biol.\ 58, 939-978.

B\"urger, R. 2009b. Multilocus selection in subdivided populations II. Maintenance of polymorphism under weak or strong migration. J.\ Math.\ Biol.\ 58, 979-997.

B\"urger, R. 2009c. Polymorphism in the two-locus Levene model with nonepistatic directional selection. Theor.\ Popul.\ Biol.\ 76,
214-228.

B\"urger, R. 2010. Evolution and polymorphism in the multilocus Levene model with no or weak epistasis. Theor.\ Popul.\ Biol.\ 78,
123-138.

B\"urger, R., Akerman, A. 2011. The effects of linkage and gene flow on local adaptation: A two-locus continent-island model. Theor.\ Popul.\ Biol.\ 80, 272-288.

Charlesworth, B., Nordborg, M., Charlesworth, D. 1997. The effects of local selection, balanced polymorphism and background selection on equilibrium patterns of genetic diversity in subdivided populations. Genetical Research 70, 155-174.

Charlesworth, B., Charlesworth, D., 2010. Elements of Evolutionary Genetics. Roberts \& Co., Greenwood Village, Colorado.

Charlesworth, D., Charlesworth, B. 1979. Selection on recombination in clines. Genetics 91, 581-589.

Chasnov, J.R. 2012. Equilibrium properties of a multi-locus, haploid-selection, symmetric-viability model. Theor.\ Popul.\ Biol.\ 
81, 119-130.

Christiansen, F.B., Feldman, M. 1975. Subdivided populations: A review of the one- and two-locus deterministic theory. Theor.\ Popul.\ Biol.\ 7, 13-38.

Conley, C. 1978. Isolated invariant sets and the Morse index. NSF CBMS Lecture Notes 38. Providence, RI.: Amer. Math. Soc.

Deakin, M.A.B. 1966. Sufficient conditions for genetic polymorphism. Amer.\ Natur.\ 100, 690-692.

Ewens, W.J. 1969. Mean fitness increases when fitnesses are additive. Nature 221, 1076.

Eyland, E.A. 1971. Moran's island migration model. Genetics 69, 399-403.

Feder, J.L., Gejji, R., Yeaman, S., Nosil, P. 2012. Establishment of new mutations under divergence and genome hitchhiking. Phil.\ Trans.\ R.\ Soc.\ B.\ 367, 461-474.

Fusco, D., Uyenoyama, M. 2011. Effects of polymorphism for locally adapted genes on rates of neutral introgression in structured populations. Theor.\ Popul.\ Biol.\ 80, 121-131.

Hadeler, K.P., Glas, D. 1983. Quasimonotone systems and convergence to equilibrium in a population genetic model. J.\ Math.\ Anal.\ Appl.\ 95, 297-303.

Haldane, J.B.S. 1930. A mathematical theory of natural and artificial selection. Part VI. Isolation. Proc.\ Camb.\ Phil.\ Soc.\ 26, 220-230.

Karlin, S. 1982. Classification of selection-migration structures and conditions for a protected polymorphism. Evol.\ Biol.\ 14, 61-204.

Karlin, S., McGregor, J. 1972. Polymorphism for genetic and ecological systems with weak coupling. Theor.\ Popul.\ Biol.\ 3, 210-238.

Kawecki, T.J., Ebert, D. 2004. Conceptual issues in local adaptation. Ecology letters 7, 1225-1241.

Kimura, M. 1965. Attainment of quasi linkage equilibrium when gene frequencies are changing by natural selection. Genetics 52, 875-890.

Kobayashi, Y., Hammerstein, P., Telschow, A. 2008. The neutral effective migration rate in a mainland-island context. Theor.\ Popul.\ Biol.\ 74, 84-92.

Kobayashi, Y., Telschow, A. 2011. The concept of effective recombination rate and its application in speciation theory. Evolution 65, 617-628.

Lenormand, T. 2002. Gene flow and the limits to natural selection. Trends Ecol.\ Evol.\ 17, 183-189.

Lenormand, T., Otto, S.P. 2000. The evolution of recombination in a heterogeneous environment. Genetics 156, 423-438.

Leviyang, S., Hamilton, M.B. 2011. Properties of Weir and Cockerham's $\Fst$ estimators and associated bootstrap confidence intervals. Theor.\ Popul.\ Biol.\ 79, 39-52.

Li, W.-H., Nei, M. 1974. Stable linkage disequilibrium without epistasis in subdivided populations. Theor.\ Popul.\ Biol.\ 6, 173-183.

Nagylkaki, T. 1998. Fixation indices in subdivided populations. Genetics 148, 1325-1332.

Nagylaki, T. 2009. Evolution under the multilocus Levene model. Theor.\ Popul.\ Biol.\ 76, 197-213. 

Nagylaki, T. 2011. The influence of partial panmixia on neutral models of spatial variation. Theor.\ Popul.\ Biol.\ 79, 19-38.

Nagylaki, T. 2012. Clines with partial panmixia in an unbounded unidimensional habitat. Theor.\ Popul.\ Biol.\ 82, 22-28.

Nagylaki, T., Hofbauer, J., Brunovsk\'y, P. 1999. Convergence of multilocus systems under weak epistasis or weak selection. J.\ Math.\ Biol.\ 38, 103-133.

Nagylaki, T., Lou, Y. 2001. Patterns of multiallelic polymorphism maintained by migration and selection. Theor.\ Popul.\ Biol.\ 59, 297-313.

Nagylaki, T., Lou, Y. 2007. Evolution under multiallelic migration-selection models. Theor.\ Popul.\ Biol.\ 72, 21-40.

Nagylaki, T., Lou, Y. 2008. The dynamics of migration-selection models. In: Friedman, A.\ (ed) Tutorials in Mathematical Biosciences IV. Lect.\ Notes Math.\ 1922, pp.\ 119 - 172. Springer,  Berlin Heidelberg New York.

Petry, D. 1983. The effect on neutral gene flow of selection at a linked locus. Theor.\ Popul.\ Biol.\ 23, 300-313.

Pylkov,K.V., Zhivotovsky, L.A., Feldman, M.W. 1998. Migration versus mutation in the evolution of recombination under multilocus selection. Genetical Research 71, 247-256.

Shahshahani, S. 1979. A new mathematical framework for the study of linkage and selection. Memoirs Amer. Math. Soc. 211. Providence, R.I.: Amer. Math. Soc.

Slatkin, M. 1975. Gene flow and selection in two-locus systems. Genetics 81, 787-802.

Spichtig, M., Kawecki, T.J. 2004. The maintenance (or not) of polygenic variation by soft selection in heterogeneous environments. Amer.\ Natur.\ 164, 70-84.

Turelli, M., Barton, N.H. 1990. Dynamics of polygenic characters under selection. Theor.\ Pop.\ Biol.\ 38, 1-57.

Weir, B., Cockerham, C. 1984. Estimating $F$ statistics for the analysis of population structure. Evolution 38, 1358-1370.

Wu, C.-I., Ting, C.-T. 2004. Genes and speciation. Nat.\ Rev.\ Genet.\ 5, 114-122.

Yeaman, S., Otto, S.P. 2011. Establishment and maintenance of adaptive genetic divergence under migration, selection, and drift. Evolution 65, 2123-2129.

Yeaman, S., Whitlock, M. 2011. The genetic architecture of adaptation under migration-selection balance. Evolution 65, 1897-1911.

}

\end{document}